\newtheorem{thm}{Theorem}[section]
\newtheorem{pro}[thm]{Proposition}
\newtheorem{ex}[thm]{Example}
\newtheorem{rmk}[thm]{Remark}
\newtheorem{defi}[thm]{Definition}
\newcommand {\emptycomment}[1]{}
\newcommand{\lon }{\,\rightarrow\,}
\newcommand{\be }{\begin{equation}}
\newcommand{\ee }{\end{equation}}
\newcommand{\g}{\mathfrak g}
\newcommand{\h}{\mathfrak h}
\newcommand{\huaB}{\mathcal{B}}
\newcommand{\huaC}{{\mathcal{C}}}
\newcommand{\huaD}{\mathcal{D}}
\newcommand{\frks}{\mathfrak s}
\newcommand{\frkt}{\mathfrak t}
\newcommand{\dM}{\mathrm{d}}
\newcommand{\Id}{{\rm{Id}}}
\newcommand{\br}[1]{   [ \cdot,    \cdot  ]   }
\newcommand{\Der}{\mathrm{Der}}
\newcommand{\Ad}{\mathrm{Ad}}
\newcommand{\Aut}{\mathrm{Aut}}
\newcommand{\gl}{\mathfrak {gl}}
\newcommand{\ad}{\mathrm{ad}}
\begin{document}

\title[Rota-Baxter operators on crossed modules of Lie groups]{Rota-Baxter operators on crossed modules of Lie groups and categorical solutions of the Yang-Baxter equation}

\author{Jun Jiang}
\address{Department of Mathematics, Jilin University, Changchun 130012, Jilin, China}
\email{junjiang@jlu.edu.cn}


\begin{abstract}
In this paper, we construct a categorical solution $(\huaC, R)$ of the Yang-Baxter equation, i.e. $\huaC$ is a small category and $R: \huaC\times\huaC\lon\huaC\times\huaC$ is an invertible functor satisfying
$$
(R\times\Id_\huaC)(\Id_\huaC\times R)(R\times\Id_\huaC)=(\Id_\huaC\times R)(R\times\Id_\huaC)(\Id_\huaC\times R),
 $$
where $\huaC\times\huaC$ is the product category. First, the notion of Rota-Baxter operators on crossed modules of Lie groups is defined and its various properties are established. Then, we use Rota-Baxter operators on crossed modules of Lie groups to construct categorical solutions of the Yang-Baxter equation. We also study the Rota-Baxter operators on crossed modules of Lie algebras which are infinitesimals of Rota-Baxter operators on crossed modules of Lie groups, they can give connections on manifolds. Finally, we study the integration of Rota-Baxter operators on crossed modules of Lie algebras and the differentials of Rota-Baxter operators on crossed modules of Lie groups.
\end{abstract}

\renewcommand{\thefootnote}{}
\footnotetext{2020 Mathematics Subject Classification.17B38, 16T25, 18G45}

\keywords{Yang-Baxter equation, Rota-Baxter operator, crossed module, integration, differentiation}

\maketitle



\allowdisplaybreaks


\section{Introduction}

\subsection{The Yang-Baxter equation}
The Yang-Baxter equation was firstly introduced in theoretical physics by Yang\cite{Y} in 1967. Shortly afterwards, this equation appeared in statistical mechanics in the work of Baxter\cite{BAX}. The Yang-Baxter equation has led to many interesting applications in mathematics, such as quantum groups, tensor categories, knot theory and integrable systems. Recall that a solution of the Yang-Baxter equation on a vector space $V$ is an invertible linear map $R:V\otimes V\lon V\otimes V$ such that
$$
(R\otimes\Id_V)(\Id_V\otimes R)(R\otimes\Id_V)=(\Id_V\otimes R)(R\otimes\Id_V)(\Id_V\otimes R).
$$
The problem of constructing and classifying its solutions is difficult, so Drinfel'd raised the question of finding set-theoretical solutions of the Yang-Baxter equation\cite{Dr}. A set-theoretical solution of the Yang-Baxter equation is a pair $(X, r)$, where X is a set and $r:X\times X\lon X\times X$ is a bijective map satisfying
$$
(r\times\Id_X)(\Id_X\times r)(r\times\Id_X)=(\Id_X\times r)(r\times\Id_X)(\Id_X\times r).
$$
There are many results on set-theoretical solutions of the Yang-Baxter equation by Etingof, Schedler and Soloviev\cite{ESS}, Lu, Yan and Zhu\cite{LYZ}, Gateva-Ivanova and Majid\cite{GM}, Gateva-Ivanova and Cameron\cite{GaC}, Gateva-Ivanova\cite{Ga} and others. In \cite{Ru}, Rump introduced the notion of braces, a generalization of Jacobson radical rings as a tool to construct non-degenerate involutive set-theoretic solutions of the Yang-Baxter equation. Guarnieri and Vendramin generalized braces to skew braces and used them to construct non-degenerate non-involutive set-theoretic solutions of the Yang-Baxter equation\cite{GV}. Recently, Bardakov and Gubarev studied the connection between skew braces and Rota-Baxter operators on Lie groups, they obtained non-degenerate set-theoretic solutions of the Yang-Baxter equation\cite{BG2}.
\subsection{Crossed modules of Lie groups and crossed modules of Lie algebras}
The notion of crossed modules of groups was introduced by Whitehead in order to characterize a second relative homotopy group\cite{Whi, WHi}. Crossed modules of groups and cohomologies of groups are closely related, they can be classified by the third cohomology group of groups. When people replace groups and maps by Lie groups and smooth maps in the definition of crossed modules of groups, they obtain crossed modules of Lie groups. The infinitesimal of crossed modules of Lie groups are crossed modules of Lie algebras, which can be classified by the third cohomology group of Lie algebras. Furthermore, every crossed module of Lie algebras can be integrated into a crossed module of Lie groups. For further details on crossed modules of algebraic structures, see \cite{WF}. When studying the categorification, Baez and his collaborators finded that a strict Lie $2$-group(resp. strict Lie $2$-algebra) is ono-to-one correspondence to a crossed module of Lie groups (resp. crossed module of Lie algebras)\cite{Baez, BaL}, this means that there are small categories associated with crossed modules of Lie groups. These small categories play an important role in this paper. Recently, crossed module of Lie groups have become an object of intense study in the higher gauge theory, $2$-BF theory\cite{BaezSch, Gas, MM}.

\subsection {Rota-Baxter operators}The notion of Rota-Baxter operators on associative algebras was introduced by G. Baxter in his study of fluctuation theory in probability \cite{Bax}. It was further developed by Rota and Cartier from the perspective of combinatorics \cite{Ra1, Ra2, Car}. Recently it has found many applications, including Connes-Kreimer's algebraic approach to renormalization of quantum field theory~\cite{CK}. In the Lie algebra context, the notion of Rota-Baxter operators on Lie algebras was introduced in \cite{Ku,BGN}.
 Let $(\g, [\cdot, \cdot]_\g)$ be a Lie algebra and $\lambda\in\mathbb{R}$ be a fixed element. A linear map $B:\g\lon\g$ is called a Rota-Baxter operator of weight $\lambda$ if
 $$
 [B(x), B(y)]_\g=B([B(x), y]_\g+[x, B(y)]_\g+\lambda[x, y]_\g), \quad \forall x, y\in\g.
 $$
Rota-Baxter operators on Lie algebras are closely related to various classical Yang-Baxter equations. More precisely, if $(\g, [\cdot, \cdot]_\g)$ has a non-degenerate symmetric invariant bilinear form, a Rota-Baxter operator of weight $0$ on $\g$ is naturally the operator form of a skew-symmetric solution of the classical Yang-Baxter equation\cite{STS}. Rota-Baxter operators of weight $1$ are in one-to-one
correspondence with solutions of the modified Yang-Baxter equation, and have close relation with factorizable Lie bialgebras \cite{LS}. In addition, a Lie group with zero Schouten curvature gives rise a Rota-Baxter operator of weight $1$ on a Lie algebra \cite{Kos}.

The notion of Rota-Baxter operators on  Lie groups  was introduced in \cite{GLS}, then this concept was generalized to relative Rota-Baxter operators on Lie groups in \cite{JSZ}.
Let $(G, \cdot)$ be a Lie group. A smooth map $\huaB: G\lon G$ is called Rota-Baxter operator on $G$ if
\begin{equation*}
\huaB(a)\cdot\huaB(b)=\huaB(a\cdot\huaB(a)\cdot b\cdot\huaB(a)^{-1}), \quad\forall a, b\in G.
\end{equation*}
Similarly to Rota-Baxter operators on Lie algebras give solutions of the classical Yang-Baxter equation, Rota-Baxter operators on Lie groups can also be used to construct set-theoretical solutions of the Yang-Baxter equation \cite{BG2}. Rota-Baxter operators on Lie groups also give rise to factorizations of the Lie groups, skew braces, and can be applied to study integrable systems \cite{GLS, BG2, BNMY, CS, NM, STS}.

According to Lie's third theorem, there is a one-to-one correspondence between finite dimensional Lie algebras and connected and simply connected Lie groups. However, this theorem does not always hold for Rota-Baxter operators. In deed, it was shown in \cite{GLS,JSZ} that by differentiating Rota-Baxter operators on Lie groups, people can obtain Rota-Baxter operators on the corresponding Lie algebras. The converse direction, Rota-Baxter operators on Lie algebras can be integrated to local Rota-Baxter operators on Lie groups, global integrability of Rota-Baxter operators on Lie algebras have obstructions\cite{JSZ, JSZ1}.

\subsection{Outline of the paper}
A categorical solution of the Yang-Baxter equation is a pair $(\huaC, R)$, where $\huaC$ is a small category and $R: \huaC\times\huaC\lon\huaC\times\huaC$ is an invertible functor satisfying
$$
(R\times\Id_\huaC)(\Id_\huaC\times R)(R\times\Id_\huaC)=(\Id_\huaC\times R)(R\times\Id_\huaC)(\Id_\huaC\times R).
 $$
Inspired by that Rota-Baxter operators on Lie groups can be used to construct set-theoretic solutions of the Yang-Baxter equation, a purpose of this paper is to consider using Rota-Baxter operators on crossed modules of Lie groups to construct categorical solutions $(\huaC, R)$ of the Yang-Baxter equations, where $\huaC$ is a small category associated with a crossed module of Lie groups.

In section 2, we recall the basic notions and facts about crossed modules of Lie groups and introduce Rota-Baxter operators on crossed module of Lie groups. Rota-Baxter operators on crossed modules of Lie groups are also characterized by graphs(Theorem \ref{graphB}). On the one hand, Rota-Baxter operators on crossed modules of Lie groups descend crossed modules of Lie groups(Proposition \ref{dg}), these descendent crossed modules of Lie groups can induce semi-direct product Lie groups. On the other hand, Rota-Baxter operators on crossed modules of Lie groups induce Rota-Baxter operators on semi-direct product Lie groups(Theorem \ref{semirb}), these Rota-Baxter operators on semi-direct product Lie groups can descend Lie groups. The Lie groups given by the above two approaches are isomorphic(Proposition \ref{isomdi}). We summarize the constructions and relations in the following diagram.
 \begin{equation*}
\xymatrix{
  \boxed{\txt{\text{Rota-Baxter operators on} \\\text{crossed modules of Lie groups}}} \ar[d]_{\text{inducement}} \ar[rrr]^-{\text{descendent}} & & & \boxed{\txt{\text{crossed modules of}\\\text{ Lie groups}}} \ar[d]^{\text{inducement}} \\
  \boxed{\txt{\text{Rota-Baxter operators on}\\\text{ semi-direct product Lie groups}}} \ar[rr]^-{\text{descendent}}  & & \boxed{\text{Lie groups}} \ar[r]^{\text{isomorphism}} &  \boxed{\txt{\text{semi-direct}\\\text{product Lie groups}}}
}
\end{equation*}

In section 3, we introduce the notion of categorical solutions of the Yang-Baxter equation, then categorical solutions of the Yang-Baxter equation are constructed by Rota-Baxter operators on crossed modules of Lie groups(Theorem \ref{categorical}).

In section 4, first, we recall the basic notions and facts about crossed modules of Lie algebras, then we introduce Rota-Baxter operators on crossed modules of Lie algebras. We establish some properties of Rota-Baxter operators on crossed modules of Lie algebras parallel to Rota-Baxter operators on crossed modules of Lie groups. These properties can be summarized by the following diagram.
 \begin{equation*}
\xymatrix{
  \boxed{\txt{\text{Rota-Baxter operators on} \\\text{crossed modules of Lie algebras}}} \ar[d]_{\text{inducement}} \ar[rrr]^-{\text{descendent}} & & & \boxed{\txt{\text{crossed modules of}\\\text{ Lie algebras}}} \ar[d]^{\text{inducement}} \\
  \boxed{\txt{\text{Rota-Baxter operators on}\\\text{ semi-direct product Lie algebras}}} \ar[rr]^-{\text{descendent}}  & & \boxed{\text{Lie algebras}} \ar[r]^{\text{isomorphism}} &  \boxed{\txt{\text{semi-direct }\\\text{product Lie algebras}}}
}
\end{equation*}
Finally, connections on manifolds are constructed by Rota-Baxter operators on crossed modules of Lie algebras(Proposition \ref{2-connection}).

In section 5, we study the integration of Rota-Baxter operators on crossed modules of Lie algebras(Theorem \ref{i1}) and the differentials of Rota-Baxter operators on crossed modules of Lie groups(Proposition \ref{thmcrlg}, Theorem \ref{d1}, Proposition \ref{d2}).

Throughout the paper, all the vector spaces are over $\mathbb{R}$ and finite-dimensional, all manifolds are smooth manifolds.

\section{Rota-Baxter operators on crossed modules of Lie groups}
In this section, first we recall Rota-Baxter operators on Lie groups and crossed modules of Lie groups. Then the notion of Rota-Baxter operators on crossed modules of Lie groups is defined and its various properties are established.
\begin{defi}\rm(\cite{GLS})
Let $(G, \cdot)$ be a Lie group. A smooth map $\huaB: G\lon G$ is called Rota-Baxter operator on $G$ if
\begin{equation*}
\huaB(a)\cdot\huaB(b)=\huaB(a\cdot\huaB(a)\cdot b\cdot\huaB(a)^{-1}), \quad\forall a, b\in G.
\end{equation*}
\end{defi}

\begin{ex}
Let $(G, \cdot)$ be a Lie group. Then the inverse map $\huaB=(\cdot)^{-1}$ is a Rota-Baxter operator on the Lie group $(G, \cdot)$.
\end{ex}

\begin{defi}\rm(\cite{GLS})
Let $\huaB, \huaB'$ be Rota-Baxter operators on Lie groups $G, H$ respectively. A smooth map $\Psi:H\lon G$ is called Rota-Baxter operator homomorphism from $\huaB'$ to $\huaB$ if $\Psi$ is a Lie group homomorphism and
$$\Psi\circ \huaB' =\huaB\circ\Psi.$$
\end{defi}

\begin{defi}\rm(\cite{Whi, WHi})
A crossed module of Lie groups is a quadruple $(H, G, t, \Phi)$, where $(H, \cdot_H)$ and $(G, \cdot_G)$ are Lie groups, $t:H\lon G$ is a Lie group homomorphism and $\Phi:G\lon\Aut(H)$ is an action of $(G, \cdot_G)$ on $(H, \cdot_H)$, such that
\begin{eqnarray}
\label{crmo1}\Phi(t(p))q&=&p\cdot_Hq\cdot_H p^{-1}, \\
\label{crmo2}t(\Phi(a)p)&=&a\cdot_G t(p)\cdot_G a^{-1}, \quad \forall p, q\in H, a\in G.
\end{eqnarray}
\end{defi}

In \cite{NK}, Katherine Norrie studied the actions and semi-direct products of crossed modules of Lie groups, then he obtained the following proposition.

\begin{pro}\rm(\cite{NK, NK1})\label{croslie}
Let $(H, G, t, \Phi)$ be a crossed module of Lie groups. Denote by $(G\ltimes G, \cdot_{G\ltimes})$ and $(H\ltimes H, \cdot_{H\ltimes})$ the semi-direct product Lie groups, where $G\ltimes G=G\times G, H\ltimes H=H\times H$ and
\begin{eqnarray}
\label{defigs}(a, b)\cdot_{G\ltimes}(c, d)&=&(a\cdot_G c, b\cdot_G a\cdot_G d\cdot_G a^{-1}), \quad a, b, c, d\in G,\\
\label{defigs1}(p, q)\cdot_{H\ltimes}(r, s)&=&(p\cdot_H r, q\cdot_H p\cdot_H s\cdot_H p^{-1}), \quad p, q, r, s\in H.
\end{eqnarray}
Then $(H\ltimes H, G\ltimes G, (t, t), \tilde{\Phi})$ is a crossed module of Lie groups, where
$$
(t, t):H\ltimes H\lon G\ltimes G, \quad (t, t)(p, q)=(t(p), t(q)),
$$
and $\tilde{\Phi}:G\ltimes G\lon\Aut(H\ltimes H)$ is defined by
\begin{equation}\label{eqdefrG}
\tilde{\Phi}\Big((a, b)\Big)(p, q)=\Big(\Phi(a)p, \Phi(b\cdot_G a)(q\cdot_H p)\cdot_H\Phi(a)p^{-1}\Big), \quad \forall a, b\in G, p, q\in H.
\end{equation}
\end{pro}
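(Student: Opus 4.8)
The plan is to verify directly that the quadruple $(H\ltimes H, G\ltimes G, (t,t), \tilde\Phi)$ satisfies the three requirements in the definition of a crossed module of Lie groups: (i) $(t,t)$ is a Lie group homomorphism, (ii) $\tilde\Phi$ is an action of $G\ltimes G$ on $H\ltimes H$ by automorphisms, and (iii) the two Peiffer-type identities \eqref{crmo1} and \eqref{crmo2} hold for the new data. Smoothness of all maps is immediate since everything is built from $t$, $\Phi$, group multiplication and inversion in $G$ and $H$, so the content is purely algebraic and reduces to several identity checks that all follow by unwinding the semi-direct product laws \eqref{defigs}, \eqref{defigs1} together with the hypotheses \eqref{crmo1}, \eqref{crmo2} for $(H,G,t,\Phi)$.

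First I would check (i): using \eqref{defigs1} and the fact that $t$ is a homomorphism, $(t,t)\big((p,q)\cdot_{H\ltimes}(r,s)\big)=(t(p\cdot_H r),\,t(q\cdot_H p\cdot_H s\cdot_H p^{-1}))=(t(p)\cdot_G t(r),\,t(q)\cdot_G t(p)\cdot_G t(s)\cdot_G t(p)^{-1})$, which by \eqref{defigs} equals $(t(p),t(q))\cdot_{G\ltimes}(t(r),t(s))$. Next, (ii): I would show each $\tilde\Phi\big((a,b)\big)$ is an automorphism of $(H\ltimes H,\cdot_{H\ltimes})$ and that $\tilde\Phi$ respects the product $\cdot_{G\ltimes}$. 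The automorphism property amounts to expanding $\tilde\Phi\big((a,b)\big)\big((p,q)\cdot_{H\ltimes}(r,s)\big)$ via \eqref{eqdefrG} and \eqref{defigs1}, then using that $\Phi(a),\Phi(b\cdot_G a)$ are automorphisms of $(H,\cdot_H)$ and the crossed-module identity \eqref{crmo1} (in the form $\Phi(t(\Phi(a)p))=\Phi(a)\circ\Phi(t(p))\circ\Phi(a)^{-1}$, equivalently the conjugation rule $r\mapsto \Phi(a)p\cdot_H r\cdot_H(\Phi(a)p)^{-1}$) to collect terms; comparing with $\tilde\Phi\big((a,b)\big)(p,q)\cdot_{H\ltimes}\tilde\Phi\big((a,b)\big)(r,s)$ should give equality. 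The action property $\tilde\Phi\big((a,b)\cdot_{G\ltimes}(c,d)\big)=\tilde\Phi\big((a,b)\big)\circ\tilde\Phi\big((c,d)\big)$ is checked similarly, using $\Phi(xy)=\Phi(x)\Phi(y)$ in the second slot and keeping careful track of the twist $b\cdot_G a$.

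Finally, for (iii): the first Peiffer identity requires $\tilde\Phi\big((t,t)(r,s)\big)(p,q)=(r,s)\cdot_{H\ltimes}(p,q)\cdot_{H\ltimes}(r,s)^{-1}$. The left side, by \eqref{eqdefrG}, is $\big(\Phi(t(r))p,\ \Phi(t(s)\cdot_G t(r))(q\cdot_H p)\cdot_H\Phi(t(r))p^{-1}\big)$; applying \eqref{crmo1} to each $\Phi(t(\cdot))$ turns these into conjugations in $H$, and the right side, computed from \eqref{defigs1} and the inverse in $H\ltimes H$, should match term by term. The second Peiffer identity $(t,t)\big(\tilde\Phi((a,b))(p,q)\big)=(a,b)\cdot_{G\ltimes}(t,t)(p,q)\cdot_{G\ltimes}(a,b)^{-1}$ is handled the same way, pushing $(t,t)$ through \eqref{eqdefrG}, using that $t$ is a homomorphism together with \eqref{crmo2} to rewrite $t(\Phi(a)p)$ as $a\cdot_G t(p)\cdot_G a^{-1}$, and comparing with the right-hand side expanded via \eqref{defigs}. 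The main obstacle is bookkeeping: the second coordinate of $\tilde\Phi$ carries a nontrivial twist $b\cdot_G a$ and an extra correction $\cdot_H\Phi(a)p^{-1}$, so in each identity one must carefully reconcile the order of conjugations coming from the semi-direct product against the order forced by \eqref{crmo1}–\eqref{crmo2}; once the conventions are fixed, no single step is deep, but the second-coordinate computation in the automorphism check and in the first Peiffer identity is where a sign or ordering slip is most likely, and so deserves the most care.
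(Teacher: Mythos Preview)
Your proposal is a correct and standard direct verification. Note, however, that the paper does not supply its own proof of this proposition: it is quoted from Norrie's work \cite{NK, NK1} and stated without argument, so there is no paper-internal proof to compare against. Your outline---checking that $(t,t)$ is a homomorphism, that $\tilde\Phi$ is an action by automorphisms, and that the two Peiffer identities \eqref{crmo1}, \eqref{crmo2} hold for the new data---is exactly the natural route and would go through as you describe, with the bookkeeping in the second coordinate being the only real work.
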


\begin{defi}
Let $(H, G, t, \Phi)$ be a crossed module of Lie groups. A Rota-Baxter operator on $(H, G, t, \Phi)$ is a pair $(\huaB_1, \huaB_0)$, where smooth maps $\huaB_1: H\lon H$ and $\huaB_0:G\lon G$ satisfy
\begin{itemize}
\item [\rm(i)] $\huaB_1$ and $\huaB_0$ are Rota-Baxter operators on $H$ and $G$ respectively,
\item [\rm(ii)]the smooth map $t$ is a homomorphism from $\huaB_1$ to $\huaB_0$, i.e. $t\circ \huaB_1=\huaB_0\circ t$,
\item [\rm(iii)] $\Phi(\huaB_0(a))\huaB_1(p)=\huaB_1\Big(\Phi(a\cdot_G\huaB_0(a))(p\cdot_H\huaB_1(p))\cdot_H\Phi(\huaB_0(a))\huaB_1(p)^{-1}\Big),
    \quad \forall a\in G, p\in H.$
\end{itemize}
\end{defi}

\begin{ex}
Let $(H, G, t, \Phi)$ be a crossed module of Lie groups. Then $\Big((\cdot)_H^{-1}, (\cdot)_G^{-1}\Big)$ is a Rota-Baxter operator on $(H, G, t, \Phi)$, where $(\cdot)_H^{-1}$ and $(\cdot)_G^{-1}$ are inverse maps of $H$ and $G$ respectively.
\end{ex}

\begin{ex}\label{ex111}
Let $H$ be a Lie group and $\huaB$ be a Rota-Baxter operator on $H$. Then $(H, \{e_H\}, t, \Phi)$ is a crossed module of Lie groups, where
$$\Phi(e_H)h=h, ~~~\forall h\in H, \quad t(H)=e_H.$$
Moreover, $(\huaB, \Id)$ is a Rota-Baxter operator on $(H, \{e_H\}, t, \Phi)$.
\end{ex}

\begin{ex}
Let $G$ be a Lie group. Define $t:G\lon G$ and $\Phi: G\lon\Aut(G)$ by
$$
t(a)=a, \quad \Phi(a)b=aba^{-1}, \quad \forall a, b\in G.
$$
Then $(G, G, t, \Phi)$ is a crossed module of Lie groups. If $\huaB: G\lon G$ is a Rota-Baxter operator on $G$, then $(\huaB, \huaB)$ is a Rota-Baxter operator on $(G, G, t, \Phi)$.
\end{ex}

Let $G$ be a Lie group and $\g$ be the Lie algebra of $G$. Define $t:\g\lon G$ and $\Ad: G\lon\mathrm{GL}(\g)$ by
$$t(x)=e, ~~\forall x\in\g, \quad \Ad(g)x=\frac{d}{dt}|_{t=0}g\exp(tx)g^{-1}, ~~\forall g\in G.$$
Then $(\g, G, t, \Ad)$ is a crossed module of Lie groups. If $\huaB$ is a Rota-Baxter operator on $G$, we have \begin{eqnarray*}
\huaB(a)\huaB(b)\huaB(a)^{-1}&=&\huaB(a)\huaB(b)\huaB(\huaB(a)^{-1}a^{-1}\huaB(a))\\
&=&\huaB\Big(a\huaB(a)b\huaB(b)\huaB(a)^{-1}a^{-1}\huaB(a)\huaB(b)^{-1}\huaB(a)^{-1}\Big).
\end{eqnarray*}
Then for any $x\in\g$, it follows
\begin{eqnarray*}
\Ad_{\huaB(a)}\huaB_{*e}(x)&=&\frac{d}{dt}|_{t=0}\huaB(a)\huaB(\exp(tx)))\huaB(a)^{-1}\\
&=&\huaB_{*e}\Big(\Ad_{a\huaB(a)}(x+\huaB_{*e}(x))-\Ad_{\huaB(a)}(\huaB_{*e}(x))\Big).
\end{eqnarray*}
Thus we have the following example.
\begin{ex}
If $\huaB$ is a Rota-Baxter operator on $G$, then $((\huaB)_{*e}, \huaB)$ is a Rota-Baxter operator on $(\g, G, t, \Ad)$.
\end{ex}

In the following, we will character Rota-Baxter operators on crossed modules of Lie groups as crossed module of Lie groups.

\begin{thm}\label{graphB}
Let $(H, G, t, \Phi)$ be a crossed module of Lie groups and $\huaB_1: H\lon H, \huaB_0: G\lon G$ be smooth maps. Then the pair $(\huaB_1, \huaB_0)$ is a Rota-Baxter operator on $(H, G, t, \Phi)$ if and only if $(\mathrm{Gr}(\huaB_1), \mathrm{Gr}(\huaB_0), (t, t), \tilde{\Phi})$ is a crossed module of Lie groups, where
$$
\mathrm{Gr}(\huaB_1)=\{(\huaB_1(p), p)|\forall p\in H\}\subset H\ltimes H, \quad \mathrm{Gr}(\huaB_0)=\{(\huaB_0(a), a)|\forall a\in G\}\subset G\ltimes G.
$$
$$
(t, t): \mathrm{Gr}(\huaB_1)\lon\mathrm{Gr}(\huaB_0), \quad (t, t)(\huaB_1(p), p)=(t(\huaB_1(p)), t(p))
$$
and $\tilde{\Phi}:\mathrm{Gr}(\huaB_0)\lon\Aut(\mathrm{Gr}(\huaB_1))$ defined by
$$
\tilde{\Phi}\Big((\huaB_0(a), a)\Big)(\huaB_1(p), p)=\Big(\Phi(\huaB_0(a))\huaB_1(p), \Phi(a\cdot_G \huaB_0(a))(p\cdot_H \huaB_1(p))\cdot_H\Phi(\huaB_0(a))\huaB_1(p)^{-1}\Big)
$$
\end{thm}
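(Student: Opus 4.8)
The plan is to prove the equivalence by directly matching the three defining conditions of a Rota-Baxter operator on the crossed module $(H,G,t,\Phi)$ with the structural axioms of the candidate crossed module $\big(\mathrm{Gr}(\huaB_1), \mathrm{Gr}(\huaB_0), (t,t), \tilde\Phi\big)$, using throughout the fact that a smooth map $\huaB\colon K\to K$ is a Rota-Baxter operator on a Lie group $K$ precisely when its graph $\mathrm{Gr}(\huaB)=\{(\huaB(a),a)\}$ is a (closed Lie) subgroup of the semi-direct product $K\ltimes K$. First I would verify this graph-subgroup fact as a lemma-sized preliminary: with the multiplication $(a,b)\cdot_{K\ltimes}(c,d)=(a\cdot c, b\cdot a\cdot d\cdot a^{-1})$, one computes $(\huaB(a),a)\cdot_{K\ltimes}(\huaB(b),b)=(\huaB(a)\cdot\huaB(b),\, a\cdot\huaB(a)\cdot b\cdot\huaB(a)^{-1})$, which lies in $\mathrm{Gr}(\huaB)$ iff $\huaB(a)\cdot\huaB(b)=\huaB\big(a\cdot\huaB(a)\cdot b\cdot\huaB(a)^{-1}\big)$, i.e. iff $\huaB$ is a Rota-Baxter operator; closure under inverse is automatic once closure under product holds for a subset containing the identity. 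Applying this to $\huaB_1$ on $H$ and $\huaB_0$ on $G$ handles condition (i) of the definition and simultaneously makes $\mathrm{Gr}(\huaB_1)\subset H\ltimes H$ and $\mathrm{Gr}(\huaB_0)\subset G\ltimes G$ into Lie subgroups.

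Next I would treat condition (ii). By Proposition \ref{croslie}, $(t,t)\colon H\ltimes H\to G\ltimes G$ is already a Lie group homomorphism, so its restriction to $\mathrm{Gr}(\huaB_1)$ is automatically a homomorphism; the only issue is whether it actually lands in $\mathrm{Gr}(\huaB_0)$. We have $(t,t)(\huaB_1(p),p)=(t(\huaB_1(p)),t(p))$, and this is of the form $(\huaB_0(a),a)$ with $a=t(p)$ exactly when $t(\huaB_1(p))=\huaB_0(t(p))$ for all $p\in H$, which is precisely condition (ii), $t\circ\huaB_1=\huaB_0\circ t$. So (ii) is equivalent to $(t,t)\big(\mathrm{Gr}(\huaB_1)\big)\subseteq\mathrm{Gr}(\huaB_0)$, and when it holds, $(t,t)$ is a well-defined Lie group homomorphism between the graphs.

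Then I would handle the action $\tilde\Phi$ together with condition (iii). First one checks that $\tilde\Phi$ as displayed is the restriction to $\mathrm{Gr}(\huaB_0)\times\mathrm{Gr}(\huaB_1)$ of the action $\tilde\Phi$ on $H\ltimes H$ from equation \eqref{eqdefrG}: indeed substituting $(a,b)=(\huaB_0(a),a)$ and $(p,q)=(\huaB_1(p),p)$ into \eqref{eqdefrG} gives exactly the stated formula for $\tilde\Phi\big((\huaB_0(a),a)\big)(\huaB_1(p),p)$. For this restriction to make sense as a map $\mathrm{Gr}(\huaB_0)\to\Aut(\mathrm{Gr}(\huaB_1))$ one needs the image pair to lie back in $\mathrm{Gr}(\huaB_1)$, i.e. its first component must be $\huaB_1$ applied to its second component; writing this out, the requirement is precisely
$$
\Phi(\huaB_0(a))\huaB_1(p)=\huaB_1\Big(\Phi(a\cdot_G\huaB_0(a))(p\cdot_H\huaB_1(p))\cdot_H\Phi(\huaB_0(a))\huaB_1(p)^{-1}\Big),
$$
which is condition (iii). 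Once $\tilde\Phi$ is known to preserve $\mathrm{Gr}(\huaB_1)$, the facts that each $\tilde\Phi\big((\huaB_0(a),a)\big)$ is an automorphism and that $\tilde\Phi$ is a group action and smooth are all inherited from the corresponding properties of $\tilde\Phi$ on $H\ltimes H$ established in Proposition \ref{croslie}, restricted to the subgroups. Finally, the two crossed-module identities \eqref{crmo1} and \eqref{crmo2} for $\big(\mathrm{Gr}(\huaB_1),\mathrm{Gr}(\huaB_0),(t,t),\tilde\Phi\big)$ hold automatically, since $\mathrm{Gr}(\huaB_1)$ is a subgroup of $H\ltimes H$ closed under $\tilde\Phi$, $\mathrm{Gr}(\huaB_0)$ a subgroup of $G\ltimes G$, $(t,t)$ the restriction of a homomorphism, and these identities already hold in $\big(H\ltimes H,G\ltimes G,(t,t),\tilde\Phi\big)$ by Proposition \ref{croslie}; restriction of equalities of group elements remains an equality. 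Conversely, if $\big(\mathrm{Gr}(\huaB_1),\mathrm{Gr}(\huaB_0),(t,t),\tilde\Phi\big)$ is a crossed module of Lie groups, then $\mathrm{Gr}(\huaB_1)$ and $\mathrm{Gr}(\huaB_0)$ are in particular subgroups (giving (i)), $(t,t)$ maps one into the other (giving (ii)), and $\tilde\Phi$ preserves $\mathrm{Gr}(\huaB_1)$ (giving (iii)), so $(\huaB_1,\huaB_0)$ is a Rota-Baxter operator on $(H,G,t,\Phi)$.

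I expect the main obstacle to be purely bookkeeping: carefully confirming that the displayed $\tilde\Phi$ is genuinely the restriction of the ambient $\tilde\Phi$ of \eqref{eqdefrG} — the index shuffling between $(a,b)\mapsto(\huaB_0(a),a)$ and $(p,q)\mapsto(\huaB_1(p),p)$ must be done without sign or order errors — and checking that "being a subgroup closed under the ambient action, with $(t,t)$ restricting correctly" really is equivalent to the full list of crossed-module axioms in both directions, rather than merely implied by it. No genuinely hard analytic or algebraic step is involved; the content is the translation dictionary between the two pictures, which the above correspondence makes transparent.
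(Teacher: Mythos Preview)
Your proposal is correct and follows essentially the same route as the paper: both match condition (i) with the graph-subgroup characterization (the paper cites \cite[Proposition 3.4]{JSZ} for this), condition (ii) with $(t,t)$ landing in $\mathrm{Gr}(\huaB_0)$, and condition (iii) with $\tilde\Phi$ preserving $\mathrm{Gr}(\huaB_1)$. The only difference is that the paper verifies the two crossed-module identities \eqref{crmo1}--\eqref{crmo2} on the graphs by explicit computation, whereas you (more efficiently) observe they are inherited by restriction from the ambient crossed module of Proposition~\ref{croslie}; one small caution is that your remark ``closure under inverse is automatic once closure under product holds'' is not a general group-theoretic fact, though it does hold here because product-closure already forces the Rota--Baxter identity, from which inverse-closure follows.
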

\begin{proof}
Suppose that $(\huaB_1, \huaB_0)$ is a Rota-Baxter operator on crossed module of Lie groups, it follows from \cite[Proposition 3.4]{JSZ} that $\mathrm{Gr}(\huaB_1)$ and $\mathrm{Gr}(\huaB_0)$ are Lie subgroups of $(H\ltimes H, \cdot_{H\ltimes})$ and $(G\ltimes G, \cdot_{G\ltimes})$.

Since $(t, t)$ is a Lie group homomorphism from $(H\ltimes H, \cdot_{H\ltimes})$ to $(G\ltimes G, \cdot_{G\ltimes})$, we have
$$
(t, t)(\huaB_1(p), p)=(t(\huaB_1(p)), t(p))=(\huaB_0(t(p)), t(p))\in\mathrm{Gr}(\huaB_0), \quad \forall p\in H,
$$
which implies that $(t, t)$ is a Lie group homomorphism from $\mathrm{Gr}(\huaB_1)$ to $\mathrm{Gr}(\huaB_0)$.

Moreover, since $(\huaB_1, \huaB_0)$ is a Rota-Baxter operator on $(H, G, t, \Phi)$, it follows that
\begin{eqnarray*}
&&\tilde{\Phi}\Big((\huaB_0(a), a)\Big)(\huaB_1(p), p)\\
&=&\Big(\Phi(\huaB_0(a))\huaB_1(p), \Phi(a\cdot_G\huaB_0(a))(p\cdot_H\huaB_1(p))\cdot_H\Phi(\huaB_0(a))\huaB_1(p)^{-1}\Big)\\
&=&\Big(\huaB_1\Big(\Phi(a\cdot_G\huaB_0(a))(p\cdot_H\huaB_1(p))\cdot_H\Phi(\huaB_0(a))\huaB_1(p)^{-1}\Big), \\
&&\Phi(a\cdot_G\huaB_0(a))(p\cdot_H\huaB_1(p))\cdot_H\Phi(\huaB_0(a))\huaB_1(p)^{-1}\Big)\in \mathrm{Gr}(\huaB_1),
\end{eqnarray*}
thus $\tilde{\Phi}$ is an action of $\mathrm{Gr}(\huaB_0)$ on $\mathrm{Gr}(\huaB_1)$ by Proposition \ref{croslie}.

As $(H, G, t, \Phi)$ is a crossed module of Lie groups, we have
\begin{eqnarray*}
&&\tilde{\Phi}\Big((t, t)(\huaB_1(p), p)\Big)(\huaB_1(q), q)\\
&=&\tilde{\Phi}\Big(t(\huaB_1(p)), t(p)\Big)(\huaB_1(q), q)\\
&=&\Big(\Phi(t(\huaB_1(p))\huaB_1(q), \Phi(t(p)\cdot_H t(\huaB_1(p)))(q\cdot_G\huaB_1(q))\cdot_H\Phi(t(\huaB_1(p)))\huaB_1(q)^{-1}\Big)\\
&=&\Big(\huaB_1(p)\cdot_H\huaB_1(q)\cdot_H\huaB_1(p)^{-1},\\
 &&p\cdot_H\huaB_1(p)\cdot_H q\cdot_H\huaB_1(q)\cdot_H\huaB_1(p)^{-1}\cdot_H p^{-1}\cdot_H\huaB_1(p)\cdot_H\huaB_1(q)^{-1}\cdot_H\huaB_1(p)^{-1}\Big)\\
&=&(\huaB_1(p), p)\cdot_{H\ltimes}(\huaB_1(q), q)\cdot_{H\ltimes}(\huaB_1(p), p)^{-1},
\end{eqnarray*}
and
\begin{eqnarray*}
&&(t, t)(\tilde{\Phi}(\huaB_0(a), a)(\huaB_1(p), p))\\
&=&(t, t)(\Phi(\huaB_0(a))\huaB_1(p), \Phi(a\cdot_G\huaB_0(a))(p\cdot_H\huaB_1(p))\cdot\Phi(\huaB_0(a))\huaB_1(p)^{-1})\\
&=&(\huaB_0(a)\cdot_G t(\huaB_1(p))\cdot_G\huaB_0(a)^{-1},\\
&&a\cdot_G\huaB_0(a)\cdot_G t(p\cdot_H\huaB_1(p))\cdot_G\huaB_0(a)^{-1}\cdot_G a^{-1}\cdot \huaB_0(a)\cdot_G t(\huaB_1(p)^{-1})\cdot_G\huaB_0(a)^{-1})\\
&=&(\huaB_0(a), a)\cdot_G(t, t)(\huaB_1(p), p)\cdot_G(\huaB_0(a), a)^{-1}.
\end{eqnarray*}
Thus $(\mathrm{Gr}(\huaB_1), \mathrm{Gr}(\huaB_0), (t, t), \tilde{\Phi})$ is a crossed module of Lie groups.

Suppose that $(\mathrm{Gr}(\huaB_1), \mathrm{Gr}(\huaB_0), (t, t), \tilde{\Phi})$ is a crossed module of Lie groups, by \cite[Proposition 3.4]{JSZ}, we obtain that $\huaB_1$ and $\huaB_0$ are Rota-Baxter operators on $H$ and $G$.

Since $(t, t):\mathrm{Gr}(\huaB_1)\lon\mathrm{Gr}(\huaB_0)$ is a Lie group homomorphism, it follows that $t\circ\huaB_1=\huaB_0\circ t$. Moreover, since $\tilde{\Phi}(\mathrm{Gr}(\huaB_0))\mathrm{Gr}(\huaB_1)\subseteq\mathrm{Gr}(\huaB_1)$, then
$$
\tilde{\Phi}(\huaB_0(a), a)(\huaB_1(p), p)=\Big(\Phi(\huaB_0(a))\huaB_1(p), \Phi(a\cdot_G\huaB_0(a))(p\cdot_H\huaB_1(p))\cdot_H\Phi(\huaB_0(a))\huaB_1(p)^{-1}\Big)\in\mathrm{Gr}(\huaB_1),
$$
which implies that
$$
\Phi(\huaB_0(a))\huaB_1(p)=\huaB_1\Big(\Phi(a\cdot_G\huaB_0(a))(p\cdot_H\huaB_1(p))\cdot_H\Phi(\huaB_0(a))\huaB_1(p)^{-1}\Big).
$$
Therefore, $(\huaB_1, \huaB_0)$ is a Rota-Baxter operator on $(H, G, t, \Phi)$.
\end{proof}

Similar to a Rota-Baxter operator $\huaB$ on a Lie group $G$ descends a new Lie group structure on $G$, a Rota-Baxter operator on a crossed module of Lie groups also descend a new crossed module of Lie groups.
\begin{pro}\label{dg}
Let $(\huaB_1, \huaB_0)$ be a Rota-Baxter operator on a crossed module of Lie groups $(H, G, t, \Phi)$. Then $\Big((H, \cdot_{\huaB_1}), (G, \cdot_{\huaB_0}), t, \hat{\Phi}\Big)$ is a crossed module of Lie groups, where
\begin{eqnarray}
\label{deslie1}p\cdot_{\huaB_1}q&=&p\cdot_H\huaB_1(p)\cdot_H q\cdot_H\huaB_1(p)^{-1}, \quad \forall p, q\in H,\\
\label{deslie2}a\cdot_{\huaB_0}b&=&a\cdot_G\huaB_0(a)\cdot_G b\cdot_G\huaB_0(a)^{-1}, \quad \forall a, b\in G,\\
\label{deslie3}\hat{\Phi}(a)p&=&\Phi(a\cdot_G\huaB_0(a))(p\cdot_H\huaB_1(p))\cdot_H\Phi(\huaB_0(a))\huaB_1(p)^{-1}, \quad \forall a\in G, p\in H.
\end{eqnarray}
\end{pro}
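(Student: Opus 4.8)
The plan is to derive this from Theorem~\ref{graphB} together with the elementary fact that the notion of a crossed module of Lie groups is stable under transport along Lie group isomorphisms: if $(H',G',t',\Phi')$ is a crossed module of Lie groups and $\alpha\colon H\to H'$, $\beta\colon G\to G'$ are Lie group isomorphisms, then $\big(H,G,\beta^{-1}\circ t'\circ\alpha,\Phi''\big)$ with $\Phi''(a)p=\alpha^{-1}\big(\Phi'(\beta(a))(\alpha(p))\big)$ is again a crossed module of Lie groups --- a routine check of axioms \eqref{crmo1}--\eqref{crmo2} using that $\alpha,\beta$ are group isomorphisms. It therefore suffices to exhibit Lie group isomorphisms $\iota_1\colon (H,\cdot_{\huaB_1})\to\mathrm{Gr}(\huaB_1)$ and $\iota_0\colon (G,\cdot_{\huaB_0})\to\mathrm{Gr}(\huaB_0)$ under which the transported structure of the crossed module $(\mathrm{Gr}(\huaB_1),\mathrm{Gr}(\huaB_0),(t,t),\tilde{\Phi})$ from Theorem~\ref{graphB} becomes $\big((H,\cdot_{\huaB_1}),(G,\cdot_{\huaB_0}),t,\hat{\Phi}\big)$.

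First I would set $\iota_0(a)=(\huaB_0(a),a)$ and $\iota_1(p)=(\huaB_1(p),p)$. Since $\huaB_0$ is a Rota-Baxter operator on $G$, for all $a,b\in G$, using \eqref{defigs} and \eqref{deslie2},
$$\iota_0(a)\cdot_{G\ltimes}\iota_0(b)=\big(\huaB_0(a)\cdot_G\huaB_0(b),\,a\cdot_G\huaB_0(a)\cdot_G b\cdot_G\huaB_0(a)^{-1}\big)=\big(\huaB_0(a\cdot_{\huaB_0}b),\,a\cdot_{\huaB_0}b\big)=\iota_0(a\cdot_{\huaB_0}b),$$
so $\iota_0$ is a smooth group homomorphism onto $\mathrm{Gr}(\huaB_0)$ whose inverse is the restriction of the second projection $G\ltimes G\to G$; hence it is a Lie group isomorphism. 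This is the only place the Rota-Baxter identity for $\huaB_0$ is used, and it re-proves (via \cite[Proposition 3.4]{JSZ}, already invoked in Theorem~\ref{graphB}) that $(G,\cdot_{\huaB_0})$ is a Lie group. The same computation with $\huaB_1$ shows $\iota_1$ is a Lie group isomorphism onto $\mathrm{Gr}(\huaB_1)$.

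It then remains to compute the transported structure along $(\iota_1,\iota_0)$. Using condition~(ii), $(t,t)(\iota_1(p))=(t(\huaB_1(p)),t(p))=(\huaB_0(t(p)),t(p))=\iota_0(t(p))$, hence $\iota_0^{-1}\circ(t,t)\circ\iota_1=t$. Using the explicit formula for $\tilde{\Phi}$ from Theorem~\ref{graphB} (equivalently \eqref{eqdefrG}),
$$\tilde{\Phi}(\iota_0(a))(\iota_1(p))=\Big(\Phi(\huaB_0(a))\huaB_1(p),\,\Phi(a\cdot_G\huaB_0(a))(p\cdot_H\huaB_1(p))\cdot_H\Phi(\huaB_0(a))\huaB_1(p)^{-1}\Big),$$
and applying $\iota_1^{-1}$ (the second projection) returns exactly $\hat{\Phi}(a)p$ as in \eqref{deslie3}. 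Thus transporting the crossed module of Theorem~\ref{graphB} along $(\iota_1,\iota_0)$ produces precisely $\big((H,\cdot_{\huaB_1}),(G,\cdot_{\huaB_0}),t,\hat{\Phi}\big)$, which is therefore a crossed module of Lie groups.

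I do not expect a serious obstacle here: once Theorem~\ref{graphB} is in hand the content is purely the identification of $\iota_0,\iota_1$ as group isomorphisms for the descendent products, and the facts that $\hat{\Phi}(a)$ is an automorphism of $(H,\cdot_{\huaB_1})$, that $t$ is equivariant, and that axioms \eqref{crmo1}--\eqref{crmo2} hold come for free from the transport. The alternative of verifying \eqref{crmo1}--\eqref{crmo2} directly for $t$ and $\hat{\Phi}$ is possible but noticeably more laborious, since \eqref{crmo1} would first force one to work out the inversion formula in $(H,\cdot_{\huaB_1})$; the transport argument sidesteps this and is the cleanest route.
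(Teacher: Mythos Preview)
Your argument is correct, but it is a genuinely different route from the paper's. The paper proves Proposition~\ref{dg} by direct verification: it cites \cite[Proposition~3.5]{JSZ} for the Lie group structures $(H,\cdot_{\huaB_1})$ and $(G,\cdot_{\huaB_0})$, checks by hand that $\hat{\Phi}(a)$ respects $\cdot_{\huaB_1}$ and that $a\mapsto\hat{\Phi}(a)$ is multiplicative for $\cdot_{\huaB_0}$, explicitly writes out the inverse $p^{\dagger}=\huaB_1(p)^{-1}\cdot_Hp^{-1}\cdot_H\huaB_1(p)$ in $(H,\cdot_{\huaB_1})$, and then verifies axioms \eqref{crmo1}--\eqref{crmo2} directly using $t\circ\huaB_1=\huaB_0\circ t$. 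Your approach instead pulls back the already established crossed module $(\mathrm{Gr}(\huaB_1),\mathrm{Gr}(\huaB_0),(t,t),\tilde{\Phi})$ of Theorem~\ref{graphB} along the diffeomorphisms $\iota_1,\iota_0$ given by the graphs, and identifies the transported structure with $(t,\hat{\Phi})$. The gain of your method is economy and conceptual clarity: all the computational work (that $\hat{\Phi}$ is a genuine action by automorphisms and that \eqref{crmo1}--\eqref{crmo2} hold) is absorbed into Theorem~\ref{graphB} and the tautological transport lemma, and you never have to unwind the inversion formula in $(H,\cdot_{\huaB_1})$. The paper's direct proof, on the other hand, is independent of Theorem~\ref{graphB} and makes each identity visible on its own, which is occasionally useful downstream (for instance, the explicit computations of $\hat{\Phi}(a)(p\cdot_{\huaB_1}q)$ and of $p^{\dagger}$ are reused implicitly in Proposition~\ref{isomdi} and Proposition~\ref{d2}).
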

\begin{proof}
By \cite[Proposition 3.5]{JSZ}, we have that $(H, \cdot_{\huaB_1}), (G, \cdot_{\huaB_0})$ are Lie groups. Since $t$ is a Lie group homomorphism and $t\circ\huaB_1=\huaB_0\circ t$, it follows that $t$ is a Lie group homomorphism from $(H, \cdot_{\huaB_1})$ to $(G, \cdot_{\huaB_0})$.

On the one hand, since $(\huaB_1, \huaB_0)$ is a Rota-Baxter operator on $(H, G, t, \Phi)$, we have
\begin{eqnarray*}
&&\hat{\Phi}(a)(p\cdot_{\huaB_1}q)=\hat{\Phi}(a)(p\cdot_H\huaB_1(p)\cdot_H q\cdot_H\huaB_1(p)^{-1})\\
&=&\Phi(a\cdot_G\huaB_0(a))(p\cdot_H\huaB_1(p)\cdot_H q\cdot_H\huaB_1(p)^{-1}\cdot_H\huaB_1(p)\cdot_H\huaB_1(q))\Phi(\huaB_0(a))(\huaB_{1}(q)^{-1}\cdot_H\huaB_{1}(p)^{-1})\\
&=&\Phi(a\cdot_G\huaB_0(a))(p\cdot_H\huaB_1(p))\cdot_H\Phi(a\cdot_G\huaB_0(a))(q\cdot_H\huaB_1(q))\\
&&\cdot_H\Phi(\huaB_0(a))\huaB_{1}(q)^{-1}\cdot_H\Phi(\huaB_0(a))\huaB_{1}(p)^{-1},
\end{eqnarray*}
and
\begin{eqnarray*}
&&\hat{\Phi}(a)(p)\cdot_{\huaB_1}\hat{\Phi}(a)(q)\\
&=&\hat{\Phi}(a)(p)\cdot_{H}\huaB_1(\hat{\Phi}(a)(p))\cdot_{H}\hat{\Phi}(a)(q)\cdot_H\huaB_1(\hat{\Phi}(a)(p))^{-1}\\
&=&\Phi(a\cdot_G\huaB_0(a))(p\cdot_H\huaB_1(p))\cdot_H\Phi(\huaB_0(a))\huaB_1(p)^{-1}\cdot_H\Phi(\huaB_0(a))\huaB_1(p)\\
&&\cdot_H\Phi(a\cdot_G\huaB_0(a))(q\cdot_H\huaB_1(q))\cdot_H\Phi(\huaB_0(a))\huaB_1(q)^{-1}\cdot_H\Phi(\huaB_0(a))\huaB_1(p)^{-1}\\
&=&\Phi(a\cdot_G\huaB_0(a))(p\cdot_H\huaB_1(p))\cdot_H\Phi(a\cdot_G\huaB_0(a))(q\cdot_H\huaB_1(q))\\
&&\cdot_H\Phi(\huaB_0(a))\huaB_1(q)^{-1}\cdot_H\Phi(\huaB_0(a))\huaB_1(p)^{-1},
\end{eqnarray*}
thus $\hat{\Phi}(a)(p\cdot_{\huaB_1}q)=\hat{\Phi}(a)(p)\cdot_{\huaB_1}\hat{\Phi}(a)(q)$.

On the other hand, since $(\huaB_1, \huaB_0)$ is a Rota-Baxter operator on $(H, G, t, \Phi)$, we have
\begin{eqnarray*}
&&\hat{\Phi}(a)\Big(\hat{\Phi}(b)p\Big)\\
&=&\Phi(a\cdot_G\huaB_0(a))\Big(\hat{\Phi}(b)(p)\cdot_H\Phi(\huaB_0(b))\huaB_1(p)\Big)\cdot_H\Phi(\huaB_0(a))\Big(\Phi(\huaB_0(b))\huaB_1(p)^{-1}\Big)\\
&=&\Phi(a\cdot_G\huaB_0(a))\Big(\Phi(b\cdot_H\huaB_0(b))(p\cdot_H\huaB_1(p))\Big)\cdot_H\Phi(\huaB_0(a)\cdot_G\huaB_0(b))\huaB_1(p)^{-1},
\end{eqnarray*}
and
\begin{eqnarray*}
&&\hat{\Phi}(a\cdot_{\huaB_0}b)p\\
&=&\Phi(a\cdot_G\huaB_0(a)\cdot_G b\cdot_G\huaB_0(b))(p\cdot_H\huaB_1(p))\cdot_H\Phi(\huaB_0(a)\cdot_G\huaB_0(b))\huaB_1(p)^{-1},
\end{eqnarray*}
which implies that $\hat{\Phi}(a)\Big(\hat{\Phi}(b)p\Big)=\hat{\Phi}(a\cdot_{\huaB_0}b)p$. Thus $\hat{\Phi}: G\lon\Aut(H)$ is a Lie group action of $(G, \cdot_{\huaB_0})$ on $(H, \cdot_{\huaB_1})$.

By \cite[Proposition 3.5]{JSZ}, we obtain that the inverse of $p$ in $(H, \cdot_{\huaB_1})$ is
$$
p^\dagger=\huaB_1(p)^{-1}\cdot_H p^{-1}\cdot_H\huaB_1(p).
$$
Since $t\circ\huaB_1=\huaB_0\circ t$, it follows that
\begin{eqnarray*}
t(\hat{\Phi}(a)p)&=&t\Big(\Phi(a\cdot_G\huaB_0(a))(p\cdot_H\huaB_1(p))\cdot_H\Phi(\huaB_0(a))\huaB_1(p)^{-1}\Big)\\
&=&a\cdot_G\huaB_0(a)\cdot_Gt(p\cdot_H\huaB_1(p))\cdot_G(a\cdot_G\huaB_0(a))^{-1}\cdot_G\huaB_0(a)\cdot_Gt(\huaB_1(p))^{-1}\cdot_G\huaB_0(a)^{-1}\\
&=&a\cdot_G\huaB_0(a)\cdot_Gt(p)\cdot_G\huaB_0(t(p))\cdot_G a^{-1}\cdot_G\huaB_0(t(p))^{-1}\cdot_G\huaB_0(a)^{-1}\\
&=&a\cdot_{\huaB_0} t(p)\cdot_{\huaB_0} a^{-1},
\end{eqnarray*}
and
\begin{eqnarray*}
&&\hat{\Phi}(t(p))q\\
&=&\Phi(t(p)\cdot_G\huaB_0(t(p)))(q\cdot_G\huaB_1(q))\cdot_H\Phi(\huaB_0(t(p)))\huaB_1(q)^{-1}\\
&=&(p\cdot_H\huaB_1(p))\cdot_H(q\cdot_G\huaB_1(q))\cdot_H (p\cdot_H\huaB_1(p))^{-1}\cdot_H\huaB_1(p)\cdot\huaB_1(q)^{-1}\cdot_G\huaB_1(p)^{-1} \\
&=&p\cdot_{\huaB_1} q\cdot_{\huaB_1} p^{\dagger}.
\end{eqnarray*}
Therefore, $\Big((H, \cdot_{\huaB_1}), (G, \cdot_{\huaB_0}), t, \hat{\Phi}\Big)$ is a crossed module of Lie groups.
\end{proof}

\begin{thm}\label{semirb}
Let $(\huaB_1, \huaB_0)$ be a Rota-Baxter operator on a crossed module of Lie groups $(H, G, t, \Phi)$. Then $\huaB=(\huaB_0, \huaD)$ is a Rota-Baxter operator on the semi-direct product Lie group $(G\ltimes_{\Phi} H, \cdot_{\Phi})$, where
$$
\huaD(a, p)=\Phi(\huaB_0(a))\huaB_1(\Phi(\huaB_0(a)^{-1}\cdot_G a^{-1})p), \quad \forall a\in G, p\in H.
$$
and
$$
(a, p)\cdot_\Phi(b, q)=(a\cdot_G b, p\cdot_H\Phi(a)q), \quad \forall a, b\in G, p, q\in H.
$$
\end{thm}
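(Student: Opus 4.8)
The plan is to verify the defining identity of a Rota-Baxter operator on the Lie group $K:=G\ltimes_\Phi H$ directly; smoothness of $\huaB=(\huaB_0,\huaD)$ is clear since $\huaB_0,\huaB_1,\Phi$ and the group operations of $G$ and $H$ are smooth. Thus I must show that
$$\huaB(g)\cdot_\Phi\huaB(h)=\huaB\big(g\cdot_\Phi\huaB(g)\cdot_\Phi h\cdot_\Phi\huaB(g)^{-1}\big),\qquad g=(a,p),\ h=(b,q).$$
The computation becomes manageable after a change of variables: put $v:=\Phi(\huaB_0(a)^{-1}\cdot_G a^{-1})p$ and $w:=\Phi(\huaB_0(b)^{-1}\cdot_G b^{-1})q$, so that $\huaD(a,p)=\Phi(\huaB_0(a))\huaB_1(v)$, $\huaD(b,q)=\Phi(\huaB_0(b))\huaB_1(w)$, $p=\Phi(a\cdot_G\huaB_0(a))v$ and $q=\Phi(b\cdot_G\huaB_0(b))w$; after this substitution every $H$-component below collapses to $\Phi(-)$ applied to a short word in $v,\huaB_1(v),w,\huaB_1(w)$.

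First I would expand both sides. Using the product rule in $K$, the left-hand side is $\big(\huaB_0(a)\cdot_G\huaB_0(b),\ \Phi(\huaB_0(a))\big[\huaB_1(v)\cdot_H\Phi(\huaB_0(b))\huaB_1(w)\big]\big)$. For the right-hand side I would first record $\huaB(g)^{-1}=\big(\huaB_0(a)^{-1},\huaB_1(v)^{-1}\big)$ and then compute the conjugation-type product $g\cdot_\Phi\huaB(g)\cdot_\Phi h\cdot_\Phi\huaB(g)^{-1}$, which comes out to $\big(a\cdot_G\huaB_0(a)\cdot_G b\cdot_G\huaB_0(a)^{-1},\ R\big)$ for an explicit $R\in H$. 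The first coordinate of $\huaB$ of this element is $\huaB_0(a\cdot_G\huaB_0(a)\cdot_G b\cdot_G\huaB_0(a)^{-1})=\huaB_0(a)\cdot_G\huaB_0(b)$ by the Rota-Baxter identity for $\huaB_0$, which already matches. Feeding the element into the definition of $\huaD$ and simplifying the $\Phi$-arguments (using that $\Phi$ is a homomorphism into $\Aut(H)$ together with $p=\Phi(a\cdot_G\huaB_0(a))v$, $q=\Phi(b\cdot_G\huaB_0(b))w$) reduces the whole theorem to the single identity
$$\Phi(\huaB_0(b))\,\huaB_1(S)=\huaB_1(v)\cdot_H\Phi(\huaB_0(b))\,\huaB_1(w),\qquad S:=\Phi\big((b\cdot_G\huaB_0(b))^{-1}\big)(v\cdot_H\huaB_1(v))\cdot_H w\cdot_H\Phi(\huaB_0(b)^{-1})\huaB_1(v)^{-1},$$
to hold for all $b\in G$ and $v,w\in H$.

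Proving this identity is the crux, and the step I expect to take the most care. The trick is to factor $S=A\cdot_H\huaB_1(A)\cdot_H w\cdot_H\huaB_1(A)^{-1}$ with $A:=\Phi\big((b\cdot_G\huaB_0(b))^{-1}\big)(v\cdot_H\huaB_1(v))\cdot_H\Phi(\huaB_0(b)^{-1})\huaB_1(v)^{-1}$, and then to recognize $A$ as precisely the argument of $\huaB_1$ appearing in axiom (iii) of a Rota-Baxter operator on the crossed module, when that axiom is applied not to $b$ but to its inverse $b^{\dagger}:=\huaB_0(b)^{-1}\cdot_G b^{-1}\cdot_G\huaB_0(b)$ in the descended group $(G,\cdot_{\huaB_0})$. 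Indeed $\huaB_0(b^{\dagger})=\huaB_0(b)^{-1}$ (an elementary consequence of the Rota-Baxter identity for $\huaB_0$ together with $\huaB_0(e_G)=e_G$) and $b^{\dagger}\cdot_G\huaB_0(b^{\dagger})=(b\cdot_G\huaB_0(b))^{-1}$, so axiom (iii) applied to the pair $(b^{\dagger},v)$ gives exactly $\huaB_1(A)=\Phi(\huaB_0(b)^{-1})\huaB_1(v)$. Substituting this back, $\huaB_1(A)\cdot_H w\cdot_H\huaB_1(A)^{-1}$ is the portion of $S$ that follows $A$, so the Rota-Baxter identity for $\huaB_1$ applied to $A$ and $w$ yields $\huaB_1(S)=\huaB_1(A)\cdot_H\huaB_1(w)=\Phi(\huaB_0(b)^{-1})\huaB_1(v)\cdot_H\huaB_1(w)$; applying $\Phi(\huaB_0(b))$ to both sides gives the desired identity and finishes the proof. (One could instead try to deduce the theorem from Theorem~\ref{graphB} via the graph criterion of \cite{JSZ}, identifying $\mathrm{Gr}(\huaB)\subseteq K\ltimes K$ with the Lie group attached to the sub-crossed module $(\mathrm{Gr}(\huaB_1),\mathrm{Gr}(\huaB_0),(t,t),\tilde{\Phi})$; but making that identification precise requires the same cocycle-type bookkeeping, so I would carry out the direct verification above.)
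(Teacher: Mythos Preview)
Your proposal is correct and follows essentially the same route as the paper: a direct verification of the Rota-Baxter identity on $G\ltimes_\Phi H$, where the key step is to apply axiom~(iii) at the element $b^{\dagger}=\huaB_0(b)^{-1}\cdot_G b^{-1}\cdot_G\huaB_0(b)$ (using $\huaB_0(b^{\dagger})=\huaB_0(b)^{-1}$) and then the Rota-Baxter identity for $\huaB_1$. Your substitution $p=\Phi(a\cdot_G\huaB_0(a))v$, $q=\Phi(b\cdot_G\huaB_0(b))w$ streamlines the bookkeeping compared to the paper's brute-force expansion, but the underlying argument is the same.
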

\begin{proof}
On the one hand, since $(\huaB_1, \huaB_0)$ is a Rota-Baxter operator on $(H, G, t, \Phi)$, we obtain
\begin{eqnarray*}
&&\huaB\Big((a, p)\cdot_{\Phi}\huaB(a, p)\cdot_{\Phi}(b, q)\cdot_{\Phi}\huaB(a, p)^{-1}\Big)\\
&=&\huaB\Big((a, p)\cdot_{\Phi}(\huaB_0(a), \huaD(a, p))\cdot_{\Phi}(b, q)\cdot_{\Phi}(\huaB_0(a), \huaD(a, p))^{-1}\Big)\\
&=&\huaB\Big((a, p)\cdot_{\Phi}(\huaB_0(a), \huaD(a, p))\cdot_{\Phi}(b, q)\cdot_{\Phi}(\huaB_0(a)^{-1},\Phi(\huaB_0(a)^{-1})\huaD(a, p)^{-1})\Big)\\
&=&\huaB\Big(a\cdot_G\huaB_0(a)\cdot_G b\cdot_G\huaB_0(a)^{-1},\\
&&p\cdot_H\Phi(a)\huaD(a, p)\cdot_H\Phi(a\cdot_G\huaB_0(a))q\cdot_H\Phi(a\cdot_G\huaB_0(a)\cdot_G b\cdot_G\huaB_0(a)^{-1})\huaD(a, p)^{-1}\Big)\\
&=&\huaB\Big(a\cdot_G\huaB_0(a)\cdot_G b\cdot_G\huaB_0(a)^{-1}, p\cdot_H\Phi(a\cdot_G\huaB_0(a))\huaB_1(\Phi(\huaB_0(a)^{-1}\cdot_G a^{-1})p)\cdot_H\\
&&\Phi(a\cdot_G\huaB_0(a))q\cdot_H\Phi(a\cdot_G\huaB_0(a)\cdot_G b)\huaB_1(\Phi(\huaB_0(a)^{-1}\cdot_G a^{-1})p)^{-1}\Big)\\
&=&\Big(\huaB_0(a)\cdot_G\huaB_0(b), \Phi(\huaB_0(a)\cdot_G\huaB_0(b))\\
&&\huaB_1\Big((\Phi(\huaB_0(b)^{-1}\cdot_G b^{-1}\cdot_G\huaB_0(a)^{-1}\cdot_G a^{-1})p\cdot_H\Phi(B_0(b)^{-1}\cdot_G b^{-1})\huaB_1(\Phi(\huaB_0(a)^{-1}\cdot_G a^{-1})p)\\
&&\cdot_H\Phi(\huaB_0(b)^{-1}\cdot_G b^{-1})q\cdot_H\Phi(\huaB_0(b)^{-1})\huaB_1(\Phi(\huaB_0(a)^{-1}\cdot_G a^{-1})p)^{-1})\Big)\\
&=&(\huaB_0(a)\cdot_G\huaB_0(b), e_H)\cdot_{\Phi}\\
&&\Big(e_G, \huaB_1\Big(\Phi(\huaB_0(b)^{-1}\cdot_G b^{-1}\cdot_G\huaB_0(a)^{-1}\cdot_G a^{-1})p\cdot_H\Phi(B_0(b)^{-1}\cdot_G b^{-1})\huaB_1(\Phi(\huaB_0(a)^{-1}\cdot_G a^{-1})p)\\
&&\cdot_H\Phi(\huaB_0(b)^{-1}\cdot_G b^{-1})q\cdot_H\Phi(\huaB_0(b)^{-1})\huaB_1(\Phi(\huaB_0(a)^{-1}\cdot_G a^{-1})p)^{-1}\Big)\Big).
\end{eqnarray*}
On the other hand, since $\huaB_0(b)^{-1}=\huaB_0(\huaB_0(b)^{-1}\cdot_G b^{-1}\cdot\huaB_0(b))$, we have
\begin{eqnarray*}
&&\huaB(a, p)\cdot_{\Phi}\huaB(b, q)\\
&=&\Big(\huaB_0(a)\cdot_G\huaB_0(b), \huaD(a, p)\cdot_H\Phi(\huaB_0(a))\huaD(b, q)\Big)\\
&=&\Big(\huaB_0(a)\cdot_G\huaB_0(b),\Phi(\huaB_0(a))\huaB_1(\Phi(\huaB_0(a)^{-1}\cdot_G a^{-1})p)\cdot_H\Phi(\huaB_0(a)\cdot_G\huaB_0(b))\huaB_1(\Phi(\huaB_0(b)^{-1}\cdot_G b^{-1})q)\Big)\\
&=&\Big(\huaB_0(a)\cdot_G\huaB_0(b), e_H\Big)\cdot_{\Phi}\Big(e_G, \Phi(\huaB_0(b)^{-1})\huaB_1(\Phi(\huaB_0(a)^{-1}\cdot_G a^{-1})p)\cdot_H\huaB_1(\Phi(\huaB_0(b)^{-1}\cdot_G b^{-1})q)\Big)\\
&=&\Big(\huaB_0(a)\cdot_G\huaB_0(b), e_H\Big)\cdot_{\Phi}\Big(e_G, \Phi(\huaB_0(\huaB_0(b)^{-1}\cdot_G b^{-1}\cdot_G\huaB_0(b)))\huaB_1(\Phi(\huaB_0(a)^{-1}\cdot_G a^{-1})p)\\
&&\cdot_H\huaB_1(\Phi(\huaB_0(b)^{-1}\cdot_G b^{-1})q)\Big)\\
&=&\Big(\huaB_0(a)\cdot_G\huaB_0(b), e_H\Big)\cdot_{\Phi}\Big(e_G, \huaB_1\Big(\Phi(\huaB_0(b)^{-1}\cdot_G b^{-1}\cdot_G\huaB_0(a)^{-1}\cdot_G a^{-1})p\\
&&\cdot_H\Phi(\huaB_0(b)^{-1}\cdot_G b^{-1})\huaB_1(\Phi(\huaB_0(a)^{-1}\cdot_G a^{-1})p)\cdot_H\Phi(\huaB_0(b)^{-1})\huaB_1(\Phi(\huaB_0(a)^{-1}\cdot_G a^{-1})p)^{-1}\Big)\\
&&\cdot_H\huaB_1(\Phi(\huaB_0(b)^{-1}\cdot_G b^{-1})q)\Big)\\
&=&\Big(\huaB_0(a)\cdot_G\huaB_0(b), e_H\Big)\cdot_{\Phi}\\
&&\Big(e_G, \huaB_1\Big(\Phi(\huaB_0(b)^{-1}\cdot_G b^{-1}\cdot_G\huaB_0(a)^{-1}\cdot_G a^{-1})p\cdot_H\Phi(B_0(b)^{-1}\cdot_G b^{-1})\huaB_1(\Phi(\huaB_0(a)^{-1}\cdot_G a^{-1})p)\\
&&\cdot_H\Phi(\huaB_0(b)^{-1}\cdot_G b^{-1})q\cdot_H\Phi(\huaB_0(b)^{-1})\huaB_1(\Phi(\huaB_0(a)^{-1}\cdot_G a^{-1})p)^{-1}\Big)\Big).
\end{eqnarray*}
Thus we have
$$
\huaB(a, p)\cdot_{\Phi}\huaB(b, q)=\huaB\Big((a, p)\cdot_{\Phi}\huaB(a, p)\cdot_{\Phi}(b, q)\cdot_{\Phi}\huaB(a, p)^{-1}\Big),
$$
which implies that $\huaB=(\huaB_0, \huaD)$ is a Rota-Baxter operator on $(G\ltimes_{\Phi} H, \cdot_{\Phi})$.
\end{proof}
As $\huaB=(\huaB_0, \huaD)$ is a Rota-Baxter operator on the semi-direct product Lie groups $(G\ltimes_{\Phi} H, \cdot_{\Phi})$, by Propoistion 3.5 in \cite{JSZ}, there is a Lie group $(G\times H, \cdot_{\huaB})$ induced by $(\huaB_0, \huaD)$, where
$$
(a, p)\cdot_{\huaB}(b, q)=(a, p)\cdot_{\Phi}\huaB(a, p)\cdot_{\Phi}(b, q)\cdot_{\Phi}\huaB(a, p)^{-1}, \quad \forall a, b\in G, p, q\in H.
$$
Moreover, it follows from Proposition \ref{dg} that $(H, \cdot_{\huaB_1})$ and $(G, \cdot_{\huaB_0})$ are Lie groups and $\hat{\Phi}: G\lon H$ is an action. Denote by $(G\ltimes_{\hat{\Phi}} H, \cdot_{\hat{\Phi}})$ the semi-direct product Lie group of $(G, \cdot_{\huaB_0})$ and $(H, \cdot_{\huaB_1})$, where
$$
(a, p)\cdot_{\hat{\Phi}}(b, q)=(a\cdot_{\huaB_0} b, p\cdot_{\huaB_1}\hat{\Phi}(a)q),\quad\forall a, b\in G, p, q\in H.
$$

Define a smooth map $\varphi: G\ltimes_{\hat{\Phi}} H\lon G\times H$ by
\begin{equation}\label{eqdefiD}
\varphi(a, p)=\Big(a, p\cdot_H\huaB_1(p)\cdot_H\Phi(a)\huaB_1(p)^{-1}\Big), \quad \forall a\in G, p\in H,
\end{equation}
then the Lie groups $(G\ltimes_{\hat{\Phi}} H, \cdot_{\hat{\Phi}})$ and $(G\times H, \cdot_\huaB)$ have the following relation.
\begin{pro}\label{isomdi}
With the above notations, the map $\varphi: G\ltimes_{\hat{\Phi}} H\lon G\times H$ is a Lie group isomorphism from $(G\ltimes_{\hat{\Phi}} H, \cdot_{\hat{\Phi}})$ to $(G\times H, \cdot_\huaB)$.
\end{pro}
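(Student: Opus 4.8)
The plan is to show that $\varphi$ is a bijective smooth group homomorphism with smooth inverse, hence a Lie group isomorphism. Everything rests on two preliminary identities. First, comparing axiom (iii) of a Rota-Baxter operator on a crossed module with formula \eqref{deslie3} for $\hat\Phi$ shows that (iii) says precisely
$$\huaB_1(\hat\Phi(a)p)=\Phi(\huaB_0(a))\huaB_1(p),\qquad\forall a\in G,\ p\in H.$$
Second, the group Rota-Baxter identity for $\huaB_0$ gives $\huaB_0(e_G)=e_G$ and, setting $c_a:=\huaB_0(a)^{-1}\cdot_G a^{-1}\cdot_G\huaB_0(a)$, the relations $\huaB_0(c_a)=\huaB_0(a)^{-1}$ and $c_a\cdot_G\huaB_0(c_a)=\huaB_0(a)^{-1}\cdot_G a^{-1}$; substituting these into \eqref{deslie3} yields, for the second component $h=p\cdot_H\huaB_1(p)\cdot_H\Phi(a)\huaB_1(p)^{-1}$ of $\varphi(a,p)$, the identity $\Phi\big((a\cdot_G\huaB_0(a))^{-1}\big)h=\hat\Phi(c_a)p$.

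First I would establish the crucial simplification $\huaB\big(\varphi(a,p)\big)=\big(\huaB_0(a),\huaB_1(p)\big)$: indeed $\huaD(a,h)=\Phi(\huaB_0(a))\huaB_1\big(\Phi(\huaB_0(a)^{-1}\cdot_G a^{-1})h\big)=\Phi(\huaB_0(a))\huaB_1(\hat\Phi(c_a)p)=\Phi(\huaB_0(a))\Phi(\huaB_0(c_a))\huaB_1(p)=\huaB_1(p)$, using first the displayed identity and then the rephrased axiom (iii). With this in hand the right-hand side $\varphi(a,p)\cdot_\huaB\varphi(b,q)=(a,h)\cdot_\Phi(\huaB_0(a),\huaB_1(p))\cdot_\Phi(b,\bar q)\cdot_\Phi(\huaB_0(a),\huaB_1(p))^{-1}$ — where $\bar q$ denotes the second component of $\varphi(b,q)$ — telescopes, after the evident cancellations (such as $h\cdot_H\Phi(a)\huaB_1(p)=p\cdot_H\huaB_1(p)$) and the identity $a\cdot_{\huaB_0}b\cdot_G\huaB_0(a)=a\cdot_G\huaB_0(a)\cdot_G b$, to
$$\Big(a\cdot_{\huaB_0}b,\ p\cdot_H\huaB_1(p)\cdot_H\Phi(a\cdot_G\huaB_0(a))(q\cdot_H\huaB_1(q))\cdot_H\Phi(a\cdot_G\huaB_0(a)\cdot_G b)\huaB_1(q)^{-1}\cdot_H\Phi(a\cdot_{\huaB_0}b)\huaB_1(p)^{-1}\Big).$$
On the other side, $\varphi\big((a,p)\cdot_{\hat\Phi}(b,q)\big)=\varphi\big(a\cdot_{\huaB_0}b,\,p\cdot_{\huaB_1}\hat\Phi(a)q\big)$; since $\huaB_1$ is a homomorphism $(H,\cdot_{\huaB_1})\lon(H,\cdot_H)$ (a restatement of the Rota-Baxter identity for $\huaB_1$), one has $\huaB_1\big(p\cdot_{\huaB_1}\hat\Phi(a)q\big)=\huaB_1(p)\cdot_H\Phi(\huaB_0(a))\huaB_1(q)$ by the rephrased axiom (iii), and substituting into the defining formula \eqref{eqdefiD} for $\varphi$ reproduces exactly the expression above. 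Thus $\varphi$ is a group homomorphism.

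Next I would check bijectivity by writing the inverse explicitly:
$$\psi(a,h)=\Big(a,\ \hat\Phi(a)\big(\Phi((a\cdot_G\huaB_0(a))^{-1})h\big)\Big).$$
Using that $\hat\Phi$ is an action of $(G,\cdot_{\huaB_0})$ on $(H,\cdot_{\huaB_1})$ (Proposition \ref{dg}) together with the two preliminary identities, one computes $\psi(\varphi(a,p))=\big(a,\hat\Phi(a)(\hat\Phi(c_a)p)\big)=\big(a,\hat\Phi(a\cdot_{\huaB_0}c_a)p\big)=(a,p)$ because $a\cdot_{\huaB_0}c_a=e_G$; and symmetrically, using the rephrased axiom (iii) and $\Phi(a\cdot_G\huaB_0(a))\Phi((a\cdot_G\huaB_0(a))^{-1})=\Id$, one gets $\varphi(\psi(a,h))=(a,h)$. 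Both $\varphi$ and $\psi$ are smooth (built from $\huaB_0,\huaB_1,\Phi,\hat\Phi$ and the group operations), so $\varphi$ is a diffeomorphism; being in addition a group homomorphism it is a Lie group isomorphism from $(G\ltimes_{\hat\Phi}H,\cdot_{\hat\Phi})$ to $(G\times H,\cdot_\huaB)$.

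The main obstacle is the homomorphism identity in the second step: expanded naively it is a long string of nested $\Phi$-twists carrying many $\huaB_0$- and $\huaB_1$-terms, and it becomes tractable only once the collapse $\huaB(\varphi(a,p))=(\huaB_0(a),\huaB_1(p))$ is available — which is exactly where both preliminary identities (the reformulation of axiom (iii) and the group-level identity $\huaB_0(c_a)=\huaB_0(a)^{-1}$) are used essentially. Everything else — smoothness, and the verification that $\psi$ is a two-sided inverse — is routine bookkeeping with the same two identities.
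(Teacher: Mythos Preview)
Your argument is correct, and the two key identities you isolate --- the rewriting of axiom~(iii) as $\huaB_1(\hat\Phi(a)p)=\Phi(\huaB_0(a))\huaB_1(p)$ and the collapse $\huaB(\varphi(a,p))=(\huaB_0(a),\huaB_1(p))$ --- check out cleanly; with these in hand both the homomorphism identity and the two-sided inverse go through exactly as you describe.

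Your route differs from the paper's in organisation rather than in substance. The paper does not isolate the collapse $\huaB\circ\varphi=(\huaB_0,\huaB_1)$; instead it verifies the homomorphism property by first checking the two special cases $\varphi\big((a,e_H)\cdot_{\hat\Phi}(b,q)\big)$ and $\varphi\big((e_G,p)\cdot_{\hat\Phi}(b,q)\big)$ directly and then bootstraps to general $(a,p)$ via the factorisation $(a,p)=(a,e_H)\cdot_{\hat\Phi}(e_G,\hat\Phi(a^\dagger)p)$. For bijectivity the paper shows $\ker\varphi$ is trivial and exhibits a preimage for each $(b,q)$, rather than writing an explicit two-sided inverse. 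Your approach buys a cleaner conceptual statement (the collapse identity is genuinely illuminating and is essentially what makes $\varphi$ the ``right'' map), handles the general case in one pass, and --- by producing a visibly smooth $\psi$ --- makes the diffeomorphism claim immediate without appealing to automatic smoothness of inverses of bijective Lie group homomorphisms. The paper's divide-and-conquer avoids the preliminary identity about $c_a=a^\dagger$ and keeps each individual computation shorter, at the cost of an extra bootstrapping step.
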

\begin{proof}
For any $a\in G, p\in H$, by \eqref{eqdefiD}, we have
\begin{equation*}
\varphi(a, e_H)=(a, e_H), \quad \varphi(e_G, p)=(e_G, p).
\end{equation*}
By \eqref{deslie1}, \eqref{deslie2}, \eqref{deslie3} and \eqref{eqdefiD}, it follows that
\begin{eqnarray*}\label{eq0011}
&&\varphi((a, e_H)\cdot_{\hat{\Phi}}(b, q))\\
\nonumber&=&\varphi(a\cdot_{\huaB_{0}} b, \hat{\Phi}(a)q)\\
\nonumber&=&\Big(a\cdot_{\huaB_0} b, \hat{\Phi}(a)q\cdot_H\Phi(\huaB_0(a))\huaB_1(q)\cdot_H\Phi(a\cdot_G\huaB_0(a)\cdot_G b)\huaB_1(q)^{-1}\Big)\\
\nonumber&=&\Big(a\cdot_G\huaB_0(a)\cdot_G b\cdot_G\huaB_0(a)^{-1}, \Phi(a\cdot_G\huaB_0(a))(q\cdot_H\huaB_1(q))\cdot_H\Phi(a\cdot_G\huaB_0(a)\cdot_G b)\huaB_1(q)^{-1}\Big)\\
\nonumber&=&(a, e_H)\cdot_\huaB(b, q\cdot_H\huaB_1(q)\cdot_H\Phi(b)\huaB_1(q)^{-1})\\
\nonumber&=&\varphi(a, e_H)\cdot_{\huaB}\varphi(b, q),
\end{eqnarray*}
and
\begin{eqnarray*}\label{eq0022}
&&\varphi((e_G, p)\cdot_{\hat{\Phi}}(b, q))\\
\nonumber&=&\varphi(b, p\cdot_H\huaB_0(p)\cdot_H q\cdot_H\huaB_0(p)^{-1})\\
\nonumber&=&\Big(b, p\cdot_H\huaB_1(p)\cdot_Hq\cdot_H\huaB_1(q)\cdot_H\Phi(b)(\huaB_1(q)^{-1}\cdot_H\huaB_1(p)^{-1})\Big)\\
\nonumber&=&(e_G, p)\cdot_{\huaB}(b, q\cdot_H\huaB_1(q)\cdot_H\Phi(b)\huaB_1(q)^{-1})\\
\nonumber&=&\varphi(e_G, p)\cdot_{\huaB}\varphi(b, q).
\end{eqnarray*}
Denote by $a^\dag$ the inverse of $a$ in $(G, \cdot_{\huaB_0})$, we have
\begin{eqnarray*}
\varphi((a, p)\cdot_{\hat{\Phi}}(b, q))&=&\varphi\Big((a, e_H)\cdot_{\hat{\Phi}}(e_G, \hat{\Phi}(a^\dag)p)\cdot_{\hat{\Phi}}(b, e_H)\cdot_{\hat{\Phi}}(e_G, \hat{\Phi}(b^\dag)q)\Big)\\
&=&\varphi(a, e_H)\cdot_{\hat{\Phi}}\varphi\Big((e_G, \hat{\Phi}(a^\dag)p)\cdot_{\hat{\Phi}}(b, e_H)\cdot_{\hat{\Phi}}(e_G, \hat{\Phi}(b^\dag)q)\Big)\\
&=&\varphi(a, e_H)\cdot_{\hat{\Phi}}\varphi(e_G, \hat{\Phi}(a^\dag)p)\cdot_{\hat{\Phi}}\varphi\Big((b, e_H)\cdot_{\hat{\Phi}}(e_G, \hat{\Phi}(b^\dag)q)\Big)\\
&=&\varphi(a, e_H)\cdot_{\huaB}\varphi(e_G, \hat{\Phi}(a^\dag)p)\cdot_{\huaB}\varphi(b, e_H)\cdot_{\huaB}\varphi(e_G, \hat{\Phi}(b^\dag)q)\\
&=&\varphi((a, e_H)\cdot_{\hat{\Phi}}(e_G, \hat{\Phi}(a^\dag)p))\cdot_{\huaB}\varphi((b, e_H)\cdot_{\hat{\Phi}}(e_G, \hat{\Phi}(b^\dag)q))\\
&=&\varphi(a, p)\cdot_{\huaB}\varphi(b, q),
\end{eqnarray*}
which implies that $\varphi:G\ltimes_{\hat{\Phi}} H\lon G\times H$ is a Lie group homomorphism.

Moreover, it is obvious that $\ker\varphi=(e_G, e_H)$. For any $a\in G, p\in H$, we have $$\varphi(a, \hat{\Phi}(a)p)=\Big(a, \Phi(a\cdot_G\huaB_0(a))p\Big),$$ which implies that for any $(b, q)\in G\times H$, there is $$\varphi(b, \hat{\Phi}(b)\Phi(\huaB_0(b)^{-1}\cdot_G b^{-1})q)=(b, q).$$ Thus $\varphi: G\ltimes_{\hat{\Phi}} H\lon G\times H$ is a Lie group isomorphism from $(G\ltimes H, \cdot_{\hat{\Phi}})$ to $(G\times_{\hat{\Phi}} H, \cdot_\huaB)$.
\end{proof}

\section{Categorical solutions of the Yang-Baxter equation}

In this section, we introduce the notion of categorical solutions of the Yang-Baxter equation and this kind of solutions are constructed by Rota-Baxter operators on crossed modules of Lie groups.

Recall that a set-theoretical solution of the Yang-Baxter equation is a pair $(X, r)$, where X is a set and $r:X\times X\lon X\times X$ is a bijective map satisfying
\begin{equation}\label{eq-ybe}
(r\times\Id)(\Id\times r)(r\times\Id)=(\Id\times r)(r\times\Id)(\Id\times r).
\end{equation}
In generally, it's difficult to give a solution of the equation \eqref{eq-ybe}. There are some solutions of the equation \eqref{eq-ybe} constructed by bijective $1$-cocycles, matched pair of groups, skew braces and Rota-Baxter operators on Lie groups\cite{BG2, ESS, Ga, GM, GV, LYZ, Ru, Ta}.

Given Rota-Baxter operators on a Lie group, Bardakov and Gubarev constructed set-theoretical solutions of the Yang-Baxter equation as following.
\begin{thm}\label{yangbaxtersolufromrb}\cite{BG2}
Let $\huaB$ be a Rota-Baxter operator on a Lie group $G$. Then $$
R: G\times G\lon G\times G, \quad R(a, b)=(\Ad_{\huaB(a)}(b), \Ad_{(\huaB(\Ad_{\huaB(a)}(b))^{-1})(\Ad_{\huaB(a)}(b))^{-1}}a)
$$
is a solution of the Yang-Baxter equation on the set $G$.
\end{thm}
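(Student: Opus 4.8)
The plan is to deduce the statement from the well-known chain \emph{Rota-Baxter operator on a group $\rightsquigarrow$ skew left brace $\rightsquigarrow$ set-theoretic solution of the Yang-Baxter equation}. Recall, as already used in the proof of Proposition \ref{dg} via \cite[Proposition 3.5]{JSZ}, that a Rota-Baxter operator $\huaB$ on $G$ yields a new group $(G,\cdot_\huaB)$ with $a\cdot_\huaB b=a\cdot\huaB(a)\cdot b\cdot\huaB(a)^{-1}$, unit $e$, and $\cdot_\huaB$-inverse $a^\dagger=\huaB(a)^{-1}\cdot a^{-1}\cdot\huaB(a)$. Two one-line observations drive everything: first, the Rota-Baxter identity says exactly that $\huaB\colon(G,\cdot_\huaB)\to(G,\cdot)$ is a group homomorphism, hence $\huaB(a^\dagger)=\huaB(a)^{-1}$; second, $a\cdot_\huaB(b\cdot c)=(a\cdot_\huaB b)\cdot a^{-1}\cdot(a\cdot_\huaB c)$, so that $(G,\cdot,\cdot_\huaB)$ is a skew left brace.

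First I would record the skew left brace structure and compute its $\lambda$-map: $\lambda_a(b)=a^{-1}\cdot(a\cdot_\huaB b)=\huaB(a)\,b\,\huaB(a)^{-1}=\Ad_{\huaB(a)}(b)$, which is precisely the first component of $R$. Then I would invoke the Guarnieri--Vendramin theorem \cite{GV}: a skew left brace $(A,+,\circ)$ produces a non-degenerate bijective solution $r(x,y)=(\lambda_x(y),\lambda_x(y)^{-1}\circ x\circ y)$ of \eqref{eq-ybe}, where the inverse is taken in $(A,\circ)$. Applied to $(G,\cdot,\cdot_\huaB)$ this gives a bijective solution whose first component is $\Ad_{\huaB(a)}(b)$ and whose second component is $c^\dagger\cdot_\huaB a\cdot_\huaB b$ with $c=\Ad_{\huaB(a)}(b)$; in particular non-degeneracy supplies the invertibility of $R$ required by the statement.

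The last step is to rewrite $c^\dagger\cdot_\huaB a\cdot_\huaB b$ in the stated form. Using $\huaB(c^\dagger)=\huaB(c)^{-1}$ one has $c^\dagger\cdot_\huaB z=c^\dagger\cdot\huaB(c)^{-1}\cdot z\cdot\huaB(c)$ for any $z$, and $c^\dagger\cdot\huaB(c)^{-1}=\huaB(c)^{-1}\cdot c^{-1}$; moreover $a\cdot_\huaB b=a\cdot c$ by the definition of $c$. Substituting $z=a\cdot_\huaB b=a\cdot c$ gives $c^\dagger\cdot_\huaB a\cdot_\huaB b=\big(\huaB(c)^{-1}c^{-1}\big)\,a\,\big(\huaB(c)^{-1}c^{-1}\big)^{-1}=\Ad_{\huaB(c)^{-1}c^{-1}}(a)$, which is exactly the second component in the statement once $c=\Ad_{\huaB(a)}(b)$ is inserted.

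I expect the main obstacle to be not any single step but the bookkeeping in the final identification: one must keep straight that $\huaB$ is a homomorphism from the \emph{descendent} group $(G,\cdot_\huaB)$ to $(G,\cdot)$ (it is neither a homomorphism nor an anti-homomorphism of $(G,\cdot)$ itself) and combine this correctly with the explicit formula for $a^\dagger$. If instead one wants a self-contained argument avoiding the skew-brace machinery, the obstacle shifts to a direct verification of the braid relation for $R$ together with an explicit construction of $R^{-1}$; this is a computation of the same flavour and length as the displayed identities in Theorem \ref{semirb}, mechanical but lengthy.
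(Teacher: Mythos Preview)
Your argument is correct. The paper itself does not prove this theorem: it is stated with the citation \cite{BG2} and no proof is given, as the result is quoted from Bardakov--Gubarev. Your route via the descendent group $(G,\cdot_\huaB)$, the skew left brace $(G,\cdot,\cdot_\huaB)$, and the Guarnieri--Vendramin solution is precisely the argument of \cite{BG2} (see also \cite{GV}), so you have essentially reconstructed the proof from the cited source. The computations you outline---that $\huaB\colon(G,\cdot_\huaB)\to(G,\cdot)$ is a homomorphism, the skew-brace compatibility $a\cdot_\huaB(b\cdot c)=(a\cdot_\huaB b)\cdot a^{-1}\cdot(a\cdot_\huaB c)$, the identification $\lambda_a(b)=\Ad_{\huaB(a)}(b)$, and the rewriting of the second component as $\Ad_{\huaB(c)^{-1}c^{-1}}(a)$---are all valid and match the formula in the statement.
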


\begin{defi}
A categorical solution of the Yang-Baxter equation is a pair $(\huaC, R)$, where $\huaC$ is a small category and $R:\huaC\times\huaC\lon\huaC\times\huaC$ is \footnote{Here, $\huaC\times\huaC$ is the product category.}an invertible functor satisfying
$$
(R\times\Id)(\Id\times R)(R\times\Id)=(\Id\times R)(R\times\Id)(\Id\times R).
$$
\end{defi}
In the following, we will construct categorical solutions of the Yang-Baxter equation by Rota-Baxter operators on crossed modules of Lie groups.

Let $(H, G, t, \Phi)$ be a crossed module of Lie groups. As proven in \cite{BaL}, $(H, G, t, \Phi)$ gives rise to a small category $(\huaC, \frks, \frkt)$. More precisely,
\begin{equation}\label{deficate1}
\huaC_1=G\times H, \quad \huaC_0=G, \quad \frks(a, p)=a, \quad \frkt(a, p)=t(p)\cdot_G a, \quad \forall a\in G, p\in H,
\end{equation}
and the composition of morphisms is given by
\begin{equation}\label{deficateg2}
(a, p)\circ(b, q)=(b, p\cdot_H q), \quad \text{where} \quad t(q)\cdot_G b=a.
\end{equation}

\[
\xymatrix{
  \bullet  && \bullet  \ar@/_1pc/[ll]_{(a,p)} && \bullet  \ar@/_1pc/[ll]_{(b,q)} \ar@/_3.2pc/[llll]_{(b,p\cdot_H q)}
}
\]
For any $a\in \huaC_0$, the identify morphism $1_a=(a, e_H)$.

Let $(\huaB_1, \huaB_0)$ be a Rota-Baxer operator on $(H, G, t, \Phi)$, we have the following result.
\begin{thm}\label{categorical}
Define two maps $R_0: \huaC_0\times\huaC_0\lon \huaC_0\times \huaC_0$ and $R_1:\huaC_1\times\huaC_1\lon\huaC_1\times\huaC_1$ by
\begin{eqnarray*}
R_0(c, d)&=&(\Ad_{\huaB_0(c)}d,  \Ad_{(\huaB_0(\Ad_{\huaB_0(c)}d)^{-1})(\Ad_{\huaB_0(c)}d)^{-1}}c), \\
R_1\Big((a, p), (b, q)\Big)&=&\Big(\Ad_{\huaB(a, p)}(b, q), \Ad_{(\huaB(\Ad_{\huaB(a, p)}(b, q))^{-1})(\Ad_{\huaB(a, p)}(b, q))^{-1}}(a, p)\Big),
\end{eqnarray*}
where
$$
\huaB(a, p)=\Big(\huaB_0(a), \Phi(\huaB_0(a))\huaB_1(\Phi(\huaB_0(a)^{-1}\cdot_G a^{-1})p)\Big), \quad \forall a\in G, p\in H.
$$
Then $R=(R_1, R_0):\huaC\times\huaC\lon\huaC\times\huaC$ is a categorical solution of the Yang-Baxter equation.
\end{thm}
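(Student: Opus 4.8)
The plan is to recognise $R_0$ and $R_1$ as Bardakov--Gubarev solutions and then to upgrade "set-theoretic solution" to "categorical solution" by checking functoriality. First recall that $\huaC_1=G\times H$ is the underlying set of the semi-direct product Lie group $(G\ltimes_\Phi H,\cdot_\Phi)$ --- the horizontal (tensor) multiplication of the strict Lie $2$-group determined by $(H,G,t,\Phi)$ via \cite{BaL} --- while $\huaC_0=G$ is a Lie group; by that same correspondence the tensor product $\otimes\colon\huaC\times\huaC\to\huaC$, $(f,g)\mapsto f\cdot_\Phi g$, and the inversion $\huaC\to\huaC$ are functors, hence so is $\Ad\colon\huaC\times\huaC\to\huaC$, $\Ad(f,g)=f\otimes g\otimes f^{-1}$. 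By Theorem~\ref{semirb}, the map $\huaB\colon\huaC_1\to\huaC_1$, $\huaB(a,p)=(\huaB_0(a),\huaD(a,p))$ --- whose second component is exactly the one displayed in the statement --- is a Rota--Baxter operator on the group $\huaC_1$, and $\huaB_0$ is a Rota--Baxter operator on $\huaC_0$. With this identification, the formulas for $R_1$ and $R_0$ are verbatim the Bardakov--Gubarev solution of Theorem~\ref{yangbaxtersolufromrb} applied to $(\huaC_1,\huaB)$ and to $(\huaC_0,\huaB_0)$; in particular $R_0$ and $R_1$ are bijections of the underlying sets, and each satisfies the braid relation on $\huaC_0$, resp. $\huaC_1$.

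Invertibility and the braid relation for $R=(R_1,R_0)$ then follow by a componentwise argument: in the product categories $\huaC\times\huaC$ and $\huaC\times\huaC\times\huaC$ the objects, morphisms, identities and composition are all defined componentwise, so, once $R$ is a functor, it is invertible because $R_0$ and $R_1$ are bijective, and the equality $(R\times\Id)(\Id\times R)(R\times\Id)=(\Id\times R)(R\times\Id)(\Id\times R)$ of functors on $\huaC\times\huaC\times\huaC$ holds because it holds on objects (where it is the braid relation for $R_0$) and on morphisms (where it is the braid relation for $R_1$), both given by Theorem~\ref{yangbaxtersolufromrb}.

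The substance is thus to show that $R=(R_1,R_0)$ is a functor $\huaC\times\huaC\to\huaC\times\huaC$, which I would split into compatibility with the source and target maps, with identities, and with composition. For the first, note that $\frks,\frkt\colon\huaC_1\to\huaC_0$ are group homomorphisms and are Rota--Baxter operator homomorphisms from $\huaB$ to $\huaB_0$: the identity $\frks\circ\huaB=\huaB_0\circ\frks$ is immediate, and $\frkt\circ\huaB=\huaB_0\circ\frkt$ follows by combining $t(\Phi(g)p)=g\cdot_G t(p)\cdot_G g^{-1}$ (\eqref{crmo2}), property~(ii) $t\circ\huaB_1=\huaB_0\circ t$, and the Rota--Baxter identity for $\huaB_0$ used to rewrite $\huaB_0\big(t(p)\cdot_G a\big)$. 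Using the elementary fact that if $\Psi$ is a Rota--Baxter operator homomorphism from $\huaB'$ to $\huaB$ then $\Psi\times\Psi$ intertwines the Bardakov--Gubarev solutions attached to $\huaB'$ and to $\huaB$ (since these are built solely from the operator, the group multiplication and the inversion), applied to $\Psi=\frks$ and $\Psi=\frkt$, one obtains precisely the compatibility of $R$ with the source and target maps --- and in particular that $R_1$ carries composable pairs to composable pairs. Compatibility with identities is immediate: a Rota--Baxter operator on a group fixes the unit, so $\huaB_1(e_H)=e_H$, $\huaB_0(e_G)=e_G$, whence $\huaB(1_a)=1_{\huaB_0(a)}$ and $R_1(1_a,1_b)=(1_c,1_d)$ with $(c,d)=R_0(a,b)$. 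For composition, the key lemma is that the pair $(\huaB,\huaB_0)$ is itself a functor $\huaC\to\huaC$, equivalently that $\huaD(a_1,p_1)\cdot_H\huaD(a_2,p_2)=\huaD(a_2,\,p_1\cdot_H p_2)$ whenever $t(p_2)\cdot_G a_2=a_1$; granting it, $R_1$ preserves composition because it is assembled from $\huaB$, the inversion and the tensor product, all functors, so $R$ is a functor, and the theorem follows.

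I expect this composition lemma to be the main obstacle. Expanding $\huaD(a,p)=\Phi(\huaB_0(a))\huaB_1\big(\Phi(\huaB_0(a)^{-1}\cdot_G a^{-1})p\big)$ and using that $\Phi$ acts by automorphisms, one must match $\huaD(a_2,p_1\cdot_H p_2)$ with $\huaD(a_1,p_1)\cdot_H\huaD(a_2,p_2)$ after rewriting $\huaB_0(a_1)=\huaB_0(t(p_2)\cdot_G a_2)$ through the Rota--Baxter identity for $\huaB_0$; the identity closes up only after invoking Rota--Baxter condition~(iii) of the crossed module, together with axiom~\eqref{crmo1} (so that the central directions forced by $\ker t$ are available), axiom~\eqref{crmo2}, and property~(ii). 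This is a bookkeeping-heavy manipulation, but entirely in the spirit of the computations already done for Theorem~\ref{graphB}, Proposition~\ref{dg} and Theorem~\ref{semirb}, so I expect it to go through with care.
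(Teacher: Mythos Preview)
Your approach is correct and genuinely more conceptual than the paper's. The paper proves functoriality of $R$ by a frontal assault: it introduces the parametrization $h=\Phi(a\cdot_G\huaB_0(a))p$ (under which $\huaD(a,h)=\Phi(\huaB_0(a))\huaB_1(p)$), writes out $R_1$ explicitly in these coordinates, and then verifies compatibility with $\frks$, $\frkt$, identities and composition one at a time, each a lengthy bare-hands computation. You instead factor $R$ through the strict $2$-group structure: since $\otimes=\cdot_\Phi$ and group inversion are functors on $\huaC$ by the Baez--Lauda correspondence, the only new ingredient is that $(\huaB,\huaB_0)$ is itself a functor, after which $R$ is visibly a composite of functors and all the axioms come for free. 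This buys you a shorter, more explanatory proof; the paper's approach buys self-containment, avoiding any appeal to the interchange law.

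Two remarks. First, your ``key lemma'' --- that $(\huaB,\huaB_0)$ is a functor --- is precisely the Proposition the paper states and proves immediately \emph{after} Theorem~\ref{categorical}; so you are in effect reorganising the paper's own results, promoting that Proposition to a lemma for the Theorem. Second, the paper's proof of that Proposition does \emph{not} use condition~(iii) of the Rota--Baxter crossed-module definition as you anticipate: with the parametrization $h=\Phi(a\cdot_G\huaB_0(a))p$, $k=\Phi(b\cdot_G\huaB_0(b))q$, the composition identity collapses to the bare Rota--Baxter identity for $\huaB_1$ on $H$, together with \eqref{crmo1}, \eqref{crmo2}, property~(ii), and the Rota--Baxter identity for $\huaB_0$. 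Condition~(iii) enters the Theorem only through Theorem~\ref{semirb}. Finally, your separate verification of source/target compatibility and identities is strictly redundant once the composite-of-functors argument is in place, but it does no harm.
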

\begin{proof}
We only need to prove that $R: \huaC\times\huaC\lon\huaC\times\huaC$ is a functor and $(\huaC_0, R_0), (\huaC_1, R_1)$ are both set-theoretical solutions of the Yang-Baxter equation.

For any $(e, x)\in G\times H$ and $(f, y)\in G\times H$, we can chose $a, b\in G$ and $p, q\in H$ such that
\begin{eqnarray}\label{001}
\left\{\begin{array}{rcl}
~~e&=&a, \\
~~x&=&\Phi(a\cdot_{G}\huaB_0(a))p, \\
~~f&=&\Ad_{\huaB_0(a)^{-1}}b, \\
~~y&=&\huaB_1(p)^{-1}\cdot_G\Phi(\huaB_0(a)^{-1}\cdot_G b\cdot_G\huaB_0(b))q\cdot_G\Phi(\Ad_{\huaB_0(a)^{-1}}b)\huaB_1(p)).
\end{array}\right.
\end{eqnarray}
Moreover, by the definition of $R_1$, it follows that
\begin{eqnarray*}
&&R_1\Big((a, \Phi(a\cdot_G\huaB_0(a))p), (\Ad_{\huaB_0(a)^{-1}}b, \huaB_1(p)^{-1}\cdot_G\Phi(\huaB_0(a)^{-1}\cdot_G b\cdot_G\huaB_0(b))q\cdot_G\\
&&\quad\quad\quad\quad\Phi(\Ad_{\huaB_0(a)^{-1}}b)\huaB_1(p))\Big)\\
&=&\Big((b, \Phi(b\cdot_G\huaB_0(b))q), \\
&&\quad(\Ad_{\huaB_0(b)^{-1}\cdot_G b^{-1}}a, \huaB_1(q)^{-1}\cdot_H q^{-1}\cdot_H\Phi(\huaB_0(b)^{-1}\cdot_H b^{-1}\cdot_G a\cdot_G\huaB_0(a))p\cdot_G\\
&&\quad\quad\quad\quad\quad\quad\Phi(\Ad_{\huaB_0(b)^{-1}\cdot_G b^{-1}}a)(q\cdot_H\huaB_1(q)))\Big).
\end{eqnarray*}

By \eqref{deficate1}, we have
\begin{eqnarray*}
&&(\frks\times\frks)R_1\Big((e, x), (f, y)\Big)\\
&=&(\frks\times\frks)R_1\Big((a, \Phi(a\cdot_G\huaB_0(a))p), \\
&&\quad\quad\quad\quad\quad(\Ad_{\huaB_0(a)^{-1}}b, \huaB_1(p)^{-1}\cdot_G\Phi(\huaB_0(a)^{-1}\cdot_G b\cdot_G\huaB_0(b))q\cdot_G\Phi(\Ad_{\huaB_0(a)^{-1}}b)\huaB_1(p))\Big)\\
&=&(b, \Ad_{\huaB_0(b)^{-1}\cdot_G b^{-1}}a)
\end{eqnarray*}
and
\begin{eqnarray*}
&&R_0((\frks\times\frks)\Big((e, x), (f, y)\Big))\\
&=&R_0((\frks\times\frks)\Big((a, \Phi(a\cdot_G\huaB_0(a))p), \\
&&\quad\quad\quad\quad\quad(\Ad_{\huaB_0(a)^{-1}}b, \huaB_1(p)^{-1}\cdot_G\Phi(\huaB_0(a)^{-1}\cdot_G b\cdot_G\huaB_0(b))q\cdot_G\Phi(\Ad_{\huaB_0(a)^{-1}}b)\huaB_1(p))\Big))\\
&=&(b, \Ad_{\huaB_0(b)^{-1}\cdot_G b^{-1}}a).
\end{eqnarray*}
Thus $(\frks\times\frks)R_1=R_0(\frks\times\frks)$.

By \eqref{crmo2}, we have
\begin{eqnarray*}
&&R_0\Big((\frkt\times\frkt)((e, x), (f, y))\Big)\\
&=& R_0\Big((\frkt\times\frkt)((a, \Phi(a\cdot_G\huaB_0(a))p)\\
 &&\quad\quad\quad (\Ad_{\huaB_0(a)^{-1}}b, \huaB_1(p)^{-1}\cdot_G\Phi(\huaB_0(a)^{-1}\cdot_G b\cdot_G\huaB_0(b))q\cdot_G\Phi(\Ad_{\huaB_0(a)^{-1}}b)\huaB_1(p)))\Big)\\
&=&R_0\Big(a\cdot_G\huaB_0(a)\cdot_G t(p)\cdot_G\huaB_0(a)^{-1}, \\
&&\quad\quad\quad t(\huaB_1(p))^{-1}\cdot_G\huaB_0(a)^{-1}\cdot_G b\cdot_G\huaB_0(b)\cdot_G t(q)\cdot_G\huaB_0(b)^{-1}\cdot_G\huaB_0(a)\cdot_G t(\huaB_1(p))\Big)\\
&=&\Big(b\cdot_G\huaB_0(b)\cdot_G t(q)\cdot_G\huaB_0(b)^{-1}, \Ad_{\huaB_0(t(q))^{-1}\cdot_G t(q)^{-1}\cdot_G\huaB_0(b)^{-1}\cdot_G b^{-1}}(a\cdot_G\huaB_0(a)\cdot_G t(p)\cdot_G\huaB_0(a)^{-1})\Big),
\end{eqnarray*}
and
\begin{eqnarray*}
&&(\frkt\times\frkt)R_1\Big((e, x), (f, y)\Big)\\
&=&(\frkt\times\frkt)R_1\Big((a, \Phi(a\cdot_G\huaB_0(a))p),\\
 &&\quad\quad\quad \quad (\Ad_{\huaB_0(a)^{-1}}b, \huaB_1(p)^{-1}\cdot_G\Phi(\huaB_0(a)^{-1}\cdot_G b\cdot_G\huaB_0(b))q\cdot_G\Phi(\Ad_{\huaB_0(a)^{-1}}b)\huaB_1(p))\Big)\\
&=&(\frkt\times\frkt)\Big((b, \Phi(b\cdot_G\huaB_0(b))q), \\
&&\quad\quad(\Ad_{\huaB_0(b)^{-1}\cdot_G b^{-1}}a, \huaB_1(q)^{-1}\cdot_H q^{-1}\cdot_H\Phi(\huaB_0(b)^{-1}\cdot_H b^{-1}\cdot_G a\cdot_G\huaB_0(a))p\cdot_G\\
&&\quad\quad\quad\quad\quad\Phi(\Ad_{\huaB_0(b)^{-1}\cdot_G b^{-1}}a)(q\cdot_H\huaB_1(q)))\Big)\\
&=&\Big(b\cdot_G\huaB_0(b)\cdot_G t(q)\cdot_G\huaB_0(b)^{-1},\Ad_{\huaB_0(t(q))^{-1}\cdot_G t(q)^{-1}\cdot_G\huaB_0(b)^{-1}\cdot_G b^{-1}}(a\cdot_G\huaB_0(a)\cdot_G t(p)\cdot_G\huaB_0(a)^{-1})\Big).
\end{eqnarray*}
Thus $(\frkt\times\frkt)R_1\Big((a, p), (b, q)\Big)=R_0\Big((\frkt\times\frkt)\Big((a, p), (b, q)\Big)\Big)$.

Moreover, for any $(a, b)\in \huaC_0\times\huaC_0$,
\begin{eqnarray*}
R_1(1_{(a, b)})&=&R_1(1_a, 1_b)=R_1((a, e_H), (b, e_H))\\
&=&\Big((\huaB_0(a)\cdot_G b\cdot_G\huaB_0(a)^{-1}, e_H), (\Ad_{(\huaB_0(\Ad_{\huaB_0(a)}b)^{-1})(\Ad_{\huaB_0(a)}b)^{-1}}a, e_H)\Big)\\
&=&1_{R_0(a, b)}.
\end{eqnarray*}

Let $(e', x')\in G\times H$ and $(f', y')\in G\times H$. There exists $(c, h)\in G\times H$ and $(d, k)\in G\times H$ such that
\begin{eqnarray}\label{002}
\left\{\begin{array}{rcl}
~~e'&=&c, \\
~~x'&=&\Phi(c\cdot_{G}\huaB_0(c))h, \\
~~f'&=&\Ad_{\huaB_0(c)^{-1}}d, \\
~~y'&=&\huaB_1(h)^{-1}\cdot_G\Phi(\huaB_0(c)^{-1}\cdot_G d\cdot_G\huaB_0(d))k\cdot_G\Phi(\Ad_{\huaB_0(c)^{-1}}d)\huaB_1(h)).
\end{array}\right.
\end{eqnarray}
Assume that $(\frkt\times\frkt)\Big((e', x'), (f', y')\Big)=(\frks\times\frks)\Big((e, x), (f, y)\Big)$, by \eqref{001} and \eqref{002}, we have
$$
c\cdot_G\huaB_0(c)\cdot_G t(h)\cdot_G\huaB_0(c)^{-1}=a, \quad d\cdot_G\huaB_0(d)\cdot_G t(k)\cdot_G\huaB_0(d)^{-1}=b.
$$
Denote by
\begin{equation*}
P=h\cdot_H\huaB_1(h)\cdot_H p\cdot_H\huaB_1(h)^{-1}, \quad Q=k\cdot_H\huaB_1(k)\cdot_H q\cdot_H\huaB_1(k)^{-1}.
\end{equation*}
By \eqref{deficateg2}, \eqref{001}, \eqref{002} and \eqref{crmo1}, it follows that
\begin{eqnarray*}
~(e, x)\circ (e', x')&=&(c, \Phi(c\cdot_G\huaB_0(c))P),\\
~(f, y)\circ (f', y')&=&\Big(\Ad_{\huaB_0(c)^{-1}}d, \huaB_1(p)^{-1}\cdot_H\Phi(\huaB_0(t(h))^{-1}\cdot_G\huaB_0(c)^{-1}\cdot_G d\cdot_G\huaB_0(d)\cdot_G t(k)\cdot_G\\
&&\huaB_0(t(k))q\cdot_H\Phi\Big(\Ad_{(\huaB_0(c)\cdot_G\huaB_0(t(h)))^{-1}}(d\cdot_G\huaB_0(d)\cdot_G t(k)\cdot_G\huaB_0(d)^{-1})\Big)\huaB_1(p)\\
&&\cdot_H\huaB_1(h)^{-1}\cdot_H\Phi(\huaB_0(c)^{-1}\cdot_G d\cdot_G\huaB_0(d))k\cdot_H\Phi(\Ad_{\huaB_0(c)^{-1}}d)B_1(h)\Big)\\
&=&\Big(\Ad_{\huaB_0(c)^{-1}}d, \huaB_1(P)^{-1}\cdot_G\Phi(\huaB_0(c)^{-1}\cdot_G d\cdot_G\huaB_0(d))Q\cdot_G\Phi(\Ad_{\huaB_0(c)^{-1}}d)\huaB_1(P))\Big).
\end{eqnarray*}
On the one hand, by \eqref{001} and \eqref{002},
\begin{eqnarray*}
&&R_1((e, x)\circ (e', x'), (f, y)\circ (f', y'))\\
&=&\Big((d, \Phi(d\cdot_G\huaB_0(d))Q), \Big(\Ad_{\huaB_0(d)^{-1}\cdot_G d^{-1}}c, \\
&&\huaB_1(Q)^{-1}\cdot_H Q^{-1}\cdot_H\Phi(\huaB_0(d)^{-1}\cdot_G d^{-1}\cdot_G c\cdot_G\huaB_0(c))P\cdot_H\Phi(\Ad_{\huaB_0(d)^{-1}\cdot_G d^{-1}}c)(Q\cdot_H\huaB_1(Q))\Big)\Big).
\end{eqnarray*}
On the other hand, by \eqref{001}, \eqref{002} and \eqref{crmo1},
\begin{eqnarray*}
&&R_1((e, x), (f, y))\circ R_1((e', x'), (f', y'))\\
&=&\Big((b, \Phi(b\cdot_G\huaB_0(b))q), \Big(\Ad_{\huaB_0(b)^{-1}\cdot_G b^{-1}}a, \\
&&\huaB_1(q)^{-1}\cdot_G q^{-1}\cdot_G\Phi(\huaB_0(b)^{-1}\cdot_G b^{-1}\cdot_G a\cdot_G\huaB_0(a))p\cdot_G\Phi(\Ad_{\huaB_0(b)^{-1}\cdot_G b^{-1}}a)(q\cdot_H\huaB_1(q))\Big)\Big)\\
&&\circ \Big((d, \Phi(d\cdot_G\huaB_0(d))k), \Big(\Ad_{\huaB_0(d)^{-1}\cdot_G d^{-1}}c, \\
&&\huaB_1(k)^{-1}\cdot_G k^{-1}\cdot_G\Phi(\huaB_0(d)^{-1}\cdot_G d^{-1}\cdot_G c\cdot_G\huaB_0(c))h\cdot_G\Phi(\Ad_{\huaB_0(d)^{-1}\cdot_G d^{-1}}c)(k\cdot_H\huaB_1(k))\Big)\Big)\\
&=&\Big((d, \Phi(d\cdot_G\huaB_0(d))Q),  \Big(\Ad_{\huaB_0(d)^{-1}\cdot_G d^{-1}}c, \\
&&\huaB_1(Q)^{-1}\cdot_H Q^{-1}\cdot_H\Phi(\huaB_0(d)^{-1}\cdot_G d^{-1}\cdot_G c\cdot_G\huaB_0(c))P\cdot_H\Phi(\Ad_{\huaB_0(d)^{-1}\cdot_G d^{-1}}c)(Q\cdot_H\huaB_1(Q))\Big)\Big).
\end{eqnarray*}
Thus $$R_1((e, x), (f, y))\circ R_1((e', x'), (f', y'))=R_1((e, x)\circ (e', x'), (f, y)\circ (f', y')), $$
when $$(\frkt\times\frkt)\Big((e', x'), (f', y')\Big)=(\frks\times\frks)\Big((e, x), (f, y)\Big).$$
We prove that $R=(R_1, R_0):\huaC\times\huaC\lon\huaC\times\huaC$ is a functor.

Since $(\huaC_1, \cdot_{\Phi})$ is a semi-direct product Lie group of $(G, \cdot_G)$ and $(H, \cdot_H)$, where
 $$
 (a, p)\cdot_{\Phi}(b, q)=(a\cdot_G b, p\cdot_G\Phi(a)q), \quad \forall (a, p), (b, q)\in\huaC_1,
 $$
by Theorem \ref{semirb}, $\huaB$ is a Rota-Baxter operator on $(\huaC_1, \cdot_\Phi)$. Due to $\huaB$ and $\huaB_0$ are Rota-Baxter operators on $\huaC_1$ and $\huaC_0$, by Theorem \ref{yangbaxtersolufromrb}, we have that $(\huaC_1, R_1)$ and $(\huaC_0, R_0)$ are both set-theoretical solution of the Yang-Baxter equation. Thus $R=(R_1, R_0):\huaC\times\huaC\lon\huaC\times\huaC$ is a categorical solution of the the Yang-Baxter equation.
\end{proof}

Moreover, we obtain another functor $F:\huaC\lon\huaC$ as following.
\begin{pro}
 With the above notations, $F=(\huaB, \huaB_0)$ is a functor from $\huaC$ to $\huaC$, where
$$
\huaB(a, p)=\Big(\huaB_0(a), \Phi(\huaB_0(a))\huaB_1(\Phi(\huaB_0(a)^{-1}\cdot_G a^{-1})p)\Big), \quad \forall a\in G, p\in H,
$$
\end{pro}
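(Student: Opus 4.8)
The plan is to verify directly the four defining conditions for $F=(\huaB,\huaB_0)$ to be a functor: compatibility with the source map $\frks$, with the target map $\frkt$, with identities, and with composition. On objects $F$ is $\huaB_0\colon\huaC_0=G\to G=\huaC_0$, and on morphisms it is $\huaB\colon\huaC_1=G\times H\to G\times H$, $\huaB(a,p)=(\huaB_0(a),\huaD(a,p))$ with $\huaD(a,p)=\Phi(\huaB_0(a))\huaB_1(\Phi(\huaB_0(a)^{-1}\cdot_G a^{-1})p)$ the map already appearing in Theorems \ref{semirb} and \ref{categorical}. The only tools needed are the crossed-module axioms \eqref{crmo1}, \eqref{crmo2} together with $t\circ\huaB_1=\huaB_0\circ t$, and the Rota-Baxter relations $\huaB_i(x)\cdot\huaB_i(y)=\huaB_i\big(x\cdot\huaB_i(x)\cdot y\cdot\huaB_i(x)^{-1}\big)$ for $i=0,1$, including their one-line consequence $\huaB_i(e)=e$ obtained by setting the second variable equal to the identity.

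Compatibility with $\frks$ is immediate since $\frks(\huaB(a,p))=\huaB_0(a)=\huaB_0(\frks(a,p))$, and compatibility with identities follows from $\huaB(a,e_H)=(\huaB_0(a),\Phi(\huaB_0(a))\huaB_1(e_H))=(\huaB_0(a),e_H)=1_{\huaB_0(a)}$ using $\Phi(g)e_H=e_H$ and $\huaB_1(e_H)=e_H$. For compatibility with $\frkt$ one must show $t(\huaD(a,p))\cdot_G\huaB_0(a)=\huaB_0(t(p)\cdot_G a)$; applying \eqref{crmo2} and $t\circ\huaB_1=\huaB_0\circ t$ rewrites the left-hand side as $\huaB_0(a)\cdot_G\huaB_0(z)$ with $z=\huaB_0(a)^{-1}\cdot_G a^{-1}\cdot_G t(p)\cdot_G a\cdot_G\huaB_0(a)$, and the Rota-Baxter relation for $\huaB_0$ collapses this to $\huaB_0\big(a\cdot_G\huaB_0(a)\cdot_G z\cdot_G\huaB_0(a)^{-1}\big)=\huaB_0(t(p)\cdot_G a)$, as required. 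In particular $\huaB(a,p)$ is a morphism $\huaB_0(a)\to\huaB_0(t(p)\cdot_G a)$ of $\huaC$.

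The one substantial step is functoriality on composition. Take composable $(a,p),(b,q)$, so $t(q)\cdot_G b=a$ and $(a,p)\circ(b,q)=(b,p\cdot_H q)$. By the target compatibility just shown, $\frkt(\huaB(b,q))=\huaB_0(t(q)\cdot_G b)=\huaB_0(a)=\frks(\huaB(a,p))$, so $\huaB(a,p)$ and $\huaB(b,q)$ are composable; both $\huaB((a,p)\circ(b,q))$ and $\huaB(a,p)\circ\huaB(b,q)$ then have first component $\huaB_0(b)$, and the claim reduces to the identity in $H$
\[
\huaD(a,p)\cdot_H\huaD(b,q)=\Phi(\huaB_0(b))\huaB_1\big(\Phi(\huaB_0(b)^{-1}\cdot_G b^{-1})(p\cdot_H q)\big).
\]
To prove it I would eliminate $a$ and $\huaB_0(a)$ via $\huaB_0(a)=t(\huaD(b,q))\cdot_G\huaB_0(b)$, push the resulting factors $\Phi(t(\cdot))$ through \eqref{crmo1} so that conjugations by $t$-images become conjugations inside $H$, and abbreviate $v=\Phi(\huaB_0(b)^{-1}\cdot_G b^{-1})p$, $w=\Phi(\huaB_0(b)^{-1}\cdot_G b^{-1})q$; the left-hand side then simplifies to $\Phi(\huaB_0(b))\big(\huaB_1(w)\cdot_H\huaB_1(\huaB_1(w)^{-1}\cdot_H w^{-1}\cdot_H v\cdot_H w\cdot_H\huaB_1(w))\big)$, and a single application of the Rota-Baxter relation for $\huaB_1$ (with the bracketed expression as the second variable) converts the inner product into $\huaB_1(v\cdot_H w)=\huaB_1(\Phi(\huaB_0(b)^{-1}\cdot_G b^{-1})(p\cdot_H q))$, which gives exactly the right-hand side. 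I expect this last computation — keeping track of the conjugation factors produced by \eqref{crmo1} while arranging the argument so that the Rota-Baxter relation applies cleanly — to be the main obstacle, though it is purely mechanical; the remaining parts are short.
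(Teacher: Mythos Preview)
Your proposal is correct and follows essentially the same route as the paper: both verify source, target, identities, and composition directly from the crossed-module axioms \eqref{crmo1}, \eqref{crmo2}, the relation $t\circ\huaB_1=\huaB_0\circ t$, and a single application of the Rota-Baxter identity for $\huaB_1$. The only organizational difference is that the paper performs the substitution $h=\Phi(a\cdot_G\huaB_0(a))p$, $k=\Phi(b\cdot_G\huaB_0(b))q$ up front so that $\huaD$ collapses to $\Phi(\huaB_0(\cdot))\huaB_1(\cdot)$ immediately, whereas you keep the original variables and introduce the equivalent abbreviations $v,w$ mid-computation; the underlying algebra (and in particular your reduction to $\Phi(\huaB_0(b))\big(\huaB_1(w)\cdot_H\huaB_1(\huaB_1(w)^{-1}w^{-1}vw\huaB_1(w))\big)=\Phi(\huaB_0(b))\huaB_1(v\cdot_H w)$) is identical to the paper's.
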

\begin{proof}
For any $a\in G$ and $p\in H$, since $(\huaB_1, \huaB_0)$ is a Rota-Baxter operator on $(H, G, t, \Phi)$, then
\begin{eqnarray*}
\frkt\Big(\huaB(a, p)\Big)&=&t\Big(\Phi(\huaB_0(a))\huaB_1(\Phi(\huaB_0(a)^{-1}\cdot_G a^{-1})p)\Big)\cdot_G\huaB_0(a)\\
&=&\huaB_0(a)\cdot_G t(\huaB_1(\Phi(\huaB_0(a)^{-1}\cdot_G a^{-1})p))\\
&=&\huaB_0(a)\cdot_G \huaB_0(t(\Phi(\huaB_0(a)^{-1}\cdot_G a^{-1})p))\\
&=&\huaB_0(t(p)\cdot_G a)\\
&=&\huaB_0(\frkt(a, p)),
\end{eqnarray*}
and
\begin{equation*}
\frks\Big(\huaB(a, p)\Big)=\huaB_0(a)=\huaB_0(\frks (a, p)).
\end{equation*}
Moreover, for any $(a, h)\in G\times H$ and $(b, k)\in G\times H$, there exists $p, q\in H$ such that
\begin{equation}\label{1234}
h=\Phi(a\cdot_{G}\huaB_0(a))p, \quad k=\Phi(b\cdot_{G}\huaB_0(b))q.
\end{equation}
Assume that $\frkt(b, k)=\frks(a, h)$, by \eqref{1234} and \eqref{crmo2},  it follows that $$t(q)=\huaB_0(b)^{-1}\cdot_G b^{-1}\cdot_G a\cdot_G\huaB_0(b).$$ On the one hand,
\begin{eqnarray*}
&&\huaB(a, h)\circ\huaB(b, k)\\
&=&\huaB(a, \Phi(a\cdot_G\huaB_0(a))p)\circ\huaB(b, \Phi(b\cdot_G\huaB_0(b))q)\\
&=&\Big(\huaB_0(a), \Phi(\huaB_0(a))\huaB_1(p)\Big)\circ\Big(\huaB_0(b), \Phi(\huaB_0(b))\huaB_1(q)\Big)\\
&=&\Big(\huaB_0(b), \Phi(\huaB_0(a))\huaB_1(p)\cdot_H\Phi(\huaB_0(b))\huaB_1(q)\Big)\\
&=&\Big(\huaB_0(b), \Phi(\huaB_0(b)\cdot_G\huaB_0(t(q)))\huaB_1(p)\cdot_H\Phi(\huaB_0(b))\huaB_1(q)\Big)\\
&=&\Big(\huaB_0(b), \Phi(\huaB_0(b))(\huaB_1(q)\cdot_H\huaB_1(p))\Big).
\end{eqnarray*}
On the other hand,
\begin{eqnarray*}
&&\huaB((a, h)\circ(b, k))\\
&=&\huaB((a, \Phi(a\cdot_G\huaB_0(a))p)\circ(b, \Phi(b\cdot_G\huaB_0(b))q))\\
&=&\huaB(b, \Phi(a\cdot_G\huaB_0(a))p\cdot_H\Phi(b\cdot_G\huaB_0(b))q)\\
&=&\Big(\huaB_0(b), \Phi(\huaB_0(b))\huaB_1(\Phi(t(q\cdot_H\huaB_1(q)))p\cdot_H q)\Big)\\
&=&\Big(\huaB_0(b), \Phi(\huaB_0(b))\huaB_1(q\cdot_H\huaB_1(q)\cdot_Hp\cdot_H\huaB_1(q)^{-1})\Big)\\
&=&\Big(\huaB_0(b), \Phi(\huaB_0(b))(\huaB_1(q)\cdot_H\huaB_1(p))\Big).
\end{eqnarray*}
Thus $\huaB(a, h)\circ\huaB(b, k)=\huaB((a, h)\circ(b, k)),$
  when $\frkt(b, k)=\frks(a, h)$.

Moreover, for any $a\in G$, one has $\huaB(1_a)=\huaB(a, e_H)=(\huaB_0(a), e_H)=1_{\huaB_0(a)}$. Thus $\huaB$ is a functor from $\huaC$ to $\huaC$.
\end{proof}

\section{Rota-Baxter operators on crossed modules of Lie algebras and its applications}

In this section, first, we recall Rota-Baxter operators on Lie algebras and crossed modules of Lie algebras. Then the notion of Rota-Baxter operators on crossed modules of Lie algebras is defined, and its properties are established.

\begin{defi}\rm(\cite{BGN, BD})
Let $(\g, [\cdot, \cdot]_\g)$ be a Lie algebra. A linear map $B:\g\lon\g$ is called a Rota-Baxter operator of weight $1$ if
\begin{equation*}
[B(x), B(y)]_\g=B([B(x), y]_\g+[x, B(y)]_\g+[x, y]_\g), \quad \forall x, y\in\g.
\end{equation*}
\end{defi}
\begin{rmk}
In the sequel, we call Rota-Baxter operators of weight $1$ simply Rota-Baxter operators without ambiguity.
\end{rmk}

\begin{ex}
Let $(\g, [\cdot, \cdot]_\g)$ be a Lie algebra. Then $B=-\Id$ is a Rota-Baxter operator on the Lie algebra $(\g, [\cdot, \cdot]_\g)$.
\end{ex}

\begin{defi}
Let $B$ and $B'$ be Rota-Baxters on Lie algebras $(\g, [\cdot, \cdot]_\g)$ and $(\h, [\cdot,\cdot]_\h)$ respectively. A homomorphism from $B'$ to $B$ is a Lie algebra homomorphism $\psi:\h\lon\g$ such that $\psi\circ B'=B\circ\psi$.
\end{defi}
\begin{defi}
A crossed module of Lie algebras is a quadruple $(\h, \g, \dM t, \phi)$, where $(\h, [\cdot, \cdot]_\h)$ and $(\g, [\cdot, \cdot]_\g)$ are Lie algebras, $\dM t:\h\lon\g$ is a Lie algebra homomorphism and $\phi:\g\lon\Der(\h)$ is a Lie algebra homomorphism, such that
\begin{equation*}
\dM t(\phi(x)u)=[x, \dM t(u)]_\g, \quad \phi(\dM t(u))v=[u, v]_\h, \quad \forall x\in\g, ~~u, v\in\h.
\end{equation*}
\end{defi}

Let $(\h, \g, \dM t, \phi)$ be a crossed module of Lie algebras. We consider semi-direct product Lie algebras $(\g\ltimes\g, [\cdot, \cdot]_{\g\ltimes\g})$ and $(\h\ltimes\h, [\cdot, \cdot]_{\h\ltimes\h})$, where
$$
\g\ltimes\g=\g\oplus\g, \quad \h\ltimes\h=\h\oplus\h,
$$
and
\begin{eqnarray}
\label{defi-g}[(x_1, y_1), (x_2, y_2)]_{\g\ltimes\g}&=&([x_1, x_2]_{\g}, [x_1, y_2]_{\g}+[y_1, x_2]_{\g}+[y_1, y_2]_\g), \quad \forall x_1, x_2, y_1, y_2\in\g,\\
\label{defi-h}~[(u_1, v_1), (u_2, v_2)]_{\h\ltimes\h}&=&([u_1, u_2]_{\h}, [u_1, v_2]_{\h}+[v_1, u_2]_{\h}+[v_1, v_2]_\h), \quad \forall u_1, u_2, v_1, v_2\in\h.
\end{eqnarray}
Define a linear map $\tilde{\phi}:\g\ltimes\g\lon\gl(\h\ltimes\h)$ by
\begin{equation}\label{eq-rep}
\tilde{\phi}(x, y)(u, v)=(\phi(x)u, \phi(y)u+\phi(y)v+\phi(x)v), \quad \forall x, y\in\g, ~~u, v\in\h.
\end{equation}
Then we have the following conclusion.

\begin{pro}\label{semicross}
With the above notations, $(\h\ltimes\h, \g\ltimes\g, (\dM t, \dM t), \tilde{\phi})$ is a crossed module of Lie algebras.
\end{pro}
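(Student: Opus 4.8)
The plan is to verify directly the four defining conditions of a crossed module of Lie algebras, in parallel with the group case of Proposition~\ref{croslie}. First, observe that the brackets \eqref{defi-g} and \eqref{defi-h} are exactly the brackets of the semi-direct products $\g\ltimes_{\ad}\g$ and $\h\ltimes_{\ad}\h$ (the adjoint action of a Lie algebra on itself by inner derivations, since $[y_1,x_2]_\g=-[x_2,y_1]_\g$), so $(\g\ltimes\g,[\cdot,\cdot]_{\g\ltimes\g})$ and $(\h\ltimes\h,[\cdot,\cdot]_{\h\ltimes\h})$ are genuine Lie algebras; moreover $(\dM t,\dM t)$ and each $\tilde\phi(x,y)$ are linear, and $(x,y)\mapsto\tilde\phi(x,y)$ is linear.

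Next I would check that $(\dM t,\dM t)$ is a Lie algebra homomorphism. Since $\dM t\colon\h\to\g$ is one, applying it slotwise in \eqref{defi-h} and comparing with \eqref{defi-g} gives $(\dM t,\dM t)\big[(u_1,v_1),(u_2,v_2)\big]_{\h\ltimes\h}=\big[(\dM t(u_1),\dM t(v_1)),(\dM t(u_2),\dM t(v_2))\big]_{\g\ltimes\g}$ immediately.

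The core of the argument is that $\tilde\phi$ takes values in $\Der(\h\ltimes\h)$ and is a Lie algebra homomorphism. For the derivation property I would expand $\tilde\phi(x,y)\big[(u_1,v_1),(u_2,v_2)\big]_{\h\ltimes\h}$ using \eqref{eq-rep} and \eqref{defi-h} and the fact that $\phi(x),\phi(y)\in\Der(\h)$: the first components agree at once by the Leibniz rule for $\phi(x)$, while both sides of the second component reduce — after expanding the three weight-one cross terms of \eqref{defi-h} with the Leibniz rules for $\phi(x)$ and $\phi(y)$ — to the same sum of fourteen terms of the shape $[\phi(\bullet)(\bullet),(\bullet)]_{\h}$. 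For the homomorphism property $\tilde\phi\big(\big[(x_1,y_1),(x_2,y_2)\big]_{\g\ltimes\g}\big)=\tilde\phi(x_1,y_1)\tilde\phi(x_2,y_2)-\tilde\phi(x_2,y_2)\tilde\phi(x_1,y_1)$ on $\h\ltimes\h$, I would expand both sides and use $\phi([x_i,x_j]_\g)=\phi(x_i)\phi(x_j)-\phi(x_j)\phi(x_i)$ together with bilinearity of $\phi$. I expect this bookkeeping — tracking the weight-one cross terms in \eqref{defi-g}, \eqref{defi-h}, \eqref{eq-rep} — to be the only real obstacle; nothing beyond the crossed-module axioms of $(\h,\g,\dM t,\phi)$ is used.

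Finally I would verify the two compatibility identities. Using $\dM t(\phi(x)u)=[x,\dM t(u)]_\g$ in each slot, $(\dM t,\dM t)\big(\tilde\phi(x,y)(u,v)\big)=\big([x,\dM t(u)]_\g,\,[y,\dM t(u)]_\g+[y,\dM t(v)]_\g+[x,\dM t(v)]_\g\big)=\big[(x,y),(\dM t(u),\dM t(v))\big]_{\g\ltimes\g}$; and using $\phi(\dM t(u))w=[u,w]_\h$, $\tilde\phi\big((\dM t(u),\dM t(v))\big)(u',v')=\big([u,u']_\h,\,[v,u']_\h+[v,v']_\h+[u,v']_\h\big)=\big[(u,v),(u',v')\big]_{\h\ltimes\h}$. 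As a conceptual shortcut and an independent check, one may note that $(x,y)\mapsto(x,x+y)$ on $\g\ltimes\g$ and $(u,v)\mapsto(u,u+v)$ on $\h\ltimes\h$ are Lie algebra isomorphisms onto the direct sums $\g\oplus\g$ and $\h\oplus\h$, carrying $(\dM t,\dM t)$ to $\dM t\oplus\dM t$ and $\tilde\phi$ to the direct-sum representation $\phi\oplus\phi$; hence the quadruple in question is isomorphic to the direct sum of two copies of the crossed module $(\h,\g,\dM t,\phi)$, and the assertion follows since a direct sum of crossed modules of Lie algebras is again one (each defining condition passes componentwise).
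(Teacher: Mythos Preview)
Your direct verification is correct and parallels the paper's own proof, which in fact carries out only the $\tilde\phi$ part explicitly (derivation property and homomorphism property) and leaves the remaining checks to the reader; your expansions match those in the paper term for term, including the fourteen-term count in the second component.

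The genuinely new ingredient in your write-up is the conceptual shortcut at the end: the observation that $(x,y)\mapsto(x,x+y)$ and $(u,v)\mapsto(u,u+v)$ are Lie algebra isomorphisms $\g\ltimes\g\cong\g\oplus\g$ and $\h\ltimes\h\cong\h\oplus\h$ intertwining $(\dM t,\dM t)$ with $\dM t\oplus\dM t$ and $\tilde\phi$ with $\phi\oplus\phi$. This is not in the paper, and it is a cleaner route: it reduces the whole proposition to the trivial fact that a direct sum of two copies of a crossed module is a crossed module, bypassing all the bookkeeping. The paper's brute-force expansion has the virtue of being self-contained and of making the structure of $\tilde\phi$ visible, but your isomorphism argument is both shorter and more illuminating, and it explains \emph{why} the particular weight-one cross terms in \eqref{defi-g}, \eqref{defi-h}, \eqref{eq-rep} arise. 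Either argument alone suffices; having both is a nice sanity check.
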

\begin{proof}
We only proof that $\tilde{\phi}:\g\ltimes\g\lon\Der(\h\ltimes\h)$ is a Lie algebra homomorphism and leave other to interested readers. By \eqref{eq-rep}, for any $x, y\in\g, u_1, u_2, v_1, v_2\in\h$, we have
\begin{eqnarray*}
&&\tilde{\phi}(x, y)[(u_1, v_1), (u_2, v_2)]_{\h\ltimes\h}\\
&=&\tilde{\phi}(x, y)([u_1, u_2]_\h, [u_1, v_2]_\h+[v_1, u_2]_\h+[v_1, v_2]_\h)\\
&=&\Big(\phi(x)[u_1, u_2]_\h, \phi(y)[u_1, u_2]_\h+\phi(y)[u_1, v_2]_\h+\phi(y)[v_1, u_2]_\h+\phi(y)[v_1, v_2]_\h\\
&&+\phi(x)[u_1, v_2]_\h+\phi(x)[v_1, u_2]_\h+\phi(x)[v_1, v_2]_\h\Big)\\
&=&\Big([\phi(x)u_1, u_2]_\h, [\phi(y)u_1, u_2]_\h+[\phi(y)u_1, v_2]_\h+[\phi(y)v_1, u_2]_\h+[\phi(y)v_1, v_2]_\h\\
&&+[\phi(x)u_1, v_2]_\h+[\phi(x)v_1, u_2]_\h+[\phi(x)v_1, v_2]_\h\Big)\\
&&+\Big([u_1, \phi(x)u_2]_\h, [u_1, \phi(y)u_2]_\h+[u_1, \phi(y)v_2]_\h+[v_1, \phi(y)u_2]_\h+[v_1, \phi(y)v_2]_\h\\
&&+[u_1, \phi(x)v_2]_\h+[v_1, \phi(x)u_2]_\h+[v_1, \phi(x)v_2]_\h\Big)\\
&=&[(\phi(x)u_1, \phi(y)u_1+\phi(y)v_1+\phi(x)v_1), (u_2, v_2)]_{\h\ltimes\h}\\
&&+[(u_1, u_2), (\phi(x)u_2, \phi(y)u_2+\phi(y)v_2+\phi(x)v_2)]_{\h\ltimes\h}\\
&=&[\tilde{\phi}(x, y)(u_1, v_1), (u_2, v_2)]_{\h\ltimes\h}+[(u_1, v_1), \tilde{\phi}(x, y)(u_2, v_2)]_{\h\ltimes\h}.
\end{eqnarray*}
Moreover, by \eqref{defi-g}, for any $x_1, x_2, y_1, y_2\in\g, u, v\in\h$, we obtain
\begin{eqnarray*}
&&\tilde{\phi}([(x_1, y_1), (x_2, y_2)]_{\g\ltimes\g})(u, v)\\
&=&\tilde{\phi}([x_1, x_2]_\g, [x_1, y_2]_\g+[y_1, x_2]_\g+[y_1, y_2]_\g)(u, v)\\
&=&\Big(\phi([x_1, x_2]_\g)u, \phi([x_1, x_2]_\g)v+\phi([x_1, y_2]_\g)v+\phi([y_1, x_2]_\g)v+\phi([y_1, y_2]_\g)v\\
&&+\phi([x_1, y_2]_\g)u+\phi([y_1, x_2]_\g)u+\phi([y_1, y_2]_\g)u\Big)\\
&=&\Big(\phi(x_1)\phi(x_2)u, \phi(x_1)\phi(x_2)v+\phi(x_1)\phi(y_2)u+\phi(x_1)\phi(y_2)v\\
&&+\phi(y_1)\phi(x_2)u+\phi(y_1)\phi(x_2)v+\phi(y_1)\phi(y_2)u+\phi(y_1)\phi(y_2)v\Big)\\
&&-\Big(\phi(x_2)\phi(x_1)u, \phi(x_2)\phi(x_1)v+\phi(y_2)\phi(x_1)u+\phi(y_2)\phi(x_1)v\\
&&+\phi(x_2)\phi(y_1)u+\phi(x_2)\phi(y_1)v+\phi(y_2)\phi(y_1)u+\phi(y_2)\phi(y_1)v\Big)\\
&=&\tilde{\phi}(x_1, y_1)\tilde{\phi}(x_2, y_2)(u, v)-\tilde{\phi}(x_2, y_2)\tilde{\phi}(x_1, y_1)(u, v)\\
&=&[\tilde{\phi}(x_1, y_1), \tilde{\phi}(x_2, y_2)](u, v).
\end{eqnarray*}
Thus $\tilde{\phi}:\g\ltimes\g\lon\Der(\h\ltimes\h)$ is a Lie algebra homomorphism.
\emptycomment{
By \eqref{defi-h}, for any $u_1, u_2, v_1, v_2\in\h$,  we have
\begin{eqnarray*}
&&(\dM t, \dM t)([(u_1, v_1), (u_2, v_2)]_{\h\ltimes\h})\\
&=&(\dM t, \dM t)([u_1, u_2]_\h, [u_1, v_2]_\h+[v_1, u_2]_\h+[v_1, v_2]_\h)\\
&=&([\dM t(u_1), \dM t(u_2)]_\h, [\dM t(u_1), \dM t(v_2)]_\h+[\dM t(v_1), \dM t(u_2)]_\h+[\dM t(v_1), \dM t(v_2)]_\h)\\
&=&[(\dM t(u_1), \dM t(v_1)), (\dM t(u_2), \dM t(v_2))]_{\h\ltimes\h},
\end{eqnarray*}
which implies that $(\dM t, \dM t):\h\ltimes\h\lon\g\ltimes\g$ is a Lie algebra homomorphism.
Since $(\h, \g, \dM t, \phi)$ is a crossed module of Lie algebras, then we have
\begin{eqnarray*}
&&[(x, y), (\dM t(u), \dM t(v))]_{\g\ltimes\g}\\
&=&\Big([x, \dM t(u)]_\g, [y, \dM t(v)]_\g+[x, \dM t(v)]_\g+[y, \dM t(u)]_\g\Big)\\
&=&\Big(\dM t(\phi(x)u), \dM t(\phi(x)v)+\dM t(\phi(y)u)+\dM t(\phi(y)v)\Big)\\
&=&(\dM t, \dM t)(\phi(x)u, \phi(x)v+\phi(y)u+\phi(y)v)\\
&=&(\dM t, \dM t)\tilde{\Phi}(x, y)(u, v),
\end{eqnarray*}
and
\begin{eqnarray*}
&&\tilde{\phi}((\dM t, \dM t)(u_1, v_1))(u_2, v_2)\\
&=&\tilde{\phi}(\dM t(u_1), \dM t(v_1))(u_2, v_2)\\
&=&\Big(\phi(\dM t(u_1))u_2, \phi(\dM t(u_1))v_2+\phi(\dM t(v_1))u_2+\phi(\dM t(v_1))v_2\Big)\\
&=&\Big([u_1, u_2]_\h, [u_1, v_2]_\h+[v_1, u_2]_\h+[v_1, v_2]_\h\Big)\\
&=&[(u_1, u_2), (v_1, v_2)]_{\h\ltimes\h}.
\end{eqnarray*}
Therefore, the quadruple $(\h\ltimes\h, \g\ltimes\g, (\dM t, \dM t), \tilde{\phi})$ is a crossed module of Lie algebras.
}
\end{proof}

\begin{defi}
Let $(\h, \g, \dM t, \phi)$ be a crossed module of Lie algebras. A Rota-Baxter operator on $(\h, \g, \dM t, \phi)$ is a pair $(B_1, B_0)$, where linear maps $B_1: \h\lon\h$ and $B_0:\g\lon\g$ satisfy
\begin{itemize}
\item [\rm(i)] $B_1$ and $B_0$ are Rota-Baxter operators on $\h$ and $\g$ respectively,
\item [\rm(ii)]the linear map $\dM t$ is a homomorphism from $B_1$ to $B_0$, i.e. $\dM t\circ B_1=B_0\circ\dM t$,
\item [\rm(iii)] $\phi(B_0(x))B_1(u)=B_1\Big(\phi(x)B_1(u)+\phi(B_0(x))u+\phi(x)u\Big), \quad \forall x\in\g, u\in\h.$
\end{itemize}
\end{defi}

\begin{ex}
Let $(\h, \g, \dM t, \phi)$ be a crossed module of Lie algebras. Then $(-\Id, -\Id)$ is a Rota-Baxter operator on $(\h, \g, \dM t, \phi)$.
\end{ex}

\begin{ex}
Let $\g$ be a Lie algebra and $B$ be a Rota-Baxter operator on $\g$. Then $(B, B)$ is a Rota-Baxter operator on the crossed modules of Lie algebras $(\g, \g, \dM t, \ad)$, where $\dM t(x)=x$.
\end{ex}

In the following, we will character Rota-Baxter operators on crossed modules of Lie algebras as crossed module of Lie algebras.
\begin{thm}
Let $(\h, \g, \dM t, \phi)$ be a crossed module of Lie algebras and $B_1:\h\lon\h, B_0:\g\lon\g$ be linear maps. Then the pair $(B_1, B_0)$ is a Rota-Baxtr operator on $(\h, \g, \dM t, \phi)$ if and only if $(\mathrm{Gr}(B_1), \mathrm{Gr}(B_0), (\dM t, \dM t), \tilde{\phi})$ is a crossed module of Lie algebras, where
$$
\mathrm{Gr}(B_1)=\{(B_1(u), u)|\forall u\in\h\}\subset\h\ltimes\h, \quad \mathrm{Gr}(B_0)=\{(B_0(x), x)| \forall x\in\g\}\subset\g\ltimes\g.
$$
$$
(\dM t, \dM t): \mathrm{Gr}(B_1)\lon \mathrm{Gr}(B_0), \quad (\dM t, \dM t)(\huaB_1(u), u)=(\dM t(\huaB_1(u)), \dM t(u)),
$$
and $\tilde{\phi}:\mathrm{Gr}(B_0)\lon\Der(\mathrm{Gr}(B_1))$ defined by
$$
\tilde{\phi}(x, y)(u, v)=(\phi(x)u, \phi(y)u+\phi(y)v+\phi(x)v), \quad \forall x, y\in\g, ~~u, v\in\h.
$$
\end{thm}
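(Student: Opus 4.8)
The plan is to mirror the proof of Theorem \ref{graphB}, replacing the group-level facts about graphs of Rota-Baxter operators by their Lie-algebra counterparts and using Proposition \ref{semicross} in place of Proposition \ref{croslie}. The starting point is the well-known (and immediate) characterization of Rota-Baxter operators of weight $1$: a linear map $B:\g\lon\g$ is a Rota-Baxter operator if and only if its graph $\mathrm{Gr}(B)=\{(B(x),x)\mid x\in\g\}$ is a Lie subalgebra of the semi-direct product $\g\ltimes\g$ with bracket \eqref{defi-g}, since
$$
[(B(x_1),x_1),(B(x_2),x_2)]_{\g\ltimes\g}=\Big([B(x_1),B(x_2)]_\g,\ [B(x_1),x_2]_\g+[x_1,B(x_2)]_\g+[x_1,x_2]_\g\Big),
$$
and the analogous identity holds for $\h$ and $B_1$. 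Thus condition (i) in the definition of a Rota-Baxter operator on $(\h,\g,\dM t,\phi)$ is equivalent to $\mathrm{Gr}(B_1)\subset\h\ltimes\h$ and $\mathrm{Gr}(B_0)\subset\g\ltimes\g$ both being Lie subalgebras.

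For the forward direction, assume $(B_1,B_0)$ is a Rota-Baxter operator on $(\h,\g,\dM t,\phi)$. Then $\mathrm{Gr}(B_1)$, $\mathrm{Gr}(B_0)$ are Lie subalgebras by the previous remark. Condition (ii) says precisely that $(\dM t,\dM t)(B_1(u),u)=(B_0(\dM t(u)),\dM t(u))\in\mathrm{Gr}(B_0)$, so the restriction of the homomorphism $(\dM t,\dM t):\h\ltimes\h\lon\g\ltimes\g$ is a Lie algebra homomorphism $\mathrm{Gr}(B_1)\lon\mathrm{Gr}(B_0)$. Evaluating $\tilde\phi$ from \eqref{eq-rep} on graph elements gives
$$
\tilde\phi(B_0(x),x)(B_1(u),u)=\Big(\phi(B_0(x))B_1(u),\ \phi(x)B_1(u)+\phi(x)u+\phi(B_0(x))u\Big),
$$
and condition (iii) is exactly the assertion that the first component equals $B_1$ applied to the second, so $\tilde\phi(\mathrm{Gr}(B_0))\mathrm{Gr}(B_1)\subseteq\mathrm{Gr}(B_1)$. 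Since $\tilde\phi$ is already a Lie algebra homomorphism into $\Der(\h\ltimes\h)$ by Proposition \ref{semicross} and $\mathrm{Gr}(B_1)$ is an invariant subalgebra, the restriction is a Lie algebra homomorphism $\mathrm{Gr}(B_0)\lon\Der(\mathrm{Gr}(B_1))$. Finally, the two crossed-module identities $\dM t(\tilde\phi(X)U)=[X,(\dM t,\dM t)U]_{\g\ltimes\g}$ and $\tilde\phi((\dM t,\dM t)U)V=[U,V]_{\h\ltimes\h}$ for $X\in\mathrm{Gr}(B_0)$, $U,V\in\mathrm{Gr}(B_1)$ are restrictions of the corresponding identities for the crossed module of Proposition \ref{semicross}, hence automatic. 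This shows $(\mathrm{Gr}(B_1),\mathrm{Gr}(B_0),(\dM t,\dM t),\tilde\phi)$ is a crossed module of Lie algebras.

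For the converse, assume $(\mathrm{Gr}(B_1),\mathrm{Gr}(B_0),(\dM t,\dM t),\tilde\phi)$ is a crossed module of Lie algebras. Then $\mathrm{Gr}(B_1)$ and $\mathrm{Gr}(B_0)$ are Lie subalgebras, so by the first paragraph (i) holds. Well-definedness of $(\dM t,\dM t):\mathrm{Gr}(B_1)\lon\mathrm{Gr}(B_0)$ forces $(\dM t(B_1(u)),\dM t(u))\in\mathrm{Gr}(B_0)$ for all $u\in\h$, i.e. $\dM t(B_1(u))=B_0(\dM t(u))$, which is (ii). Since $\tilde\phi$ is valued in $\Der(\mathrm{Gr}(B_1))$, the displayed formula for $\tilde\phi(B_0(x),x)(B_1(u),u)$ must lie in $\mathrm{Gr}(B_1)$, and comparing its two components yields (iii). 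Hence $(B_1,B_0)$ is a Rota-Baxter operator on $(\h,\g,\dM t,\phi)$, completing the proof.

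The argument is entirely parallel to Theorem \ref{graphB}, and in fact simpler since all maps are linear; the one point worth spelling out — and it is routine — is that the structure maps from Proposition \ref{semicross} restrict to the graphs, namely that a derivation of $\h\ltimes\h$ preserving the subalgebra $\mathrm{Gr}(B_1)$ restricts to a derivation of $\mathrm{Gr}(B_1)$ and that the homomorphism property of $\tilde\phi$ is inherited on the subalgebra $\mathrm{Gr}(B_0)$. Beyond that bookkeeping, the whole proof reduces to the two component comparisons displayed above, so I do not anticipate any genuine obstacle.
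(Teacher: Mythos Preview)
Your proof is correct and follows essentially the same approach as the paper's: both directions hinge on the graph characterization of Rota-Baxter operators (which the paper cites as \cite[Proposition 2.5]{JSZ} and you derive directly), the observation that $(\dM t,\dM t)$ and $\tilde\phi$ restrict to the graphs precisely when conditions (ii) and (iii) hold, and the fact that the crossed-module identities descend from the ambient structure of Proposition \ref{semicross}. The only cosmetic difference is that the paper verifies the two crossed-module identities on graph elements by explicit computation, whereas you (correctly and more efficiently) note they are inherited by restriction.
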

\begin{proof}
On the one hand, suppose that $(B_1, B_0)$ is a Rota-Baxter operator on $(\h, \g, \dM t, \phi)$, by \cite[Proposition 2.5]{JSZ}, it follows that $\mathrm{Gr}(B_1), \mathrm{Gr}(B_0)$ are Lie subalgebras of $(\h\ltimes\h, [\cdot, \cdot]_{\h\ltimes\h})$ and $(\g\ltimes\g, [\cdot, \cdot]_{\g\ltimes\g})$ respectively.

By Proposition \ref{semicross}, $(\dM t, \dM t)$ is a Lie algebra homomorphism from $(\h\ltimes\h, [\cdot, \cdot]_{\h\ltimes\h})$ to $(\g\ltimes\g, [\cdot, \cdot]_{\g\ltimes\g})$, we obtain
$$\forall u\in\h, \quad (\dM t, \dM t)(B_1(u), u)=(\dM_t(B_1(u)), \dM t(u))=(B_0(\dM t(u)), \dM t(u))\in \mathrm{Gr}(B_0),$$
which implies that $(\dM t, \dM t)$ is a Lie algebra homomorphism from $\mathrm{Gr}(B_1)$ to $\mathrm{Gr}(B_0)$.

Since $(B_1, B_0)$ is a Rota-Baxter operator on $(\h, \g, \dM t, \phi)$, then
\begin{eqnarray*}
&&\tilde{\phi}(B_0(x), x)(B_1(u), u)\\
&=&(\phi(B_0(x))B_1(u), \phi(B_0(x))u+\phi(x)B_1(u)+\phi(x)u)\\
&=&\Big(B_1(\phi(B_0(x))u+\phi(x)B_1(u)+\phi(x)u), \phi(B_0(x))u+\phi(x)B_1(u)+\phi(x)u\Big)\in\mathrm{Gr}(B_1),
\end{eqnarray*}
thus $\tilde{\phi}$ is an action of $\mathrm{Gr}(B_0)$ on $\mathrm{Gr}(B_1)$ by Proposition \ref{semicross}.

As $(\h, \g, \dM t, \phi)$ is a crossed module of Lie algebras, we have
\begin{eqnarray*}
\tilde{\phi}\Big((\dM t, \dM t)(B_1(u), u)\Big)(B_1(v), v)&=&(\phi\Big(\dM t(B_1(u))\Big)B_1(v), \phi(\dM t(u))B_1(v)+\phi(\dM t(u))v+\phi\Big(\dM t(B_1(u))\Big)v)\\
&=&([B_1(u), B_1(v)]_\h, [u, B_1(v)]_\h+[u, v]_\h+[B_1(u), v]_\h)\\
&=&[(B_1(u), u), (B_1(v), v)]_{\h\ltimes\h},
\end{eqnarray*}
and
\begin{eqnarray*}
(\dM t, \dM t)\Big(\tilde{\phi}(B_0(x), x)(B_1(u), u)\Big)&=&(\dM t, \dM t)\Big(\phi(B_0(x))B_1(u), \phi(B_0(x))u+\phi(x)B_1(u)+\phi(x)u\Big)\\
&=&([B_0(x), B_0(\dM t(u))]_\g, [B_0(x), \dM t(u)]_\g+[x, \dM t(B_1(u))]_\g+[x, \dM t(u)]_\g)\\
&=&[(B_0(x), x), (B_0(\dM t(u)), \dM t(u))]_{\h\ltimes\h}\\
&=&[(B_0(x), x), (\dM t(B_1(u)), \dM t(u))]_{\h\ltimes\h}.
\end{eqnarray*}
Thus $(\mathrm{Gr}(B_1), \mathrm{Gr}(B_0), (\dM t, \dM t), \tilde{\phi})$ is a crossed module of Lie algebras.

On the other hand, suppose that $(\mathrm{Gr}(B_1), \mathrm{Gr}(B_0), (\dM t, \dM t), \tilde{\phi})$ is a crossed module of Lie algebras, by \cite[Proposition 2.5]{JSZ}, we obtain that $B_1:\h\lon\h, B_0:\g\lon\g$ are Rota-Baxter operators.

Moreover, by the fact that $(\dM t, \dM t): \mathrm{Gr}(B_1)\lon\mathrm{Gr}(B_0)$ is a Lie algebra homomorphism, it follows $\dM t\circ B_1=B_0\circ\dM t$.

Since $\tilde{\phi}(\mathrm{Gr}(B_0))\mathrm{Gr}(B_1)\subseteq\mathrm{Gr}(B_1)$, then
\begin{eqnarray*}
\tilde{\phi}(B_0(x), x)(B_1(u), u)
&=&\Big(\phi(B_0(x))B_1(u), \phi(B_0(x))u+\phi(x)B_1(u)+\phi(x)u\Big)\in\mathrm{Gr}(B_1).
\end{eqnarray*}
which implies $\phi(B_0(x))B_1(u)=B_1\Big(\phi(B_0(x))u+\phi(x)B_1(u)+\phi(x)u\Big)$. Thus $(B_1, B_0)$ is a Rota-Baxter operator on $(\h, \g, \dM t, \phi)$.
\end{proof}

Parallel to Proposition \ref{dg}, Rota-Baxter operators on crossed modules of Lie algebras descend crossed modules of Lie algebras.
\begin{pro}\label{promath}
Let $(B_1, B_0)$ be a Rota-Baxter operator on a crossed module of Lie algebras $(\h, \g, \dM t, \phi)$. Then $\Big((\h, [\cdot, \cdot]_{B_1}), (\g, [\cdot, \cdot]_{B_0}), \dM t, \hat{\phi}\Big)$ is a crossed module of Lie algebras, where
\begin{eqnarray*}
~~[u, v]_{B_1}&=&[B_1(u), v]_\h+[u, B_1(v)]_\h+[u, v]_\h, \quad \forall u, v\in\h,\\
~~[x, y]_{B_0}&=&[B_0(x), y]_\g+[x, B_0(y)]_\g+[x, y]_\g, \quad \forall x, y\in\g,\\
~~\hat{\phi}(x)u&=&\phi(x)B_1(u)+\phi(B_0(x))u+\phi(x)u, \quad \forall x\in\g, ~~u\in\h.
\end{eqnarray*}
\end{pro}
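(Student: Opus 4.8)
The plan is to mirror, in the infinitesimal setting, the proof of Proposition \ref{dg}; since everything is now linear the computations are considerably shorter. First I would recall from \cite[Proposition 2.5]{JSZ} that a Rota-Baxter operator of weight $1$ on a Lie algebra descends a new Lie bracket on the underlying vector space, so that $(\h, [\cdot,\cdot]_{B_1})$ and $(\g, [\cdot,\cdot]_{B_0})$ are indeed Lie algebras. Next, using that $\dM t$ is a homomorphism for the original brackets together with condition (ii), $\dM t\circ B_1=B_0\circ\dM t$, a one-line computation gives $\dM t([u,v]_{B_1})=[B_0(\dM t u),\dM t v]_\g+[\dM t u,B_0(\dM t v)]_\g+[\dM t u,\dM t v]_\g=[\dM t u,\dM t v]_{B_0}$ for all $u,v\in\h$, so $\dM t\colon(\h,[\cdot,\cdot]_{B_1})\lon(\g,[\cdot,\cdot]_{B_0})$ is a Lie algebra homomorphism of the descended structures.

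The heart of the argument is to show that $\hat\phi\colon\g\lon\gl(\h)$ is a Lie algebra homomorphism from $(\g,[\cdot,\cdot]_{B_0})$ to $\Der(\h,[\cdot,\cdot]_{B_1})$. For the derivation property I would expand $\hat\phi(x)([u,v]_{B_1})$ and $[\hat\phi(x)u,v]_{B_1}+[u,\hat\phi(x)v]_{B_1}$ by unfolding the definitions of $\hat\phi$ and of $[\cdot,\cdot]_{B_1}$, using that $\phi(x)$ and $\phi(B_0(x))$ are derivations of $[\cdot,\cdot]_\h$, that $B_1$ being Rota-Baxter gives $B_1([u,v]_{B_1})=[B_1u,B_1v]_\h$, and then invoking condition (iii) in the form $B_1(\hat\phi(x)w)=\phi(B_0(x))B_1(w)$ to handle the terms in which $B_1$ is applied to a $\hat\phi(x)$-image; collecting everything, the two sides match term by term. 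For the bracket-preserving property I would expand $\hat\phi([x,y]_{B_0})u$ and $\big(\hat\phi(x)\hat\phi(y)-\hat\phi(y)\hat\phi(x)\big)u$, using that $\phi$ is a homomorphism $\g\lon\Der(\h)$, that $B_0$ is a Rota-Baxter operator on $\g$ (so $B_0([x,y]_{B_0})=[B_0x,B_0y]_\g$), and condition (iii) applied with both $x$ and $y$; after distributing the nested brackets the two expressions coincide. This bookkeeping — matching the many terms produced by the iterated use of (i)--(iii) — is the step I expect to be the main obstacle, although it requires no idea beyond the systematic substitutions just described, exactly as in the corresponding part of Proposition \ref{dg}.

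Finally I would verify the two Peiffer-type identities. Using condition (ii), the original identity $\dM t(\phi(z)w)=[z,\dM t(w)]_\g$, and $\dM t\circ B_1=B_0\circ\dM t$, one gets $\dM t(\hat\phi(x)u)=[x,B_0(\dM t u)]_\g+[B_0(x),\dM t u]_\g+[x,\dM t u]_\g=[x,\dM t u]_{B_0}$. Dually, using $\phi(\dM t(w))w'=[w,w']_\h$ together with $B_0(\dM t(u))=\dM t(B_1(u))$, one computes $\hat\phi(\dM t(u))v=\phi(\dM t u)B_1(v)+\phi(\dM t(B_1 u))v+\phi(\dM t u)v=[u,B_1(v)]_\h+[B_1(u),v]_\h+[u,v]_\h=[u,v]_{B_1}$. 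Combining these two identities with the facts established above — that $(\h,[\cdot,\cdot]_{B_1})$ and $(\g,[\cdot,\cdot]_{B_0})$ are Lie algebras, $\dM t$ is a homomorphism, and $\hat\phi$ is a Lie algebra homomorphism into the derivations — yields that $\big((\h,[\cdot,\cdot]_{B_1}),(\g,[\cdot,\cdot]_{B_0}),\dM t,\hat\phi\big)$ is a crossed module of Lie algebras.
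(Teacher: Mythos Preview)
Your proposal is correct and follows essentially the same route as the paper: invoke the descended Lie-algebra structures from \cite{JSZ}, check that $\dM t$ is a homomorphism via $\dM t\circ B_1=B_0\circ\dM t$, verify that $\hat\phi$ is a homomorphism into the derivations by expanding both sides and using condition (iii) in the form $B_1(\hat\phi(x)u)=\phi(B_0(x))B_1(u)$ together with $B_1([u,v]_{B_1})=[B_1u,B_1v]_\h$, and finish with the two Peiffer identities exactly as you wrote. The only cosmetic difference is that the paper cites Corollary~2.5 rather than Proposition~2.5 of \cite{JSZ} for the descended brackets.
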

\begin{proof}
By \cite[Corollary 2.5]{JSZ}, $(\g, [\cdot, \cdot]_{B_0})$ and $(\h, [\cdot, \cdot]_{B_1})$ are Lie algebras. By the fact that $\dM t\circ B_1=B_0\circ\dM t$ and $\dM t:(\h, [\cdot, \cdot]_\h)\lon(\g, [\cdot, \cdot]_\g)$ is a Lie algebra homomorphism, it follows that $\dM t: (\h, [\cdot, \cdot]_{B_1})\lon(\g, [\cdot, \cdot]_{B_0})$ is a Lie algebra homomorphism.

For $x, y\in\g, u\in\h$, since $\phi(B_0(x))B_1(u)=B_1\Big(\phi(x)B_1(u)+\phi(B_0(x))u+\phi(x)u\Big)$, then
\begin{eqnarray*}
\hat{\phi}([x, y]_{B_0})u&=&\phi([B_0(x), y]_\g+[x, B_0(y)]_\g+[x, y]_\g)B_1(u)+\phi([B_0(x), B_0(y)]_\g)u\\
&&+\phi([B_0(x), y]_\g+[x, B_0(y)]_\g+[x, y]_\g)u\\
&=&\hat{\phi}(x)\hat{\phi}(y)u-\hat{\phi}(y)\hat{\phi}(x)u\\
&=&[\hat{\phi}(x), \hat{\phi}(y)]u,
\end{eqnarray*}
which implies that $\hat{\phi}:\g\lon\gl(\h)$ is a Lie algebra homomorphism.

Since $\phi(B_0(x))B_1(u)=B_1\Big(\phi(x)B_1(u)+\phi(B_0(x))u+\phi(x)u\Big)$, we obtain
\begin{eqnarray*}
&&[\hat{\phi}(x)u, v]_{B_1}+[u, \hat{\phi}(x)v]_{B_1}\\
&=&[B_1(\hat{\phi}(x)u), v]_\h+[\hat{\phi}(x)u, B_1(v)]_{\h}+[B_1(u), \hat{\phi}(x)v]_{\h}+[u, B_1(\hat{\phi}(x)v)]_\h+[\hat{\phi}(x)u, v]_{\h}+[u, \hat{\phi}(x)v]_{\h}\\
&=&[\phi(B_0(x))B_1(u), v]_\h+[u, \phi(B_0(x))B_1(v)]_\h+[\phi(x)B_1(u)+\phi(B_0(x))u+\phi(x)u, B_1(v)]_\h\\
&&+[B_1(u), \phi(x)B_1(v)+\phi(B_0(x))v+\phi(x)v]_\h+[\phi(x)B_1(u)+\phi(B_0(x))u+\phi(x)u, v]_\h\\
&&+[u, \phi(x)B_1(v)+\phi(B_0(x))v+\phi(x)v]_\h\\
&=&\phi(x)[B(u), B(v)]_\h+\phi(B_0(x))[u, v]_{B_1}+\phi(x)[u, v]_{B_1}\\
&=&\hat{\phi}(x)([u, v]_{B_1}),
\end{eqnarray*}
thus $\hat{\phi}$ is a Lie algebra action of $(\g, [\cdot, \cdot]_{B_0})$ on $(\h, [\cdot, \cdot]_{B_1})$.

Moreover, since $\dM t\circ B_1=B_0\circ\dM t$, it follows that
$$
\hat{\phi}(\dM t(u))v=\phi(\dM t(u))B_1(v)+\phi(B_0(\dM t(u)))v+\phi(\dM t(u))v=[u, v]_{B_1},
$$
and
\begin{eqnarray*}
\dM t(\hat{\phi}(x)u)&=&\dM t(\phi(x)B_1(u)+\phi(B_0(x))u+\phi(x)u)\\
&=&[x, B_0(\dM t(u))]_\g+[B_0(x), \dM t(u)]_\g+[x, \dM t(u)]_\g=[x, \dM t(u)]_{B_0}.
\end{eqnarray*}
Therefore, $\Big((\h, [\cdot, \cdot]_{B_1}), (\g, [\cdot, \cdot]_{B_0}), \dM t, \hat{\phi}\Big)$ is a crossed module of Lie algebras.
\end{proof}

\begin{pro}
Let $(\h, \g, \dM t, \phi)$ be a crossed module of Lie algebras and $(B_1, B_0)$ be a Rota-Baxter operator on $(\h, \g, \dM t, \phi)$. Then $B=(B_0, B_1)$ is a Rota-Baxter operator on the semi-direct product Lie algebra $(\g\ltimes\h, [\cdot, \cdot]_{\phi})$ of $(\g, [\cdot, \cdot]_\g)$ and $(\h, [\cdot, \cdot]_\h)$, where
$$
B(x, u)=(B_0(x), B_1(u)), \quad [(x, u), (y, v)]_{\phi}=([x, y]_\g, \phi(x)v-\phi(y)u+[u, v]_\h), \quad \forall x, y\in\g, u, v\in\h.
$$
\end{pro}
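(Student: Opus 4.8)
The plan is to verify the defining Rota-Baxter identity on $(\g\ltimes\h,[\cdot,\cdot]_\phi)$ by a direct computation that reduces everything to conditions (i)--(iii) of $(B_1,B_0)$ together with the descendent operations recorded in Proposition \ref{promath}. Fix $x,y\in\g$ and $u,v\in\h$. First I would expand the left-hand side of the identity: since $B(x,u)=(B_0(x),B_1(u))$, the bracket formula gives
$$
[B(x,u),B(y,v)]_\phi=\bigl([B_0(x),B_0(y)]_\g,\ \phi(B_0(x))B_1(v)-\phi(B_0(y))B_1(u)+[B_1(u),B_1(v)]_\h\bigr).
$$

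Next I would compute the argument of $B$ on the right-hand side, namely
$$
\Xi:=[B(x,u),(y,v)]_\phi+[(x,u),B(y,v)]_\phi+[(x,u),(y,v)]_\phi.
$$
Expanding the three brackets by the formula for $[\cdot,\cdot]_\phi$ and collecting terms, the $\g$-component of $\Xi$ is $[B_0(x),y]_\g+[x,B_0(y)]_\g+[x,y]_\g=[x,y]_{B_0}$, and---after grouping the six terms in which $x$, respectively $y$, acts, together with the three $\h$-brackets---the $\h$-component of $\Xi$ is exactly $\hat\phi(x)v-\hat\phi(y)u+[u,v]_{B_1}$, where $\hat\phi$ and $[\cdot,\cdot]_{B_0},[\cdot,\cdot]_{B_1}$ are the descendent operations of Proposition \ref{promath}. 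In other words, $\Xi$ is precisely the bracket of $(x,u)$ and $(y,v)$ in the semi-direct product $(\g,[\cdot,\cdot]_{B_0})\ltimes_{\hat\phi}(\h,[\cdot,\cdot]_{B_1})$ of the descendent crossed module.

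Then I would apply $B$ to $\Xi$. On the $\g$-component, the Rota-Baxter identity for $B_0$ gives $B_0([x,y]_{B_0})=[B_0(x),B_0(y)]_\g$. On the $\h$-component I would use linearity of $B_1$ and treat the three summands separately: since $\hat\phi(x)v=\phi(x)B_1(v)+\phi(B_0(x))v+\phi(x)v$ is exactly the expression appearing on the right of condition (iii), that condition yields $B_1(\hat\phi(x)v)=\phi(B_0(x))B_1(v)$, and likewise $B_1(\hat\phi(y)u)=\phi(B_0(y))B_1(u)$, while the Rota-Baxter identity for $B_1$ gives $B_1([u,v]_{B_1})=[B_1(u),B_1(v)]_\h$. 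Hence the $\h$-component of $B(\Xi)$ equals $\phi(B_0(x))B_1(v)-\phi(B_0(y))B_1(u)+[B_1(u),B_1(v)]_\h$, and comparison with the left-hand side computed above gives $[B(x,u),B(y,v)]_\phi=B(\Xi)$, which is the Rota-Baxter identity; therefore $B=(B_0,B_1)$ is a Rota-Baxter operator on $(\g\ltimes\h,[\cdot,\cdot]_\phi)$.

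There is no conceptual obstacle: the argument is entirely formal once the bracket $[\cdot,\cdot]_\phi$, the definition of $B$, and conditions (i)--(iii) are written out. The only point requiring care is the bookkeeping of the many $\phi$-terms in $\Xi$, and the observation that streamlines the whole proof is that these terms reassemble exactly into the descendent semi-direct product of Proposition \ref{promath}, after which the identity falls out term by term. In contrast with the group-level Theorem \ref{semirb}, no twist is needed here, reflecting that the differential of the twisted map $\huaD$ at the unit is simply $B_1$.
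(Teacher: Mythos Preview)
Your proof is correct and follows essentially the same direct verification as the paper: both compute $[B(x,u),B(y,v)]_\phi$ and the sum $\Xi$ of the three brackets on the right, and reduce the identity componentwise to the Rota-Baxter identities for $B_0$, $B_1$ and condition (iii). Your packaging of $\Xi$ as the bracket in the descendent semi-direct product of Proposition~\ref{promath} is a tidy organizational device (and in fact anticipates the proposition immediately following this one in the paper), but the underlying computation is identical.
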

\begin{proof}
By the fact that $(B_1, B_0)$ is a Rota-Baxter operator on $(\h, \g, \dM t, \phi)$, we have
\begin{eqnarray*}
&&[B(x, u), B(y, v)]_{\phi}\\
&=&([B_0(x), B_0(y)]_\g, \phi(B_0(x))B_1(v)-\phi(B_0(y))B_1(u)+[B_1(u), B_1(v)]_\h)\\
&=&(B_0([B_0(x), y]_\g), B_1(\phi(B_0(x))v-\phi(y)B_1(u)+[u, v]_\h))\\
&&+(B_0([x, B_0(y)]_\g), B_1(\phi(x)B_1(v)-\phi(B_0(y))u+[u, B_1(v)]_\h))\\
&&+(B_0([x, y]_\g), B_1(\phi(x)v-\phi(y)u+[u, v]_\h))\\
&=&B\Big([B(x, u), (y, v)]_{\phi}+[(x, u), B(y, v)]_{\phi}+[(x, u), (y, v)]_{\phi}\Big),
\end{eqnarray*}
which implies that $B=(B_0, B_1)$ be a Rota-Baxter operator on $(\g\ltimes\h, [\cdot, \cdot]_{\phi})$.
\end{proof}

From Proposition \ref{promath}, we obtain that $(\g, [\cdot, \cdot]_{B_0}), (\h, [\cdot, \cdot]_{B_1})$ are Lie algebras and $\hat{\phi}:\g\lon\Der(\h)$ is a Lie algebra homomorphism. Denote by $(\g\ltimes\h, [\cdot, \cdot]_{\hat{\phi}})$ the semi-direct product Lie algebra of $(\g, [\cdot, \cdot]_{B_0})$ and $(\h, [\cdot, \cdot]_{B_1})$, where
$$
~[(x, u), (y, v)]_{\hat{\phi}}=([x, y]_{B_0}, \hat{\phi}(x)v-\hat{\phi}(y)u+[u, v]_{B_1}).
$$
Moreover, due to $(B_0, B_1)$ is a Rota-Baxter operator on $(\g\ltimes\h, [\cdot, \cdot]_{\phi})$, it follows that $(\g\oplus\h, [\cdot, \cdot]_{(B_0, B_1)})$ is a Lie algebra by Corollary 2.5 in \cite{JSZ}, where
\begin{eqnarray*}
~[(x, u), (y, v)]_{(B_0, B_1)}&=&[B(x, u), (y, v)]_{\ltimes_{\phi}}+[(x, u), B(y, v)]_{\ltimes_{\phi}}+[(x, u), (y, v)]_{\ltimes_{\phi}},
\end{eqnarray*}
Indeed, for any $(x, u), (y, v)\in\g\ltimes\h$, one has
\begin{eqnarray*}
&&[(x, u), (y, v)]_{(B_0, B_1)}\\
&=&([B_0(x), y]_\g, \phi(B_0(x))v-\phi(y)B_1(u)+[B_1(u), v]_{\h})\\
&&+([x, B_0(y)]_\g, \phi(x)B_1(v)-\phi(B_0(y))u+[u, B_1(v)]_{\h})+([x, y]_\g, \phi(x)v-\phi(y)u+[u, v]_\h)\\
&=&([x, y]_{B_0}, \hat{\phi}(x)v-\hat{\phi}(y)u+[u, v]_{B_1})=[(x, u), (y, v)]_{\hat{\phi}}.
\end{eqnarray*}
Then we have the following result.
\begin{pro}
The Lie algebra $(\g\oplus\h, [\cdot, \cdot]_{(B_0, B_1)})$ is same as the semi-direct product Lie algebra $(\g\ltimes\h, [\cdot, \cdot]_{\hat{\phi}})$.
\end{pro}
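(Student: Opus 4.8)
The plan is to observe that both Lie algebras are defined on literally the same underlying vector space $\g\oplus\h=\g\ltimes\h$, so it suffices to verify that the two bracket operations agree as bilinear maps; once this is checked, ``being the same Lie algebra'' is automatic. Concretely, I would start from the definition of the bracket induced by the Rota-Baxter operator $B=(B_0,B_1)$ on $(\g\ltimes\h,[\cdot,\cdot]_\phi)$, namely
$$
[(x, u), (y, v)]_{(B_0, B_1)}=[B(x, u), (y, v)]_{\phi}+[(x, u), B(y, v)]_{\phi}+[(x, u), (y, v)]_{\phi},
$$
and expand each of the three summands using the explicit formula $[(x,u),(y,v)]_\phi=([x,y]_\g,\phi(x)v-\phi(y)u+[u,v]_\h)$ together with $B(x,u)=(B_0(x),B_1(u))$.

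The second step is pure bookkeeping: separate the $\g$-component from the $\h$-component of the resulting expression. The three $\g$-components sum to $[B_0(x),y]_\g+[x,B_0(y)]_\g+[x,y]_\g$, which is exactly $[x,y]_{B_0}$ by the definition of the descendent bracket in Proposition \ref{promath}. For the $\h$-component, I would group the terms acting on $v$ or $B_1(v)$, namely $\phi(B_0(x))v+\phi(x)B_1(v)+\phi(x)v=\hat{\phi}(x)v$; the terms acting on $u$ or $B_1(u)$ with a minus sign, namely $-\phi(y)B_1(u)-\phi(B_0(y))u-\phi(y)u=-\hat{\phi}(y)u$; and the remaining pure brackets $[B_1(u),v]_\h+[u,B_1(v)]_\h+[u,v]_\h=[u,v]_{B_1}$. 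Assembling these gives precisely $\hat{\phi}(x)v-\hat{\phi}(y)u+[u,v]_{B_1}$, i.e. the $\h$-component of $[(x,u),(y,v)]_{\hat{\phi}}$.

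Finally I would conclude that $[\cdot,\cdot]_{(B_0,B_1)}$ and $[\cdot,\cdot]_{\hat{\phi}}$ coincide on every pair of elements of $\g\oplus\h$, and since both are already known to be Lie brackets — the former by the Rota-Baxter descent construction recalled in the text, the latter as a semi-direct product built from Proposition \ref{promath} — the two Lie algebras are identical, not merely isomorphic. There is essentially no genuine obstacle here; the only point requiring care is the sign handling in the semi-direct product bracket and the correct matching of the $\phi(B_0(\cdot))$ and $\phi(\cdot)B_1(\cdot)$ contributions so that they reassemble into $\hat{\phi}$ — a verification that is in fact already displayed immediately before the statement.
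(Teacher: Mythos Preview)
Your proposal is correct and matches the paper's own argument essentially line for line: the paper also expands $[B(x,u),(y,v)]_\phi+[(x,u),B(y,v)]_\phi+[(x,u),(y,v)]_\phi$ componentwise and regroups into $([x,y]_{B_0},\hat{\phi}(x)v-\hat{\phi}(y)u+[u,v]_{B_1})$. As you observed, this computation is in fact carried out in the text immediately preceding the proposition, so there is nothing to add.
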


In \cite{BGN, Bor}, they provided a geometric explanation of the Rota-Baxter operator. More precisely, let $B$ be a Rota-Baxter operator on a Lie algebra $(\g, [\cdot,\cdot]_\g)$. Then $(\g, [\cdot, \cdot]_B)$ is a Lie algebra, where
$$
[x, y]_B=[B(x), y]_\g+[x, B(y)]_\g+[x, y]_\g, \quad \forall x, y\in\g.
$$
Let $(K, \cdot_K)$ be a connected and simply connected Lie group such that its Lie algebra be $(\g, [\cdot, \cdot]_B)$. According to \cite{Kob}, a left invariant connection $\nabla$ on $K$ is determined by $\nabla_XY(e_K)$ for all left invariant vector fields $X, Y$. There is a left invariant connection $\nabla^{B}$ on $K$ constructed as
\begin{equation}\label{con}
\nabla^{B}_XY(e_G)=[B(x), y]_\g+\frac{1}{2}[x, y]_\g,
\end{equation}
where $X, Y$ are left invariant vector fields and $X(e_G)=x, Y(e_G)=y$. Furthermore, $\nabla$ is a torsion free connection on $K$.

\begin{defi}\cite{PSV}
Let $\nabla^{M}$ and $\nabla^{N}$ be connections in $TM$ and $TN$ respectively, and let $F:M\lon N$ be a smooth map. $\nabla^{M}$ and $\nabla^{N}$ are called $F$-related if for any $X_1, X_2\in\frak X(M)$ and $Y_1, Y_2\in\frak X(N)$ such that $X_i$ and $Y_i$ are $F$-related, $i=1, 2$, then $\nabla^{M}_{X_1}X_2$ and $\nabla^{N}_{Y_1}Y_2$ are $F$-related.
\end{defi}

Let $(B_1, B_0)$ be a Rota-Baxter operator on a crossed module of Lie algebras $(\h, \g, \dM t, \phi)$. It follows that $(\g, [\cdot, \cdot]_{B_0}), (\h, [\cdot, \cdot]_{B_1})$ are Lie algebras and $\dM t: (\h, [\cdot, \cdot]_{B_1})\lon (\g, [\cdot, \cdot]_{B_0})$ is a Lie algebra homomorphism by Proposition \ref{promath}. Denote by $(G, \star)$ and $(H, \circ)$ the Lie groups corresponding with $(\g, [\cdot, \cdot]_{B_0})$ and $(\h, [\cdot, \cdot]_{B_1})$ respectively. Denote by $t:(H, \circ)\lon(G, \star)$ the Lie group homomorphism corresponding with $\dM t: (\h, [\cdot, \cdot]_{B_1})\lon (\g, [\cdot, \cdot]_{B_0})$. Define left invariant connections $\nabla^{B_1}$ and $\nabla^{B_0}$ on $H$ and $G$ by
$$
(\nabla^{B_1}_{X}Y)(e_H)=[B_1(u), v]_\h+\frac{1}{2}[u, v]_\h, \quad (\nabla^{B_0}_{\tilde{X}}\tilde{Y})(e_G)=[B_0(x), y]_\g+\frac{1}{2}[x, y]_\g,
$$
where $X, Y$ are left invariant vector fields on $(H, \circ)$ and $X(e_H)=u\in\h, Y(e_H)=v\in\h$, $\tilde{X}, \tilde{Y}$ are left invariant vector fields on $(G, \star)$ and $\tilde{X}(e_G)=x\in\g, \tilde{Y}(e_G)=y\in\g$.
\begin{pro}\label{2-connection}
The connections $\nabla^{B_1}$ and $\nabla^{B_0}$ are $t$-related.
\end{pro}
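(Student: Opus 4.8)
The plan is to reduce the statement to a single algebraic identity, exploiting that both $\nabla^{B_1}$ and $\nabla^{B_0}$ are left invariant connections and that $t$ is a Lie group homomorphism.

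First, since $t:(H,\circ)\lon(G,\star)$ integrates $\dM t:(\h,[\cdot,\cdot]_{B_1})\lon(\g,[\cdot,\cdot]_{B_0})$, its differential at the unit is $\dM t$. Writing $u^{L}$ for the left invariant vector field on $H$ with $u^{L}(e_H)=u$, differentiating $t(h\circ\exp_H(su))=t(h)\star\exp_G(s\,\dM t(u))$ at $s=0$ shows that $u^{L}$ is $t$-related to $(\dM t(u))^{L}$ on $G$, for every $u\in\h$.

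Second, I would reduce $t$-relatedness of $\nabla^{B_1}$ and $\nabla^{B_0}$ to the case of left invariant vector fields. Fixing a basis $\{e_i\}$ of $\h$ and the associated left invariant frame $\{e_i^{L}\}$ on $H$, left invariance of $\nabla^{B_1}$ forces $\nabla^{B_1}_{e_i^{L}}e_j^{L}$ to be left invariant with value $[B_1(e_i),e_j]_\h+\tfrac12[e_i,e_j]_\h$ at $e_H$; likewise on $G$. Given $X_i\in\mathfrak{X}(H)$, $Y_i\in\mathfrak{X}(G)$ with $X_i$ $t$-related to $Y_i$, expand $X_1=\sum_i f^i e_i^{L}$, $X_2=\sum_j g^j e_j^{L}$; the hypotheses yield, along $t$, the relations between the component functions and their $e_i^{L}$-derivatives and those of $Y_1,Y_2$ (this is where one must be careful, since $\dM t$ need not be injective nor $t$ surjective, so one works with these relations rather than pulling back coordinates). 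Substituting them into the frame expansion of $\nabla^{B_1}_{X_1}X_2$, applying $t_{*}$, and using $C^\infty(H)$-linearity of $\nabla^{B_1}$ in the first argument together with its derivation property in the second, the computation collapses onto $\nabla^{B_0}_{Y_1}Y_2\circ t$ exactly when the bilinear map $(u,v)\mapsto[B_1(u),v]_\h+\tfrac12[u,v]_\h$ is intertwined by $\dM t$ with the analogous map on $\g$.

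Finally, that intertwining identity,
$$
\dM t\Big([B_1(u),v]_\h+\tfrac12[u,v]_\h\Big)=[B_0(\dM t(u)),\dM t(v)]_\g+\tfrac12[\dM t(u),\dM t(v)]_\g,\qquad\forall\, u,v\in\h,
$$
is immediate: $\dM t$ is a Lie algebra homomorphism for the original brackets, so $\dM t[u,v]_\h=[\dM t(u),\dM t(v)]_\g$ disposes of the $\tfrac12$-terms, and property (ii) of a Rota-Baxter operator on a crossed module of Lie algebras gives $\dM t\circ B_1=B_0\circ\dM t$, whence $\dM t[B_1(u),v]_\h=[\dM t(B_1(u)),\dM t(v)]_\g=[B_0(\dM t(u)),\dM t(v)]_\g$. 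The only genuine work is the bookkeeping in the reduction step; the algebra at the end is routine.
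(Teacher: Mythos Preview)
Your proposal is correct and follows essentially the same route as the paper: both reduce the claim to the intertwining identity $\dM t\big([B_1(u),v]_\h+\tfrac12[u,v]_\h\big)=[B_0(\dM t(u)),\dM t(v)]_\g+\tfrac12[\dM t(u),\dM t(v)]_\g$, which holds because $\dM t$ is a Lie algebra homomorphism for the original brackets and $\dM t\circ B_1=B_0\circ\dM t$. The paper simply asserts the reduction to left invariant fields (invoking triviality of $TH$ and $TG$) and then computes $t_{*a}\big(\nabla^{B_1}_XY(a)\big)$ directly via $t\circ L_a=L_{t(a)}\circ t$, whereas you spell out the frame expansion and the care needed when $t$ is neither injective nor surjective; the substance is the same.
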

\begin{proof}
As the tangent bundles $TH$ and $TG$ are trivial bundle, in order to prove that $\nabla^{B_1}$ and $\nabla^{B_0}$ are $t$-related, we only need to prove that when left invariant vector fields $X, \tilde{X}$ are $t$-related and $Y, \tilde{Y}$ are $t$-related, then $\nabla^{B_1}_XY, \nabla^{B_0}_{\tilde{X}}\tilde{Y}$ are $t$-related.

Since $X(e_H)=x, Y(e_H)=y$ and $\tilde{X}(e_G)=\dM t(x), \tilde{Y}(e_G)=\dM t(y)$, it follows
\begin{eqnarray*}
t_{*a}(\nabla^{B_1}_XY(a))&=&t_{*a}(L_a)_{*e_H}\nabla^{B_1}_XY(e_H)=(tL_{a})_{*e_H}([B_1(x), y]_\h+\frac{1}{2}[x, y]_\h)\\
&=&(L_{t(a)})_{*e_G}t_{*e_H}([B_1(x), y]_\h+\frac{1}{2}[x, y]_\h)\\
&=&(L_{t(a)})_{*e_G}([B_0(\dM t(x)), \dM t(y)]_\g+\frac{1}{2}[\dM t(x), \dM t(y)]_\g)=(L_{t(a)})_{*e_G}(\nabla^{B_0}_{\tilde{X}}\tilde{Y}(e_G))\\
&=&\nabla^{B_0}_{\tilde{X}}\tilde{Y}(t(a)),
\end{eqnarray*}
thus $\nabla^{B_1}$ and $\nabla^{B_0}$ are $t$-related.
\end{proof}

\section{Lie theory for Rota-Baxter operators on crossed modules of Lie groups and Rota-Baxter operators on crossed modules of Lie algebras}
In this section, we study the differential of Rota-Baxter operators on crossed of Lie groups and the integration of Rota-Baxter operators on crossed modules of Lie algebras.

It is well known that there is one-to-one correspondence between finite dimensional Lie algebras and Lie groups. For crossed modules of Lie groups and crossed modules of Lie algebras, there is the same theorem as following.

\begin{thm}\rm(\cite{WF})
Let $\g$ and $\h$ be finite dimensional Lie algebras and integrate to connected, simply-connected Lie groups $G, H$. There is one-to-one correspondence between crossed modules of $\g$ and $\h$ and crossed modules of $G$ and $H$.
\end{thm}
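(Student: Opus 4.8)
The plan is to realize the bijection explicitly via integration and differentiation, following the classical scheme (cf. \cite{WF}). Starting from a crossed module of Lie algebras $(\h,\g,\dM t,\phi)$ I would integrate the two pieces of data separately. Since $\h$ is finite-dimensional, $\Aut(H)$ is a Lie group with Lie algebra $\Der(\h)$, and because $H$ is connected and simply connected, differentiation at $e_H$ identifies $\Aut(H)$ with $\Aut(\h)$; as $G$ is simply connected, Lie's second theorem then integrates $\phi\colon\g\lon\Der(\h)=\mathrm{Lie}(\Aut(H))$ uniquely to a Lie group homomorphism $\Phi\colon G\lon\Aut(H)$, and similarly $\dM t\colon\h\lon\g$ integrates uniquely to $t\colon H\lon G$ (here using that $H$ is simply connected). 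The converse assignment sends a crossed module of Lie groups $(H,G,t,\Phi)$ to $(\h,\g,t_{*e_H},\Phi_{*e_G})$; that $t_{*e_H}$ and $\Phi_{*e_G}$ are Lie algebra homomorphisms and that $\Phi_{*e_G}$ lands in $\Der(\h)$ is automatic, and differentiating \eqref{crmo1} and \eqref{crmo2} at the identity produces exactly the two crossed-module-of-Lie-algebras axioms, so this direction is routine.

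The substance is to check that the integrated data $(H,G,t,\Phi)$ actually satisfies \eqref{crmo1} and \eqref{crmo2}. For \eqref{crmo1} I would fix $p\in H$ and view both sides as functions of $q$: the left side is the automorphism $\Phi(t(p))$ applied to $q$, the right side is conjugation by $p$, so \eqref{crmo1} amounts to the equality of the two Lie group homomorphisms $H\lon\Aut(H)$ given by $\Phi\circ t$ and $p\mapsto\mathrm{conj}_p$. Since $H$ is connected, it is enough to compare their differentials at $e_H$, namely $\phi\circ\dM t$ and $\ad\colon\h\lon\Der(\h)$, and these coincide precisely because of the axiom $\phi(\dM t(u))v=[u,v]_\h$.

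Condition \eqref{crmo2} is the step I expect to be the main obstacle. After fixing $a\in G$ and reducing, again by connectedness of $H$, to the infinitesimal identity $\dM t\circ\bar\Phi(a)=\Ad_a\circ\dM t$ — where $\bar\Phi\colon G\lon\Aut(\h)$ is the homomorphism integrating $\phi$ — one still has to promote this from the Lie-algebra axiom to a statement valid for every $a\in G$, and the maps involved are not group homomorphisms in the variable $a$. The plan here is to note that $W=\{a\in G:\dM t\circ\bar\Phi(a)=\Ad_a\circ\dM t\}$ is a subgroup of $G$, so it suffices to show $W\supseteq\exp(\g)$ (a connected Lie group being generated by the image of $\exp$); and for $a=\exp(x)$ one has $\bar\Phi(\exp x)=e^{\phi(x)}$ and $\Ad_{\exp x}=e^{\ad_x}$, so membership reduces to $\dM t\circ e^{\phi(x)}=e^{\ad_x}\circ\dM t$, which follows from the axiom $\dM t\circ\phi(x)=\ad_x\circ\dM t$ by the elementary observation that $T\circ B=A\circ T$ forces $T\circ B^{n}=A^{n}\circ T$ for all $n$, hence $T\circ e^{B}=e^{A}\circ T$ (applied with $T=\dM t$, $B=\phi(x)$, $A=\ad_x$).

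Finally I would check that the two constructions are mutually inverse. Differentiating the crossed module of Lie groups built from $(\h,\g,\dM t,\phi)$ returns $(\h,\g,\dM t,\phi)$ by construction, and starting from $(H,G,t,\Phi)$, re-integrating $t_{*e_H}$ and $\Phi_{*e_G}$ recovers $t$ and $\Phi$ by the uniqueness clause of Lie's second theorem, since $G$ and $H$ are connected and simply connected. The only genuinely delicate ingredients are the identification $\mathrm{Lie}(\Aut(H))=\Der(\h)$ together with $\Aut(H)\cong\Aut(\h)$, and the exponential trick for \eqref{crmo2}; everything else is a formal consequence of the uniqueness in Lie's theorems.
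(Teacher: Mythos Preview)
The paper does not prove this theorem; it is stated with a citation to \cite{WF} and used as background for the section. Your sketch is the standard argument and is correct: integrate $\dM t$ and $\phi$ separately via Lie's second theorem (using simple connectedness of $H$ and $G$ together with the identification $\Aut(H)\cong\Aut(\h)$ for $H$ connected and simply connected), then verify \eqref{crmo1} and \eqref{crmo2} by reducing each to an infinitesimal identity via connectedness of $H$. Your handling of \eqref{crmo2} --- passing to the closed subgroup $W=\{a\in G:\dM t\circ\bar\Phi(a)=\Ad_a\circ\dM t\}$ and checking it contains $\exp(\g)$ via the intertwining relation $\dM t\circ\phi(x)=\ad_x\circ\dM t$ and the power-series trick --- is exactly the right move and addresses what is typically the only step requiring care. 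The mutual-inverse check by uniqueness in Lie's second theorem is routine, as you say.
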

Let $(H, G, t, \Phi)$ be a crossed module of Lie groups. By Proposition \ref{croslie}, we obtain that $(H\ltimes H, G\ltimes G, (t, t), \tilde{\Phi})$ is a crossed of Lie groups, where $H\ltimes H, G\ltimes G$ are semi-direct product Lie groups, and
\begin{equation*}
\tilde{\Phi}\Big((a, b)\Big)(p, q)=\Big(\Phi(a)p, \Phi(b\cdot_G a)(q\cdot_H p)\cdot_H\Phi(a)p^{-1}\Big), \quad \forall a, b\in G, p, q\in H.
\end{equation*}
Let $(\h, \g, \dM t, \phi)$ be a crossed module of Lie algebras of $(H, G, t, \Phi)$. By Proposition \ref{semicross}, $(\h\ltimes\h, \g\ltimes\g, (\dM t, \dM t), \tilde{\phi})$ is a crossed module of Lie algebras, where $\h\ltimes\h, \g\ltimes\g$ are semi-direct product Lie algebras and
\begin{equation*}
\tilde{\phi}(x, y)(u, v)=(\phi(x)u, \phi(y)u+\phi(y)v+\phi(x)v), \quad \forall x, y\in\g, ~~u, v\in\h.
\end{equation*}
\begin{pro}\label{thmcrlg}
With the above notations, $(\h\ltimes\h, \g\ltimes\g, (\dM t, \dM t), \tilde{\phi})$ is the crossed module of Lie algebras of $(H\ltimes H, G\ltimes G, (t, t), \tilde{\Phi})$.
\end{pro}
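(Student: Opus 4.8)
The plan is to unwind the assertion into four pieces: (a) the Lie algebra of $G\ltimes G$ is $\g\ltimes\g$ with bracket \eqref{defi-g}; (b) the Lie algebra of $H\ltimes H$ is $\h\ltimes\h$ with bracket \eqref{defi-h}; (c) the differential of $(t,t)$ at the unit is $(\dM t,\dM t)$; and (d) the derived action of $\tilde\Phi$ coincides with $\tilde\phi$ as in \eqref{eq-rep}. Pieces (a) and (b) are the standard computation of the Lie algebra of a semi-direct product: the multiplications \eqref{defigs} and \eqref{defigs1} present $G\ltimes G$ and $H\ltimes H$ as the semi-direct products of $G$, resp. $H$, with itself along the conjugation action, whose infinitesimal action is $\ad$; hence their Lie brackets are $([x_1,x_2]_\g,\,[x_1,y_2]_\g+[y_1,x_2]_\g+[y_1,y_2]_\g)$ and likewise for $\h\ltimes\h$, which are exactly \eqref{defi-g} and \eqref{defi-h}. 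Under these identifications the units are $(e_G,e_G)$ and $(e_H,e_H)$ and the tangent spaces at the units are $\g\oplus\g$ and $\h\oplus\h$. Piece (c) is immediate, since $(t,t)$ is the product map $t\times t$, so $(t,t)_{*(e_H,e_H)}=t_{*e_H}\times t_{*e_H}=(\dM t,\dM t)$.

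The content of the proposition is piece (d). Recall that the derived action of a smooth Lie group action $\Psi:K\to\Aut(L)$ is the map $z\mapsto\frac{d}{ds}\big|_{s=0}\big(\Psi(c(s))\big)_{*e_L}$, where $c$ is any smooth curve in $K$ with $c(0)=e_K$ and $\dot c(0)=z$; this depends only on $\dot c(0)$ because $g\mapsto(\Psi(g))_{*e_L}$ is a smooth map from $K$ to $\GL(\Lie(L))$. Accordingly I would evaluate the derived action of $\tilde\Phi$ on $(x,y)\in\g\oplus\g$ and $(u,v)\in\h\oplus\h$ along the convenient curves $\gamma(s)=(\exp_G(sx),\exp_G(sy))$ in $G\ltimes G$ and $\delta(r)=(\exp_H(ru),\exp_H(rv))$ in $H\ltimes H$, which pass through the units with velocities $(x,y)$ and $(u,v)$, and compute
\[
\tilde\phi(x,y)(u,v)=\frac{d}{ds}\Big|_{s=0}\frac{d}{dr}\Big|_{r=0}\tilde\Phi\big(\gamma(s)\big)\big(\delta(r)\big)
\]
using the explicit formula \eqref{eqdefrG} for $\tilde\Phi$. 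On the first coordinate this is $\frac{d}{ds}|_{s=0}\frac{d}{dr}|_{r=0}\Phi(\exp_G(sx))\exp_H(ru)=\phi(x)u$. On the second coordinate, one differentiates first in $r$ at $r=0$, using the Leibniz rule for products of $H$-valued curves through $e_H$ (so $\exp_H(rv)\cdot_H\exp_H(ru)$ has velocity $u+v$ and $\exp_H(ru)^{-1}$ has velocity $-u$), obtaining the $s$-dependent vector $\big(\Phi(\exp_G(sy)\cdot_G\exp_G(sx))\big)_{*e_H}(u+v)-\big(\Phi(\exp_G(sx))\big)_{*e_H}(u)$; then differentiating this in $s$ at $s=0$, and noting that $s\mapsto\exp_G(sy)\cdot_G\exp_G(sx)$ passes through $e_G$ with velocity $x+y$, one obtains $\phi(x+y)(u+v)-\phi(x)(u)=\phi(x)v+\phi(y)u+\phi(y)v$. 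Comparison with \eqref{eq-rep} shows this equals $\tilde\phi(x,y)(u,v)$, which settles (d).

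I do not anticipate a serious obstacle; the only points demanding care are the justification that the derived action of $\tilde\Phi$ may be computed with $\gamma$ and $\delta$ rather than with the genuine exponentials of $G\ltimes G$ and $H\ltimes H$ (which follows from smoothness of $g\mapsto(\tilde\Phi(g))_{*(e_H,e_H)}$ together with $\dot\gamma(0)=(x,y)$ and $\dot\delta(0)=(u,v)$), and the bookkeeping of the two Leibniz rules in the second-coordinate computation — one for the product in $H$ and one for the $G$-valued argument $\exp_G(sy)\cdot_G\exp_G(sx)$ of $\Phi$. Once (a)--(d) are assembled, the crossed module of Lie algebras $(\h\ltimes\h,\g\ltimes\g,(\dM t,\dM t),\tilde\phi)$ furnished by Proposition \ref{semicross} is, by construction, the infinitesimal of $(H\ltimes H,G\ltimes G,(t,t),\tilde\Phi)$, which is the claim.
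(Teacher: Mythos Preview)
Your proposal is correct and follows essentially the same approach as the paper: both identify the Lie algebras of the semi-direct product groups, differentiate $(t,t)$ componentwise, and compute the derived action of $\tilde\Phi$ by a double differentiation using formula \eqref{eqdefrG}. Your version is in fact tidier, since you work with the product curves $\gamma(s)=(\exp_G(sx),\exp_G(sy))$ and $\delta(r)=(\exp_H(ru),\exp_H(rv))$ rather than the genuine semi-direct product exponentials (which the paper handles via an auxiliary projection $P_2$); your justification that only the initial velocity matters is exactly right and spares you that bookkeeping.
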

\begin{proof}
Since $\g$ and $\h$ are Lie algebras of $G$ and $H$, it follows that $\g\ltimes\g$ and $\h\ltimes\h$ are Lie algebras of $G\ltimes G$ and $H\ltimes H$. Moreover, for the semi-direct Lie group $H\ltimes H$, the exponential map $\exp_{\h\ltimes\h}:\h\ltimes\h\lon H\ltimes H$ is $$\exp_{\h\ltimes\h}(u, v)=\Big(\exp_\h(u), P_2(\exp_{\h\ltimes\h}(u, v))\Big),$$ where $P_2: H\ltimes H\lon H$ is defined by $P_2(p, q)=q$. By the fact that $t_*=\dM t$, we have
\begin{equation*}
(t, t)_*(u, v)=\frac{d}{ds}|_{s=0}\Big(t(\exp_\h(su)), t(P_2(\exp_{\h\ltimes\h}(su, sv)))\Big)=(\dM t(u), \dM t(v)),
\end{equation*}
thus $(t, t)_*=(\dM t, \dM t)$.

Since $\tilde{\Phi}:G\ltimes G\lon \Aut(H\ltimes H)$ is a Lie group homomorphism, then we obtain a Lie group homomorphism $\hat{\Phi}:G\ltimes G\lon \Aut(\h\ltimes\h)$, given by
$$
\hat{\Phi}((a, b))(u, v)=\frac{d}{ds}|_{s=0}\tilde{\Phi}((a, b))\exp_{\h\ltimes\h}(su, sv), \quad \forall a, b\in G, u, v\in\h.
$$
Taking the differential, there is a Lie algebra homomorphism $\tilde{\Phi}_*: \g\ltimes\g\lon\Der(\h\ltimes\h)$, given by
$$
\tilde{\Phi}_*((x, y))(u, v)=\frac{d}{ds}|_{s=0}\hat{\Phi}(\exp_{\g\ltimes\g}(sx, sy))(u, v),\quad \forall x, y\in\g, u, v\in\h.
$$

Similarly, define $\bar{\Phi}: G\lon\Aut(\h)$ and $\Phi_*: \g\lon\Der(\h)$ by
$$
\bar{\Phi}(a)u=\frac{d}{ds}|_{s=0}\Phi(a)\exp_\h(su), \quad \Phi_*(x)u=\frac{d}{ds}|_{s=0}\bar{\Phi}(\exp(sx))u.
$$
For $x, y\in\g, u, v\in\h$, by \eqref{eqdefrG} and $\Phi_*=\phi$, it follows
\begin{eqnarray*}
&&\tilde{\Phi}_*(x, y)(u, v)\\
&=&\frac{d}{ds_1}\frac{d}{ds_2}|_{s_1=s_2=0}\tilde{\Phi}(\exp_{\g\ltimes\g}(s_1x, s_1y))\exp_{\h\ltimes\h}(s_2u, s_2v)\\
&=&\frac{d}{ds_1}\frac{d}{ds_2}|_{s_1=s_2=0}\tilde{\Phi}\Big((\exp_\g(s_1x), P_2(\exp_{\g\ltimes\g}(s_1x, s_1y)))\Big)\Big(\exp_\h(s_2u), P_2(\exp_{\h\ltimes\h}(s_2u, s_2v))\Big)\\
&=&\frac{d}{ds_1}\frac{d}{ds_2}|_{s_1=s_2=0}\Big(\Phi(\exp_\g(s_1x))\exp(s_2u),\Phi\Big(P_2(\exp_{\g\ltimes\g}(s_1x, s_1y))\cdot_G\exp_\g(s_1x)\Big)\\
&&\Big(P_2(\exp_{\h\ltimes\h}(s_2u, s_2v))\cdot_H\exp_\h(s_2u)\Big)\cdot_H\Phi(\exp_\g(s_1x))\exp_\h(-s_2u)\Big)\\
&=&\frac{d}{ds_1}|_{s_1=0}\Big(\bar{\Phi}(\exp_\g(s_1x))u, \bar{\Phi}\Big(P_2(\exp_{\g\ltimes\g}(s_1x, s_1y))\cdot_G\exp_\g(s_1x)\Big)(v+u)-\bar{\Phi}(\exp_\g(s_1x))u\Big)\\
&=&\Big(\Phi_*(x)u, \Phi_*(x)v+\Phi_*(y)v+\Phi_*(y)u\Big)\\
&=&(\phi(x)u, \phi(x)v+\phi(y)v+\phi(y)u)\\
&=&\tilde{\phi}(x, y)(u, v).
\end{eqnarray*}
Therefore, we have that $(\h\ltimes\h, \g\ltimes\g, (\dM t, \dM t), \tilde{\phi})$ is the crossed module of Lie algebras of $(H\ltimes H, G\ltimes G, (t, t), \tilde{\Phi})$.
\end{proof}

In \cite{GLS}, they proved that tangent maps of Rota-Baxter operators on a Lie groups are Rota-Baxter operators on their Lie algebras. Conversely, Rota-Baxter operators on Lie algebras can be locally integrated into Rota-Baxter operators on corresponding Lie groups\cite{JSZ}, global integrability of Rota-Baxter operators on Lie algebras have obstructions\cite{JSZ1}. For Rota-Baxter operators on crossed modules of Lie algebras and Rota-Baxter operators on crossed modules of Lie groups, there are the following relations.

\begin{thm}\rm(Differentiation)\label{d1}
Let $(H, G, t, \Phi)$ be a crossed module of Lie groups. Denote by $(\h, \g, \dM t, \phi)$ the crossed module of Lie algebras of $(H, G, t, \Phi)$. If $(\huaB_1, \huaB_0)$ is a Rota-Baxter operator on $(H, G, t, \Phi)$, then $(B_1, B_0)$ is a Rota-Baxter operator on $(\h, \g, \dM t, \phi)$, where $B_1=(\huaB_1)_{*e_H}$ and $B_0=(\huaB_0)_{*e_G}$.
\end{thm}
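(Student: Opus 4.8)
The plan is to verify, one by one, the three defining conditions of a Rota-Baxter operator on the crossed module of Lie algebras $(\h,\g,\dM t,\phi)$, obtaining each by differentiating the corresponding condition for $(\huaB_1,\huaB_0)$ at the identity. Condition (i), that $B_1=(\huaB_1)_{*e_H}$ and $B_0=(\huaB_0)_{*e_G}$ are Rota-Baxter operators on $\h$ and $\g$, is exactly the differentiation theorem for Rota-Baxter operators on Lie groups from \cite{GLS}. Condition (ii), $\dM t\circ B_1=B_0\circ\dM t$, follows by differentiating $t\circ\huaB_1=\huaB_0\circ t$ at $e_H$ and using $t_{*e_H}=\dM t$. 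The substance is therefore condition (iii).

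For (iii) the plan is to differentiate the group identity
$$\Phi(\huaB_0(a))\huaB_1(p)=\huaB_1\Big(\Phi(a\cdot_G\huaB_0(a))(p\cdot_H\huaB_1(p))\cdot_H\Phi(\huaB_0(a))\huaB_1(p)^{-1}\Big)$$
twice: first along $p=\exp_\h(su)$, then along $a=\exp_\g(sx)$. I first record that $\huaB_0(e_G)=e_G$ and $\huaB_1(e_H)=e_H$ (put $a=b=e$ in the group Rota-Baxter identity), which ensures that at $s=0$ both sides above sit at $e_H$, so that $\huaB_1$ and the automorphisms $\Phi(g)$ differentiate cleanly by the chain rule. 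Writing $\bar{\Phi}(g)$ for the linear automorphism $\Phi(g)_{*e_H}$ of $\h$, and using that a product of curves through $e_H$ has velocity the sum of the velocities, that inversion reverses the sign, that the velocity of $\exp_\h(su)\cdot_H\huaB_1(\exp_\h(su))$ is $u+B_1(u)$, and that each $\bar{\Phi}(g)$ is linear, differentiating in $s$ at $s=0$ yields, for all $a\in G$, $u\in\h$,
$$\bar{\Phi}(\huaB_0(a))\big(B_1(u)\big)=B_1\Big(\bar{\Phi}(a\cdot_G\huaB_0(a))\big(u+B_1(u)\big)-\bar{\Phi}(\huaB_0(a))\big(B_1(u)\big)\Big).$$

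Next I would substitute $a=\exp_\g(sx)$ into this first-order identity and differentiate at $s=0$. The curves $s\mapsto\huaB_0(\exp_\g(sx))$ and $s\mapsto\exp_\g(sx)\cdot_G\huaB_0(\exp_\g(sx))$ pass through $e_G$ with velocities $B_0(x)$ and $x+B_0(x)$; by the standard relation $\Phi_*=\phi$ between $\Phi$ and its infinitesimal (recalled in the proof of Proposition \ref{thmcrlg}), $\frac{d}{ds}\big|_{s=0}\bar{\Phi}(g(s))(w)=\phi(\dot g(0))(w)$ for any curve $g$ through $e_G$ and $w\in\h$. Since $B_1$ is linear it commutes with $\frac{d}{ds}\big|_{s=0}$, even though here the argument of $B_1$ is a curve through $u$ and not through $0$. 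Differentiating both sides and expanding
$$\phi(x+B_0(x))\big(u+B_1(u)\big)-\phi(B_0(x))\big(B_1(u)\big)=\phi(x)u+\phi(x)B_1(u)+\phi(B_0(x))u,$$
so that the two copies of $\phi(B_0(x))B_1(u)$ cancel on the right, gives exactly
$$\phi(B_0(x))B_1(u)=B_1\big(\phi(x)B_1(u)+\phi(B_0(x))u+\phi(x)u\big),$$
which is condition (iii). Hence $(B_1,B_0)$ is a Rota-Baxter operator on $(\h,\g,\dM t,\phi)$.

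The only real difficulty I anticipate is the bookkeeping of the two successive differentiations of (iii): one must keep track of which curves are based at the identity (so that products differentiate to sums of velocities and composition with $\Phi(g)$ reduces to applying $\bar{\Phi}(g)$), correctly invoke $\Phi_*=\phi$ when differentiating $\bar{\Phi}$ in its group slot, and observe that after the first differentiation the argument of $B_1$ is a curve not based at $0$ — harmless because $B_1$ is linear. No single step is deep; the care is entirely in organizing the chain rule and the order of differentiation.
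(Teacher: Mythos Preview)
Your proof is correct and follows essentially the same approach as the paper: conditions (i) and (ii) are handled identically, and for condition (iii) both you and the paper differentiate the group-level identity twice, once in each variable, arriving at the same expansion $\phi(x+B_0(x))(u+B_1(u))-\phi(B_0(x))B_1(u)=\phi(x)u+\phi(x)B_1(u)+\phi(B_0(x))u$. The only difference is organizational---you present the two differentiations sequentially with an explicit intermediate identity, while the paper writes them as a single double derivative $\frac{d}{ds_1}\frac{d}{ds_2}\big|_{s_1=s_2=0}$---but the substance is the same.
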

\begin{proof}
Since $\huaB_1$ and $\huaB_0$ are Rota-Baxter operators on $H$ and $G$, by \cite[Theorem 2.9]{GLS}, we have that $(\huaB_1)_{*e_H}=B_1, (\huaB_0)_{*e_G}=B_0$ are Rota-Baxter operators on $\h, \g$. Moreover, since $t\circ\huaB_1=\huaB_0\circ t$, it follows $\dM t\circ B_1=B_0\circ\dM t$.

For any $x\in\g, u\in\h$, we have
\begin{eqnarray*}
&&\phi(B_0(x))B_1(u)\\
&=&\frac{d}{ds_1}\frac{d}{ds_2}|_{s_1=s_2=0}\Phi(\huaB_0(\exp_G(s_1x)))\huaB_1(\exp_H(s_2u))\\
&=&\frac{d}{ds_1}\frac{d}{ds_2}|_{s_1=s_2=0}\huaB_1\Big(\Phi(\exp_G(s_1x)\cdot_G\huaB_0(\exp_G(s_1x)))(\exp_H(s_2u)\cdot_H\huaB_1(\exp_H(s_2u)))\\
&&\cdot_H\Phi(\huaB_0(\exp_G(s_1x)))\huaB_1(\exp_H(s_2u))^{-1}\Big)\\
&=&B_1\Big(\frac{d}{ds_1}\frac{d}{ds_2}|_{s_1=s_2=0}\Phi(\exp_G(s_1x)\cdot_G\huaB_0(\exp_G(s_1x)))(\exp_H(s_2u)\cdot_H\huaB_1(\exp_H(s_2u)))\\
&&+\frac{d}{ds_1}\frac{d}{ds_2}|_{s_1=s_2=0}\Phi(\huaB_0(\exp_G(s_1x)))\huaB_1(\exp_H(s_2u))^{-1}\Big)\\
&=&B_1(\phi(x)B_1(u)+\phi(x)u+\phi(B_0(x))u).
\end{eqnarray*}
Thus $(B_1, B_0)$ is a Rota-Baxter operator on $(\h, \g, \dM t, \phi)$.
\end{proof}

Let $(\huaB_1, \huaB_0)$ be a Rota-Baxter operator on $(H, G, t, \Phi)$. By Proposition \ref{dg}, we obtain another crossed module of Lie groups $\Big((H, \cdot_{\huaB_1}), (G, \cdot_{\huaB_0}), t, \hat{\Phi}\Big)$, where
\begin{eqnarray*}
~p\cdot_{\huaB_1}q&=&p\cdot_H\huaB_1(p)\cdot_H q\cdot_H\huaB_1(p)^{-1}, \quad \forall p, q\in H,\\
~a\cdot_{\huaB_0}b&=&a\cdot_G\huaB_0(a)\cdot_G b\cdot_G\huaB_0(a)^{-1}, \quad \forall a, b\in G,\\
~\hat{\Phi}(a)p&=&\Phi(a\cdot_G\huaB_0(a))(p\cdot_H\huaB_1(p))\cdot_H\Phi(\huaB_0(a))\huaB_1(p)^{-1}, \quad \forall a\in G, p\in H.
\end{eqnarray*}
Denote by $(\h, \g, \dM t, \phi)$ the crossed module of Lie algebras of $(H, G, t, \Phi)$. Let $(\huaB_1)_{*e_H}=B_1$ and $(\huaB_0)_{*e_G}=B_0$. Then by Theorem \ref{d1} and Proposition \ref{promath}, $\Big((\h, [\cdot, \cdot]_{B_1}), (\g, [\cdot, \cdot]_{B_0}), \dM t, \hat{\phi}\Big)$ is a crossed module of Lie algebras, where
\begin{eqnarray*}
~~[u, v]_{B_1}&=&[B_1(u), v]_\h+[u, B_1(v)]_\h+[u, v]_\h, \quad \forall u, v\in\h,\\
~~[x, y]_{B_0}&=&[B_0(x), y]_\g+[x, B_0(y)]_\g+[x, y]_\g, \quad \forall x, y\in\g,\\
~~\hat{\phi}(x)u&=&\phi(x)B_1(u)+\phi(B_0(x))u+\phi(x)u, \quad \forall x\in\g, ~~u\in\h.
\end{eqnarray*}

\begin{pro}\label{d2}
With the above notations, $\Big((\h, [\cdot, \cdot]_{B_1}), (\g, [\cdot, \cdot]_{B_0}), \dM t, \hat{\phi}\Big)$ is the crossed module of Lie algebras of $\Big((H, \cdot_{\huaB_1}), (G, \cdot_{\huaB_0}), t, \hat{\Phi}\Big)$.
\end{pro}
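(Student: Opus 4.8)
The plan is to deduce this from Proposition~\ref{thmcrlg} by means of the graph characterization of Theorem~\ref{graphB}, exploiting that the descended crossed module is isomorphic---already at the group level---to the graph crossed module. First I would dispatch the easy parts. By \cite{GLS} (tangent maps of Rota-Baxter operators on Lie groups are Rota-Baxter operators on the Lie algebras, and the descended group structure has the descended Lie algebra structure as its Lie algebra) applied to $\huaB_0$ and $\huaB_1$, together with Theorem~\ref{d1}, the Lie algebras of $(G,\cdot_{\huaB_0})$ and $(H,\cdot_{\huaB_1})$ are $(\g,[\cdot,\cdot]_{B_0})$ and $(\h,[\cdot,\cdot]_{B_1})$ respectively. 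Since the homomorphism occurring in the descended crossed module of Lie groups is the unchanged map $t$ and $t_{*e_H}=\dM t$, the homomorphism part of the claim is immediate. So everything reduces to showing that the action $\hat{\Phi}$ of $(G,\cdot_{\huaB_0})$ on $(H,\cdot_{\huaB_1})$ differentiates to $\hat{\phi}$.

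For this I would pass through the graph crossed module. By Theorem~\ref{graphB}, $(\mathrm{Gr}(\huaB_1),\mathrm{Gr}(\huaB_0),(t,t),\tilde{\Phi})$ is a crossed module of Lie groups inside the crossed module $(H\ltimes H,G\ltimes G,(t,t),\tilde{\Phi})$ of Proposition~\ref{croslie}, whose crossed module of Lie algebras is $(\h\ltimes\h,\g\ltimes\g,(\dM t,\dM t),\tilde{\phi})$ by Proposition~\ref{thmcrlg}. Since $B_i=(\huaB_i)_{*e}$, the Lie subgroup $\mathrm{Gr}(\huaB_i)$ has Lie algebra $\mathrm{Gr}(B_i)$, and the structure data $(\dM t,\dM t)$ and $\tilde{\phi}$ restrict to these graphs (for $\tilde{\phi}$ this is exactly the inclusion $\tilde{\phi}(\mathrm{Gr}(B_0))\mathrm{Gr}(B_1)\subseteq\mathrm{Gr}(B_1)$ established in the proof characterizing Rota-Baxter operators on crossed modules of Lie algebras as graphs). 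Hence the crossed module of Lie algebras of $(\mathrm{Gr}(\huaB_1),\mathrm{Gr}(\huaB_0),(t,t),\tilde{\Phi})$ is $(\mathrm{Gr}(B_1),\mathrm{Gr}(B_0),(\dM t,\dM t),\tilde{\phi})$.

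Finally I would observe that the projections onto the second factor, $\pi_1\colon\mathrm{Gr}(\huaB_1)\lon(H,\cdot_{\huaB_1})$, $(\huaB_1(p),p)\mapsto p$, and $\pi_0\colon\mathrm{Gr}(\huaB_0)\lon(G,\cdot_{\huaB_0})$, $(\huaB_0(a),a)\mapsto a$, are Lie group isomorphisms (a direct consequence of the Rota-Baxter identities $\huaB_0(a)\cdot_G\huaB_0(b)=\huaB_0(a\cdot_{\huaB_0}b)$, $\huaB_1(p)\cdot_H\huaB_1(q)=\huaB_1(p\cdot_{\huaB_1}q)$ together with \eqref{deslie1} and \eqref{deslie2}; this is implicit in Proposition~\ref{dg}), and that $(\pi_1,\pi_0)$ intertwines $(t,t)$ with $t$ and $\tilde{\Phi}$ with $\hat{\Phi}$---for the action because the second component of $\tilde{\Phi}\big((\huaB_0(a),a)\big)(\huaB_1(p),p)$ equals $\hat{\Phi}(a)p$ by \eqref{deslie3}. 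Differentiating this isomorphism of crossed modules of Lie groups, the pair $(\pi_1)_{*e}\colon(B_1(u),u)\mapsto u$, $(\pi_0)_{*e}\colon(B_0(x),x)\mapsto x$ is an isomorphism of crossed modules of Lie algebras from $(\mathrm{Gr}(B_1),\mathrm{Gr}(B_0),(\dM t,\dM t),\tilde{\phi})$ onto $\big((\h,[\cdot,\cdot]_{B_1}),(\g,[\cdot,\cdot]_{B_0}),\dM t,\hat{\phi}\big)$; in particular it carries the restricted $\tilde{\phi}$ to $\hat{\phi}$. Composing the two identifications yields the statement.

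The step I expect to be the main obstacle is the functoriality bookkeeping: verifying that passing to Lie algebras is compatible with both the restriction to the Lie subgroups $\mathrm{Gr}(\huaB_i)$ and the transport along $\pi_i$ at the level of the entire crossed module structure (not merely the underlying groups)---in particular that $(\pi_i)_{*e}$ sends the restricted $\tilde{\phi}$ to $\hat{\phi}$. An alternative, purely computational route bypasses the graph altogether: compute $\tfrac{d}{ds_1}\tfrac{d}{ds_2}\big|_{s_1=s_2=0}\hat{\Phi}\big(\exp_{(G,\cdot_{\huaB_0})}(s_1x)\big)\exp_{(H,\cdot_{\huaB_1})}(s_2u)$ directly from \eqref{deslie3}, which mirrors the double-derivative calculations in the proofs of Proposition~\ref{thmcrlg} and Theorem~\ref{d1} and is elementary but longer.
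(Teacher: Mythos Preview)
Your proposal is correct and takes a genuinely different route from the paper. The paper's proof is exactly the ``alternative, purely computational route'' you mention at the end: after invoking \cite[Proposition~4.5]{JSZ} for the Lie algebra identifications of $(H,\cdot_{\huaB_1})$ and $(G,\cdot_{\huaB_0})$, it simply computes
\[
\frac{d}{dt}\frac{d}{ds}\Big|_{t=0,s=0}\hat{\Phi}(\exp(tx))\exp(su)
=\phi(x)u+\phi(B_0(x))u+\phi(x)B_1(u)=\hat{\phi}(x)u
\]
directly from \eqref{deslie3} in three lines. Your argument instead leverages the structural picture built up earlier in the paper: Theorem~\ref{graphB}, Proposition~\ref{thmcrlg}, and the second-projection isomorphism from the graph to the descended structure. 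What your approach buys is conceptual clarity---it explains \emph{why} the result holds, namely that the descended crossed module is isomorphic to the graph sub-crossed-module, and differentiation commutes with restriction to subgroups and transport along isomorphisms---so no new computation is needed beyond what was already done for Proposition~\ref{thmcrlg}. What the paper's approach buys is brevity and self-containment: the double-derivative calculation is short enough that invoking the graph machinery would be overkill. Your anticipated obstacle (the functoriality bookkeeping) is genuine but routine: the key point, that the second component of $\tilde{\Phi}\big((\huaB_0(a),a)\big)(\huaB_1(p),p)$ equals $\hat{\Phi}(a)p$, you have already identified correctly.
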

\begin{proof}
By \cite[Proposition 4.5]{JSZ}, it follows that $(\h, [\cdot, \cdot]_{B_1})$ and $(\g, [\cdot, \cdot]_{B_0})$ are Lie algebras of $(H, \cdot_{\huaB_1}), (G, \cdot_{\huaB_0})$ respectively. Since
\begin{eqnarray*}
&&\frac{d}{dt}\frac{d}{ds}|_{t=0, s=0}\hat{\Phi}(\exp(tx))\exp(su)\\
&=&\frac{d}{dt}\frac{d}{ds}|_{t=0, s=0}\Phi(\exp(tx)\cdot_{\huaB_0}\huaB_0(\exp(tx)))(\exp(su)\cdot_{\huaB_1}\huaB_1(\exp(su)))\cdot_{\huaB_1}\Phi(\huaB_0(\exp(tx)))\huaB_1(\exp(su))^{-1}\\
&=&\phi(x)u+\phi(B_0(x))u+\phi(x)B_1(u)\\
&=&\hat{\phi}(x)u,
\end{eqnarray*}
thus $\Big((\h, [\cdot, \cdot]_{B_1}), (\g, [\cdot, \cdot]_{B_0}), \dM t, \hat{\phi}\Big)$ is a crossed module of Lie algebras corresponding with a crossed module of Lie groups $\Big((H, \cdot_{\huaB_1}), (G, \cdot_{\huaB_0}), t, \hat{\Phi}\Big)$.
\end{proof}

Finally, we study the integration of Rota-Baxter operators on crossed modules of Lie algebras.
\begin{thm}\rm(Integration)\label{i1}
Let $(\h, \g, \dM t, \phi)$ be a crossed module of Lie algebras. Denote by $(H, G, t, \Phi)$ the crossed module of Lie groups corresponding with $(\h, \g, \dM t, \phi)$, where $G$ and $H$ are connected, simply connected Lie groups. If $(B_1, B_0)$ is a Rota-Baxter operator on $(\h, \g, \dM t, \phi)$, and $\huaB_1, \huaB_0$ are Rota-Baxter operators on $H, G$ such that $(\huaB_1)_{*e_H}=B_1, (\huaB_0)_{*e_G}=B_0$, then $(\huaB_1, \huaB_0)$ is a Rota-Baxter operator on $(H, G, t, \Phi)$.
\end{thm}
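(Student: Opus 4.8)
The plan is to deduce the statement from the graph characterization of Theorem \ref{graphB}, transporting the crossed-module structure from the Lie-algebra level to the Lie-group level along the exponential maps; the connectedness (indeed simple connectedness) of $G$ and $H$, hence of $G\ltimes G$ and $H\ltimes H$, is what makes this possible.

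First I would record two preliminary facts. Since $\huaB_1$ and $\huaB_0$ are Rota-Baxter operators on the connected Lie groups $H$ and $G$, \cite[Proposition 3.4]{JSZ} gives that $\mathrm{Gr}(\huaB_1)=\{(\huaB_1(p),p)|p\in H\}$ and $\mathrm{Gr}(\huaB_0)=\{(\huaB_0(a),a)|a\in G\}$ are connected Lie subgroups of $H\ltimes H$ and $G\ltimes G$ (Proposition \ref{croslie}); differentiating the graph maps at the units and using $(\huaB_1)_{*e_H}=B_1$, $(\huaB_0)_{*e_G}=B_0$, their Lie algebras are exactly $\mathrm{Gr}(B_1)=\{(B_1(u),u)|u\in\h\}\subset\h\ltimes\h$ and $\mathrm{Gr}(B_0)=\{(B_0(x),x)|x\in\g\}\subset\g\ltimes\g$. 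On the other hand, since $(B_1,B_0)$ is a Rota-Baxter operator on $(\h,\g,\dM t,\phi)$, the theorem characterizing such operators via graphs (the Lie-algebra analogue of Theorem \ref{graphB}, proved earlier in this section) shows that $(\mathrm{Gr}(B_1),\mathrm{Gr}(B_0),(\dM t,\dM t),\tilde\phi)$ is a crossed module of Lie algebras inside $(\h\ltimes\h,\g\ltimes\g,(\dM t,\dM t),\tilde\phi)$ (Proposition \ref{semicross}); in particular $(\dM t,\dM t)\big(\mathrm{Gr}(B_1)\big)\subseteq\mathrm{Gr}(B_0)$ and $\tilde\phi\big(\mathrm{Gr}(B_0)\big)\mathrm{Gr}(B_1)\subseteq\mathrm{Gr}(B_1)$. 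Moreover, by Proposition \ref{thmcrlg} and its proof, $(\h\ltimes\h,\g\ltimes\g,(\dM t,\dM t),\tilde\phi)$ is the crossed module of Lie algebras of $(H\ltimes H,G\ltimes G,(t,t),\tilde\Phi)$, so $(t,t)_{*}=(\dM t,\dM t)$ and the homomorphism $\hat\Phi\colon G\ltimes G\to\Aut(\h\ltimes\h)\subseteq\GL(\h\ltimes\h)$, $g\mapsto d(\tilde\Phi(g))_{e}$, has differential $\tilde\phi$.

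Next I would transport the two inclusions above to the group level. As $\mathrm{Gr}(\huaB_1)$ is connected it is generated by the image of $\mathrm{Gr}(B_1)$ under the exponential map of $H\ltimes H$; applying $(t,t)$ and using $(t,t)_{*}=(\dM t,\dM t)$ together with $(\dM t,\dM t)(\mathrm{Gr}(B_1))\subseteq\mathrm{Gr}(B_0)$ and the fact that $\mathrm{Gr}(\huaB_0)$ is the connected Lie subgroup with Lie algebra $\mathrm{Gr}(B_0)$, we obtain $(t,t)\big(\mathrm{Gr}(\huaB_1)\big)\subseteq\mathrm{Gr}(\huaB_0)$, which unwinds to $t\circ\huaB_1=\huaB_0\circ t$, i.e. condition (ii). Likewise $\tilde\phi(\mathrm{Gr}(B_0))$ is a subalgebra of $\gl(\h\ltimes\h)$ mapping the subspace $\mathrm{Gr}(B_1)$ into itself, so the image of $\tilde\phi(\mathrm{Gr}(B_0))$ under the exponential map — and hence the connected subgroup $\hat\Phi(\mathrm{Gr}(\huaB_0))$ of $\GL(\h\ltimes\h)$ it generates — preserves $\mathrm{Gr}(B_1)$; thus for every $g\in\mathrm{Gr}(\huaB_0)$ the automorphism $\tilde\Phi(g)$ of $H\ltimes H$ satisfies $d(\tilde\Phi(g))_{e}\big(\mathrm{Gr}(B_1)\big)=\mathrm{Gr}(B_1)$, whence $\tilde\Phi(g)\big(\mathrm{Gr}(\huaB_1)\big)$ is a connected Lie subgroup of $H\ltimes H$ with the same Lie algebra as $\mathrm{Gr}(\huaB_1)$, and therefore equals it. Specializing $g=(\huaB_0(a),a)$ and evaluating at $(\huaB_1(p),p)$ yields condition (iii). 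Since condition (i) holds by hypothesis, $(\mathrm{Gr}(\huaB_1),\mathrm{Gr}(\huaB_0),(t,t),\tilde\Phi)$ is a crossed module of Lie groups, and Theorem \ref{graphB} then shows $(\huaB_1,\huaB_0)$ is a Rota-Baxter operator on $(H,G,t,\Phi)$.

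The main obstacle is exactly this transport step. Because $\huaB_1$ and $\huaB_0$ are merely smooth maps and not homomorphisms, two such maps that agree to first order at the identity need not coincide, so conditions (ii) and (iii) cannot simply be integrated as equalities of maps; the role of routing through the graphs $\mathrm{Gr}(\huaB_i)$ is that these are \emph{connected} Lie subgroups, for which invariance of a subspace at the Lie-algebra level does propagate to the group level through $\exp$, and a connected Lie subgroup is uniquely determined by its Lie subalgebra. This is also why the hypothesis that $\huaB_1,\huaB_0$ already \emph{are} Rota-Baxter operators cannot be dropped: without it the graphs are not subgroups at all, in accordance with the known obstructions to global integrability of Rota-Baxter operators.
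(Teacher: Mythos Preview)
Your argument is correct and follows essentially the same route as the paper's: both reduce to the graph characterization (Theorem \ref{graphB} and its Lie-algebra analogue), identify $\mathrm{Gr}(B_i)$ as the Lie algebra of the connected subgroup $\mathrm{Gr}(\huaB_i)$, and then transport the inclusions $(\dM t,\dM t)(\mathrm{Gr}(B_1))\subseteq\mathrm{Gr}(B_0)$ and $\tilde\phi(\mathrm{Gr}(B_0))\mathrm{Gr}(B_1)\subseteq\mathrm{Gr}(B_1)$ to the group level via connectedness. Where the paper routes the second transport through the isomorphism $\iota\colon\Aut(\h\ltimes\h)\to\Aut(H\ltimes H)$ afforded by simple connectedness of $H\ltimes H$, you argue directly from uniqueness of connected Lie subgroups with a prescribed Lie algebra, and you also make condition (ii) explicit, which the paper's write-up leaves unstated.
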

\begin{proof}
Since $(\h, \g, \dM t, \phi)$ is a crossed module of Lie algebras and corresponding crossed module of Lie groups is$(H, G, t, \Phi)$, by Proposition \ref{semicross}, Proposition \ref{croslie} and Proposition \ref{thmcrlg}, we have that $(H\ltimes H, G\ltimes G, (t, t), \tilde{\Phi})$ is a crossed module of Lie groups and its corresponding crossed module of Lie algebras is $(\h\ltimes\h, \g\ltimes\g, (\dM t, \dM t), \tilde{\phi})$. Since $(\huaB_1)_{*e_H}=B_1, (\huaB_0)_{*e_G}=B_0$, by \cite[Proposition 4.5]{JSZ}, we have that $\mathrm{Gr}(B_1)$ and $\mathrm{Gr}(B_0)$ are Lie algebras of $\mathrm{Gr}(\huaB_1)$ and $\mathrm{Gr}(\huaB_0)$.

Moreover, the map $\tilde{\phi}:\g\ltimes\g\lon\Der(\h\ltimes\h)$ is a Lie algebra homomorphism and satisfies $$\tilde{\phi}(\mathrm{Gr}(B_0))\subseteq \Der_{\mathrm{Gr}(B_1)}(\h\ltimes\h),$$
where the set $\Der_{\mathrm{Gr}(B_1)}(\h\ltimes\h)=\{\delta\in\Der(\h\ltimes\h)|\delta(\mathrm{Gr}(B_1))\subseteq\mathrm{Gr}(B_1)\}$.
Suppose that a map $\bar{\phi}:G\ltimes G\lon\Aut(\h\ltimes\h)$ is an integration of $\tilde{\phi}$, we have
\begin{equation}\label{eq1}
\bar{\phi}(\exp_{G\ltimes G}(x, y))=e^{\tilde{\phi}(x, y)}, \quad \forall (x, y)\in\mathrm{Gr}(B_0).
\end{equation}
Since $\mathrm{Gr}(B_0)$ is a Lie algebra of $\mathrm{Gr}(\huaB_0)$ and $\mathrm{Gr}(\huaB_0)$ is connected, by \eqref{eq1}, it follows that $$\bar{\phi}(\mathrm{Gr}(\huaB_0))\subseteq \Aut_{\mathrm{Gr}(B_1)}(\h\ltimes\h),$$
where the set $\Aut_{\mathrm{Gr}(B_1)}(\h\ltimes\h)=\{\Delta\in\Aut(\h\ltimes\h)|\Delta(\mathrm{Gr}(B_1))\subseteq\mathrm{Gr}(B_1)\}$.

Assume that $\iota:\Aut(\h\ltimes \h)\lon\Aut(H\ltimes H)$ is a Lie group isomorphism, then $\tilde{\Phi}=\iota\circ\bar{\phi}$. For $f\in\Aut_{\mathrm{Gr}(B_1)}(\h\ltimes\h)$, since $\mathrm{Gr}(\huaB_1)$ is connected and
$$
\iota(f)(\exp_{\h\ltimes\h}(u, v))=\exp_{\h\ltimes\h}(f(u, v))\in\mathrm{Gr}(\huaB_1),\quad \forall (u, v)\in\mathrm{Gr}(B_1),
$$
we have $\iota(f)\in\Aut_{\mathrm{Gr}(\huaB_1)}(H\ltimes H)$, where $$\Aut_{\mathrm{Gr}(\huaB_1)}(H\ltimes H)=\{\chi\in\Aut(H\ltimes H)|\chi(\mathrm{Gr}(\huaB_1))\subseteq\mathrm{Gr}(\huaB_1)\}.$$
Then it follows that $\tilde{\Phi}(\mathrm{Gr}(\huaB_0))=\iota(\hat{\phi}(\mathrm{Gr}(\huaB_0)))\subset\Aut_{\mathrm{Gr}(\huaB_1)}(H\ltimes H).$ That is, for any $a\in G, p\in H$,
$$
\tilde{\Phi}(\huaB_0(a), a)(\huaB_1(p), p)=\Big(\Phi(\huaB_0(a))\huaB_1(p), \Phi(a\cdot_G \huaB_0(a))(p\cdot_H \huaB_1(p))\cdot_H\Phi(\huaB_0(a))\huaB_1(p)^{-1}\Big)\in\mathrm{Gr}(\huaB_1).
$$
Thus we have
$$
\Phi(\huaB_0(a))\huaB_1(p)=\huaB_1\Big(\Phi(a\cdot_G\huaB_0(a))(p\cdot_H\huaB_1(p))\cdot_H\Phi(\huaB_0(a))\huaB_1(p)^{-1}\Big), \quad a\in G, p\in H,
$$
which implies that $(\huaB_1, \huaB_0)$ is a Rota-Baxter operator on $(H, G, t, \Phi)$.
\end{proof}
\begin{rmk}
Theorem \ref{i1} means that the integrability of Rota-Baxter operators $(B_1, B_0)$ on $(\h, \g, \dM t, \phi)$ are same as the integrability of Rota-Baxter operators $B_1$ and $B_0$ on $\h$ and $\g$.
\end{rmk}

\vspace{2mm}
\noindent
{\bf Acknowledgements}  This research is supported by NSFC(12401076), China postdoctoral Science Foundation (2023M741349).

 \end{document}